\newtheorem{theorem}{Theorem}[section]
\newtheorem{Lemma}[theorem]{Lemma}
\newtheorem{corollary}[theorem]{Corollary}
\theoremstyle{definition}
\newtheorem{definition}[theorem]{Definition}
\newtheorem*{remark*}{Remark}
\newcommand{\R}{\mathbb R}
\newcommand{\N}{\mathbb N}
\newcommand{\norm}[1]{\lVert#1\rVert}
\numberwithin{equation}{section}
\title{On the mean-field limit for the Vlasov-Poisson system }
\author{Manuela Feistl-Held\thanks{Fachbereich ANG, Technische Hochschule Rosenheim, Hochschulstraße 1, 83024 Rosenheim, Germany. Email: Manuela.Feistl-Held@th-rosenheim.de},  
	Peter Pickl\thanks{Fachbereich Mathematik, Eberhard Karls Universität Tübingen, Auf der Morgenstelle 10, 72076 Tübingen, Germany. Email: p.pickl@uni-tuebingen.de}}
\begin{document}
\maketitle

\begin{abstract}
We present a probabilistic proof of the mean-field limit and propagation of chaos of a classical N-particle system in three dimensions with Coulomb interaction force of the form $f^N(q)=\pm\frac{q}{|q|^3}$ and $N$-dependent cut-off at $|q|>N^{-\frac{5}{12}+\sigma}$ where $\sigma>0$ can be chosen arbitrarily small. This cut-off size is much smaller than the  typical distance to the nearest neighbour. 
In particular, for typical initial data, we show convergence of the Newtonian trajectories to the characteristics of the Vlasov-Poisson system. The proof is based on a Gronwall estimate for the maximal distance between the exact microscopic dynamics and the approximate mean-field dynamics. Thus our result leads to a derivation of the Vlasov-Poisson equation from the microscopic $N$-particle dynamics with force term arbitrary close to the physically relevant Coulomb force.
\end{abstract}

\section{Introduction} \label{sec:intro}

We are interested in a microscopic derivation of the 
Vlasov-Poisson system, which describes a plasma of identically
charged particles with electrostatic or gravitational interactions.
Therefore we consider a system consisting of $N$ interacting particles subject to Newtonian time evolution. Our system is distributed by a trajectory in phase space $\R^{6N}$ with $X=(Q,P)=(q_1,\hdots,q_N,p_1,\hdots,p_N) \in \R^{6N}$, where $(Q)_j= q_j\in \R^{3}$ denotes the one-particle position and and $(P)_j=p_j$ stands for its momentum.
The evolution of the system is given by the coupled differential equations
\begin{align} 
i\in \{1,...,N\},\ \begin{cases}
& {\dot{q}}_i=\frac{p_i}{m}\\
& \dot{p}_i=\frac{1}{N}\sum_{j\neq i}f^N(q_i-q_j) 
\end{cases} \label{Newtoneq}
\end{align}
with particle mass $m>0$, which will always be set equal to $1$ in our considerations.
We consider a Coulomb force with a cut-off at $N^{-\beta}$ for $\beta\leq \frac{5}{12}-\sigma$ and arbitrary $\sigma>0$. Remarkably this cut-off can be chosen distinctly smaller than the  typical distance to the nearest neighbour  which is given by $N^{-\frac{1}{3}}$.

\begin{definition}\label{force f}
For $N\in \N\cup\lbrace \infty \rbrace$ the interaction force is given by 
\begin{align*}
f^{N}:\R^{3}\rightarrow\R^{3}, q\mapsto\begin{cases}
aN^{3\beta}q& \text{if}\  |q|\leq N^{-\beta} \\
a \frac{q}{|q|^{3}} & \text{if}\  |q|> N^{-\beta} 
\end{cases}
\end{align*}
for $0<\beta\leq\frac{5}{12}-\sigma,$ some positive $\sigma$ and  $a\in\{\pm 1\}$, distinguishing between attractive and repulsive interactions.
\end{definition}

\begin{remark*}
 Our results can easily be generalized to any $a\in\mathbb{R}$ by simply rescaling the system accordingly.
\end{remark*}	

 We  will use the notation $F^N:\R^{6N}\rightarrow\R^{3N}$ for the total force of the system. Thus the i'th compontent of $F^N$ gives the force exhibited on a single coordinate $j$: $$(F^N(X))_j\coloneqq  \sum_{i\neq j}\frac{1}{N}f^N(q_i-q_j).$$

We consider the system in the mean-field scaling, so that the total mass of the system remains of order $1$. The prefactor $\frac{1}{N}$ constitutes such a scaling factor and seems to be the most common choice in this setting \cite{Jabin lecture notes}. 
Accompanying this, we rescale time, position, and momentum. To preserve the initial data $ X = (q_i(0), p_i(0))$ for $1 \leq i \leq N$ at an order of $1$, we define $p_i = N^{1/2} \cdot \bar{p}_i$ and $t_i = N^{-1/2} \cdot \bar{t}_i$.
The Virial theorem  further justifies this particular scaling. For homogeneous $f^N$, the long-time averages of the total kinetic energy and the potential energy are of the same order.

Our goal is to derive the Vlasov-Poisson equation, which describes the time evolution of a plasma consisting of charged particles with gravitational or electrostatic force, from the microscopic Newtonian $N$-particle dynamics. 
It reads as follows
\begin{align}
    \begin{cases}
        \partial_t k + p \cdot \nabla_q k +  \nabla_p k\cdot f\ast\tilde{k}_t = 0,\\
        \tilde{k}_t(q) = \int_{\mathbb{R}^3} k_t( q, p) \, dp,
    \end{cases}
\end{align}
with interaction force $f:\R^3\rightarrow\R^3$ and initial density $k_0(q, p)$.
The term  $p \cdot \nabla_q k$ denotes the free transport term and $\nabla_p k\cdot f\ast\tilde{k}_t $ the acceleration term with the mean-force $f\ast\tilde{k}_t$.

\subsection{Previous results}
While the existence theory of the Vlasov-Poisson dynamics is well understood its microscopic derivation from systems without cut-off is still an open problem. To our knowledge, the first paper
to discuss a mathematically rigorous derivation of Vlasov equations is Neunzert and Wick in 1974 \cite{NeunzertWick}.
Classical results of this kind are valid for Lipschitz-bounded forces \cite{BraunHepp,Dobrushin}. 
One major difference to our work is that the results rely on deterministic initial conditions even if some of them are formulated probabilistically.
Handling clustering of particles for singular interactions (see \cite{SpohnBook}) like Coulomb or Newtons gravitational force brings further difficulties.
Hauray and Jabin could include singular interaction forces scaling like $1/\vert q \vert^{\lambda}$ in three dimensions with ${\lambda} < 1$ \cite{Hauray2007} and later as well the physically more interesting case with ${\lambda}$ smaller but close to $2$, and a lower bound on the cut-off at $q = N^{-1/6}$ \cite{Hauray2013}. They had to choose quite specific initial conditions, according to the respective $N$-particle law.
The last deterministic result we like to mention here is\cite{Kiessling}, which is valid for repulsive pair-interactions and assumes no cut-off but instead a bound on the maximal forces of the microscopic system.
Assuming monokineticity, Serfaty and Duerinckx proved the validity of the mean-field description - in that case the pressureless Euler-Poisson system - for Coulomb-interactions without cutoff \cite{Serfaty}.

In contrast to the previous approaches Boers and Pickl \cite{BoersPickl} derive the Vlasov equations for stochastic initial conditions with interaction forces scaling like $|x|^{-3\lambda+1}$ with $(5/6<\lambda<1)$. 
They obtained a cut-off as small as the typical inter particle distance at $ N^{-\frac{1}{3}}$.
By exploiting the second order nature of the dynamics and introducing anisotropic scaling of the relevant metric to include the Coulomb singularity Lazarovici and Pickl \cite{Dustin} extended the method in \cite{BoersPickl} and obtained a microscopic derivation of the Vlasov-Poisson equation with a cut-off of $N^{-\delta}$ with $0<\delta<\frac{1}{3}$.
More recently, by examining the collisions which could occur and using the second order nature of the dynamics, the cut-off parameter was reduced to as small as $N^{-\frac{7}{18}+\sigma}$, with $\sigma > 0$ in \cite{grass}.

In this paper we provide  a derivation of Vlasov-Poisson equation with the so far weakest condition on the cutoff. This equation is a classical example of an effective equation approximating the  time evolution of a $N$-particle system with Coulomb or Newtonian pair interaction in the large $N$ limit.
Specifically, this interaction is given by 
$
f^N(q)=\pm\frac{q}{|q|^3} \ \text{for} \ |q|>N^{-\frac{5}{12}+\sigma}
$
with cut-off at $|q|= N^{-\frac{5}{12}+\sigma}$ for arbitrarily small $\sigma>0$. The cut-off diameter is of smaller order than the average distance of a particle to its nearest neighbour and has been significantly improved compared to the results of Grass and Pickl \cite{grass}.
The underlying poof technique hints that a further improvement is possible by utilizing a finer subdivision of particle subsets.

\subsection{Dynamics of the Newtonian and of the effective system}
 Having introduced the N-particle force in Definition \ref{force f} we wish to define the respective  Newtonian flow.
As the vector field is Lipschitz for fixed $N$ we have global existence and uniqueness of solutions for \eqref{Newton}, therefore a well defined flow. 
\begin{definition}\label{Def:Newtonflow}
The Newtonian flow $\Psi_{t,s}^{N}(X)=(\Psi_{t,s}^{1,N}(X)),\Psi_{t,s}^{2,N}(X))$ on $\R^{6N}$ is defined by the solution of
\begin{align}
\frac{d}{dt} \Psi_{t,s}^{N}(X)=(\Psi_{t,s}^{2,N}(X),F(\Psi_{t,s}^{1,N}(X))) \in \R^{3N}\times\mathbb{R}^{3N}
\end{align}
with $\Psi_{s,s}^{N}(X)=X$. \end{definition}

The first $3N$ components of $\Psi_{t,s}^{N}(X)$  describe the positions of the particles at time $t$, given the configuration of all particles at time $s$ was $X$. The other $3N$ components decribe the velocities of the particles respectively.

Due to the symmetry of the respective force   $\Psi^N_{t,s\in\R}:\R^{6N}\to \R^{6N}$,    is     symmetric under permutation of coordinates. 
Looking for a macroscopic law of motion for the particle density leads us to a continuity equation of Vlasov type.
 For $N \in \N\cup\lbrace \infty \rbrace$, and $k:\R^6\rightarrow\R_0^{+}$ we consider the corresponding mean-field equation, namely the Vlasov-Poisson equation
\begin{equation}\label{Vlasov}
	 \begin{cases}
        \partial_t k + p \cdot \nabla_q k +  \nabla_p k\cdot f\ast\tilde{k}_t = 0,\\
        \tilde{k}_t(q) = \int_{\mathbb{R}^3} k_t( q, p) \, dp,
    \end{cases}
\end{equation}
This equation describes a plasma of identically charged particles with electrostatic interactions or a gravitational system (galaxy).
For a fixed initial distribution $k_0 \in L^\infty(\R^3\times \R^3)$ with $k_0 \geq 0$ we denote by $k^N_t$ the unique solution of  \eqref{Vlasov} with initial datum $k^N_t(0, \cdot,\cdot) = k_0$.\\
The global existence and uniqueness of solutions of this equation for suitable initial conditions is well understood, even for singular interactions (see \cite{Pfaffelmoser}, \cite{schaeffer1991} \cite{Horst} and \cite{Lions}).
For our purposes, a result established by Horst \cite{Horst} is sufficient, as it provides global existence of classical solutions (uniquely) under conditions that closely align with the assumptions required for the proof of our Theorem \ref{maintheorem} in Section \ref{Chapter: VP}. 
 For repulsive Coulomb interactions he  specifically shows that there is a continuously differentiable function $k:~[0, T] \times \mathbb{R}^6 \to [0, \infty)$ for any $T > 0$ that satisfies the Vlasov-Poisson equation for any initial condition $k(0, \cdot) = k_0\in L^1(\mathbb{R}^6)$, which is non-negative, continuously differentiable, and satisfies the following conditions for a suitable constant $C > 0$, some $\delta > 0$, and all $(q,p) \in \mathbb{R}^6$:
\begin{align*}
(i) & \ \ k_0(q,p) \leq \frac{C}{(1+|p|)^{3+\delta}} \notag \\
(ii) & \ \ |\nabla k_0(q,p)| \leq \frac{C}{(1+|p|)^{3+\delta}} \notag \\
(iii) & \ \ \int_{\mathbb{R}^6}|p|^2 k_0(q,p) d^6(q,p) < \infty.
\end{align*}
 Under these conditions one gets global existence and uniqueness  of solutions of the Vlasov-Poisson equation with initial data $k_0$, such that
for each time interval $[0, T)$, there exists a constant $C> 0$, depending on $k_0$ and $T$ with
\begin{align*}
\sup_{0 \leq s < T} |\widetilde{k}_s|_{\infty} < C(T, k_0). 
\end{align*}

The characteristics of Vlasov-Poisson equation, similar to \eqref{Newtoneq}, are given by the following system of Newtonian differential equations
\begin{align}
 \begin{cases}
&\dot{\bar{q}}_i=\bar{p}_i \\
&\dot{\bar{p}}_i=f^N*\widetilde{k}_t(\bar{q}_i). \label{eff.flow}
\end{cases}
\end{align}
where $\widetilde{k}_t$ denotes the previously introduced `spatial density'.
Here the mean-field force $\bar{f}_{t}^{N}$ is defined by $\bar{f}_{t}^{N}=f^{N}*\tilde{k}_{t}^{N}$ and $\tilde{k}_t^{N}:=\R\times \R^{3}\rightarrow \R_{0}^{+}$ is given by 
\begin{align*}
\tilde{k}_{t}^{N}(q)\coloneqq \int k_t^{N}(p,q) d^{3}p.
\end{align*}
The system \eqref{eff.flow} is uniquely solvable on any interval  $[0,T]$. This provides a  flow $(\varphi^{\infty}_{s,t})_{s,t\in \R}$. $(\varphi_{s,t})_{s,t\in \R}=(^1\varphi_{\cdot,s}(x),{^2\varphi_{\cdot,s}(x)})$ solves the equations \eqref{eff.flow} where $\varphi_{s,s}(x)=x$ for any $x\in \R^{6}$ and $s\in \R$.
By construction we receive a trajectory which is influenced by the mean-field force and not by the pair interaction force like in the Newtonian system defined in Definition \ref{Def:Newtonflow}.
Later we will show that the two trajectories defined in Definition \ref{Def:Newtonflow} and Definition \ref{Def:Meanfieldflow} are close to each other.
To this end, we consider the lift of $\varphi^N_{t,s}(\cdot)$ to the $N$-particle phase-space, which we denote by $\Phi^N_{t,s}$. 
Denoting $\bar{F}:\R^{3N}\to\R^{3N}$ as the lift of the mean field force to the $N$-particle phase-space, i.e. $$(\overline{F}_t(X))_i \coloneqq  f^{N}*\tilde{k}_{t}^{N}\lbrack\tilde{k}_t\rbrack(x_i)$$ for $ X=(x_1,...,x_N)$ we finally define the mean-field flow analogously to Definition \ref{Def:Newtonflow}.
\begin{definition}\label{Def:Meanfieldflow}
 The  effective flow $\Phi_{t,s}^{N}=(\Phi_{t,s}^{1,N},\Phi_{t,s}^{2,N})=(\varphi_{t,s}^{N})^{\otimes N}$ is defined via  
\begin{align*}
\frac{d}{dt}\Phi_{t,s}^{N}(X)=(\Phi_{t,s}^{2,N}(X),\bar{F}(\Phi_{t,s}^{1,N}(X))
\end{align*}
with $\Phi_{s,s}^{N}(X)=X.$
\end{definition}
In contrast to the Newtonian Flow $\Psi_{t,s}^{N}$, the effective flow $\Phi_{t,s}^{N}$ conserves independence, which is crucial for the later proof.
For the purpose of justification of the common physical description, we compare the microscopic $N$-particle time evolution $\Psi^N_{t,s}$  with an effective one-particle description given by the Vlasov-Poisson flow $(\varphi^N_{t,s})_{t,s\in\R}:\R^{6}\to \R^{6}$ and prove convergence of $\Psi^N_{t,s}$ to the product of $\varphi^N_{t,s}$ in the limit $N\to\infty$ in a suitable sense. From this,  weak convergence of the $s$-particle marginals of the $N$-particle system to the corresponding $s$-fold products of  solutions of the Vlasov-Poisson equation follows.
It is usually referred to as propagation of molecular chaos. 
This is due to the fact that $\Phi^N$ consists of $N$ copies of $\varphi_t.$ Hence the particles are distributed i.i.d. with respect to the particle density $k^N$ defined in \eqref{Vlasov}.
The mean-field particles move independently, because we use the same force for every particle and thus we do not have pair interactions, which lead to correlations. 

In summary, for fixed $k_0$ and $N \in \N$, we consider for any initial configuration $X \in \R^{6N}$  two different time-evolutions: $\Psi^N_{t,0}(X)$, given by the microscopic equations and $\Phi^N_{t,0}(X)$, given by the time-dependent mean-field force generated by $f^N$. We are going to show that for typical $X$, the two time-evolutions are close in an appropriate sense. 
In other words, we have non-linear time-evolution in which $\varphi^N_{t,s}(\cdot\,; k_0)$ is the one-particle flow induced by the mean-field dynamics with initial distribution $k_0$, while, in turn, $k_0$ is transported with the flow $\varphi^N_{t,s}$. 
\section{On the mean-field limit for the Vlasov-Poisson system}\label{Chapter: VP}
In the following section we show that the $N$-particle trajectory $\Psi_t$ starting from $\Psi_0$ (i.i.d. with the common density $k_0$) remains close to the mean-field trajectory $\Phi_t$ with the same initial configuration $\Psi_0=\Phi_0$  on any finite time-interval $[0,T]$ and so the microscopic and the macroscopic descriptions are close.  Throughout this paper C denotes a positive finite constant which may vary from place to place but most importantly it will be independent of N.
\begin{theorem} \label{maintheorem}
Let $T>0$ and $k_0\in L^1(\R^{6})$ be a continuously differentiable probability density fulfilling
$
\sup_{N\in\N}\sup_{0\leq s\leq T}||\tilde{k}^N_s||_{\infty}\leq \infty.
$ Moreover, let $(\Phi^{\infty}_{t,s})_{t, s\in \R}$ be the related lifted effective flow defined in Definition \ref{Def:Meanfieldflow} as well as $({\Psi}^{N}_{t,s})_{t,s\in \R}$ the $N$-particle flow defined in Definition \ref{Def:Newtonflow}. 
 If $\sigma>0$ and $\beta=\frac{5}{12}-\sigma$, then for any $\gamma>0$ there exists a $C_\gamma>0$ such that for all $N\in \N$ it holds that
\begin{align}
\mathbb{P}\left(X\in \R^{6N}:\sup_{0\le s \le T}\left|\Psi_{s,0}^{N}(X)-{\Phi_{s,0}^{\infty}}(X)\right|_{\infty}>  N^{-\frac{1}{6}} \right)\le C_\gamma N^{-\gamma}.
\end{align}
\end{theorem}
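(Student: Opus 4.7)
The plan is to adapt the Gronwall-type strategy of Boers--Pickl/Lazarovici--Pickl/Grass--Pickl, pushing the finer-partition argument hinted at in the introduction to cover the improved exponent $\beta=\frac{5}{12}-\sigma$. Concretely, I would introduce an auxiliary stochastic process
\begin{equation*}
J_N(t,X) \;=\; \min\!\left\{1,\; N^{1/6}\max_{1\le i\le N}\bigl|\Psi_{t,0}^{N,i}(X)-\Phi_{t,0}^{\infty,i}(X)\bigr|\right\},
\end{equation*}
and try to prove the deterministic bound $\mathbb{E}[J_N(t)]\le C_\gamma N^{-\gamma}$ on $[0,T]$ via a differential inequality; Markov's inequality then yields the stated probability estimate. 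Under the initial condition $\Psi_0=\Phi_0$ we have $J_N(0)=0$, so a bound $\frac{d}{dt}\mathbb{E}[J_N(t)]\le C\,\mathbb{E}[J_N(t)]+ C_\gamma N^{-\gamma}$ closes everything via Gronwall.

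The derivative of $J_N$ along trajectories requires estimating, for each particle $i$, the force discrepancy
\begin{equation*}
(F^N(\Psi_t))_i-(\overline F_t(\Phi_t))_i \;=\; \underbrace{\tfrac{1}{N}\!\sum_{j\neq i}\!\bigl[f^N(q_i^\Psi{-}q_j^\Psi)-f^N(q_i^\Phi{-}q_j^\Phi)\bigr]}_{\text{displacement term}} \;+\; \underbrace{\bigl[\tfrac{1}{N}\!\sum_{j\neq i}\!f^N(q_i^\Phi{-}q_j^\Phi)-f^N\!*\!\tilde k_t(q_i^\Phi)\bigr]}_{\text{fluctuation term}}.
\end{equation*}
For the fluctuation term, I would exploit that the $\Phi$-trajectories are i.i.d.\ with density $k^N_t$ (since $\Phi^N$ preserves independence), so for each fixed $q_i^\Phi$ this is an empirical-average deviation. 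A Chernoff/Hoeffding estimate, using $\|f^N\|_\infty\lesssim N^{2\beta}$ and $\|f^N\|_{L^2(k_t^N)}=O(1)$ thanks to the bound on $\|\tilde k_t^N\|_\infty$, gives a subgaussian-type decay. One then takes a union bound over $i\in\{1,\dots,N\}$ and a fine grid in time, giving the desired $N^{-\gamma}$ tail with room to spare as long as $\beta<1/2$.

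The displacement term is the actual hard part, and this is where the finer subdivision of particle subsets comes in. Outside the cut-off, $|\nabla f^N(q)|\le C|q|^{-3}$, so
\begin{equation*}
\bigl|f^N(q_i^\Psi{-}q_j^\Psi)-f^N(q_i^\Phi{-}q_j^\Phi)\bigr| \;\le\; C\,|q_i^\Phi{-}q_j^\Phi|^{-3}\bigl(|q_i^\Psi{-}q_i^\Phi|+|q_j^\Psi{-}q_j^\Phi|\bigr),
\end{equation*}
and in the cut-off region $|f^N|\le C N^{2\beta}$. I would partition the indices $j\neq i$ dyadically according to $|q_i^\Phi-q_j^\Phi|\in[2^{-k},2^{-k+1})$ up to scale $N^{-\beta}$, and handle a separate regime for pairs inside the cut-off. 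For moderately separated pairs one bounds the expected number of $j$ in each shell by $N$ times the shell volume times $\|\tilde k_t^\infty\|_\infty$, extracting a uniform $\frac1N\sum_j |q_i^\Phi-q_j^\Phi|^{-3}\lesssim |\log N|$ on the typical event, which produces the $C\,\mathbb{E}[J_N(t)]$ contribution in the Gronwall inequality. The short-distance shells and the cut-off shell are the delicate ones: a pointwise count of close neighbours only gives estimates that force $\beta<1/3$. Here I would use the second-order nature of the dynamics, following and refining the orbit-based bookkeeping of \cite{Dustin,grass}: relative velocities of nearby pairs are typically of order one, so pairs exit any small shell in time $\lesssim$ (shell radius), allowing one to trade a pointwise $N^{3\beta}$ blow-up for a time-averaged bound. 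The finer subdivision (more dyadic scales combined with velocity-based classes of pairs, e.g.\ slow-relative-velocity pairs treated separately via a smallness estimate on the measure of such configurations) is what lets us push $\beta$ up to $\frac{5}{12}-\sigma$.

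The main obstacle is thus the simultaneous treatment of the shortest-scale pairs: concentration bounds on the \emph{number} of neighbours of $i$ at distance $\le r$ hold with probability at least $1-C_\gamma N^{-\gamma}$ only when $r\gg N^{-1/3}$, whereas the relevant scales here are up to $r=N^{-\beta}=N^{-5/12+\sigma}$. Overcoming this requires showing that although a given particle can occasionally be very close to another, it remains close only for a short time; the time-integrated contribution, after carefully partitioning the collision types and using Markov-type estimates on the $\Phi$-law, stays of order $N^{-\varepsilon}$ for some $\varepsilon(\sigma)>0$. Collecting the bad-event probabilities from the concentration steps, the typical-neighbour counts, and the fluctuation estimate via one union bound, and then applying Gronwall on $\mathbb{E}[J_N(t)]$ restricted to the good event, yields the claimed bound.
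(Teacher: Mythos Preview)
Your overall architecture---split the force error into a fluctuation term handled by concentration on the independent $\Phi$-trajectories and a displacement term handled by a Lipschitz/Gronwall argument, with short-range pairs controlled via time-integration and velocity classes---matches the paper's. But there is a concrete gap in the single-threshold scheme. The Lipschitz bound $|f^N(q+\delta)-f^N(q)|\le g^N(q)\,|\delta|$ that turns the displacement term into $C\ln N\cdot(\text{deviation})$ is only valid when $|\delta|\lesssim\max(N^{-\beta},|q|)$; in particular, for pairs near or inside the cut-off one needs $|\delta|\lesssim N^{-\beta}$. Your functional $J_N$ only enforces $|\delta|\le N^{-1/6}\gg N^{-\beta}=N^{-5/12+\sigma}$. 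So the moment any particle has drifted beyond $N^{-\beta}$---which \emph{will} happen for the particles undergoing hard collisions---the displacement estimate degrades to the trivial $2\|f^N\|_\infty=O(N^{2\beta})$ and the Gronwall loop cannot close. A single uniform threshold cannot be both tight enough for the Lipschitz bound and loose enough to accommodate collision-driven excursions.

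The paper supplies the two pieces your sketch is missing. First, it runs three nested stopping-time/Gronwall loops with particle-dependent thresholds ($N^{-\beta}$ for ``good'', $N^{-7/24-\sigma}$ for ``bad'', $N^{-1/6-\sigma}$ for ``superbad''), the classification being fixed \emph{a priori} from the independent mean-field trajectories; one then shows bad and superbad particles are rare enough that their back-reaction does not spoil the good-particle loop. Second---and this is the step you do not have---for the bad and superbad particles themselves the paper plants an auxiliary \emph{cloud} of mean-field particles around each one, so that on every short sub-interval one can switch to a cloud member that is within $N^{-\beta}$ of the real particle and re-apply the good-particle machinery relative to that auxiliary partner. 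Without this device you have no way to control a particle once it leaves the $N^{-\beta}$-tube. A secondary issue: running Gronwall on $\mathbb{E}[J_N]$ rather than pathwise is awkward, since under $\Psi$ the particles are correlated and the singular sum $\tfrac1N\sum_j|q_i-q_j|^{-3}$ does not factor in expectation; the paper instead works pathwise up to a stopping time $\tau^N(X)$ and bounds $\mathbb{P}(\tau^N<T)$ by explicit bad-configuration probabilities.
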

This Theorem implies Propagation of Chaos. 
The main difference to \cite{BoersPickl}and \cite{Dustin} is that in the current case we analyse the advantages of the second order nature of the equation to transfer more information from the mean-field system to the true particles as introduced in \cite{grass}. 
As long as the true and their related mean-field particles are close in phase space, the types of their collisions are expected to be similar. 
Therefore we will divide the particles into sets, a `good', a `bad' and a `superbad' set, depending on their mean-field particle partners.
If for certain particles, pair collision are expected according to their auxiliary trajectories, then depending on the distance and their relative velocity, they will be labelled `bad' or `superbad'.
As for such particles larger deviations are expected after the collisions, we will allow larger distances to their related mean-field particles.
 With increasing  distance to their related mean-field description, particles are called ``good'', ``bad'' or ``superbad''.
We will use  that the number of `bad' or `superbad' particles is typically much smaller than the total particle number $N$.
Additionally, by using the integral version of Gronwalls Lemma we will make full use of the second order nature of the dynamics.
If two particle come exceptionally close to each other, one can expect a correspondingly large deviation of the true and mean-field trajectory.
However, for the vast majority  of particles, these deviations are typically only of a very limited duration.
In order not to overestimate the deviations between them, it makes sense to compare the dynamics on longer time periods.
The idea of dividing the particles into sets and using the integral version of Gronwalls Lemma  was  previously implemented in \cite{grass} for two particle sets, a so called `good' and a `bad' one.

\subsection{Heuristics for the particle groups}
The technical implementation is based on the technique introduced  in \cite{BoersPickl}. A heuristic introduction of this technique can be found in \cite{FeistlPickl}. 

In this paper we extended the technique  of  dividing the particles into subsets. The closer particles get to each other and the lower their relative speed, the worse they are in the sense that their interactions lead to comparably large deviations from their mean-field evolution. However, it will be shown that the number of bad particles compared to $N$ is extremely small. The small number will be useful in the estimates. It helps to control the future effect of the other particles despite their comparably large deviation form the mean field particle.
In a first step we will classify the particles according to their distance from one another and their relative velocities. Roughly one should think of
\begin{align*}
M_0:&=\lbrace i\in \lbrace 1,\hdots, N\rbrace |\exists t\geq 0 :|\bar{q}_j-\bar{q}_k|\le N^{-r_0}  \ \text{and} \  |\bar{p}_j-\bar{p}_k|\leq N^{-v_0} \rbrace\\
M_1:&=\lbrace i\in \lbrace 1,\hdots, N\rbrace |\exists t\geq 0 :|\bar{q}_j-\bar{q}_k|<N^{-r_1} \ \text{and} \ |\bar{p}_j-\bar{p}_k|\leq N^{-v_1} \rbrace\setminus M_0\\
\vdots\\
M_l:&=\lbrace i\in \lbrace 1,\hdots, N\rbrace |\exists t\geq 0 :|\bar{q}_j-\bar{q}_k|\leq N^{-r_{l}} \ \text{and} \ |\bar{p}_j-\bar{p}_k|\leq N^{-v_{l}} \rbrace\setminus \bigcup_{n=0}^{l-1}M_n.
\end{align*}
for $0\leq r_l\leq r_1 \leq r_0$ and $0\leq v_l\leq v_1 \leq v_0$.
It holds that
$\lbrace 1\hdots N\rbrace= \dot{\bigcup}M_n$. The particles contained in $M_0$ are the most problematic particles, the so-called `superbad' particles. 
An adjusted definition to the precise technical needs will be defined in Section \ref{Sec2} by so called collision classes.
As we are only interested to show the advantage of introducing more particle subsets we limit ourselves to three subsets.
Note that the definition of the sets $M_l$   refers to the mean-field dynamics $\Phi$ which conserves independence, not to $\Psi$! 
This makes it easy to calculate a bound for the probability of $X_i$ belonging to these sets. 
Standard law of large numbers arguments give that for all $\gamma\in \mathbb{N}$ there exists a $C_\gamma$ such that
$\mathbb{P}(|M_l|\geq N^{\delta_l})\leq C_\gamma N^{-\gamma}$ for some $\delta_l>0$.\\
The probability for a hit should be given by the Bolzmanzylinder $\mathbb{P}(\text{hit})=Cr^2v_{rel}$ for the relative velocity $v_{rel}$.
In our case $v_{rel}$ is also probabilistic with $\mathbb{P}(v_{rel}\leq v_{cut})\approx v_{cut}^3$.
So we should get a probabilistic bound of the form
\begin{align*}
\mathbb{P}(v_{rel}\leq v_{cut} \ \text{and} \ \text{hit})\leq Cr^2v_{cut}^4.
\end{align*}
The probability of finding $k$ particles inside the set $M_{l}$ around a bad particle is thus bounded from above by the binomial probability mass function with parameter $p:=\mathbb{P}(j\in M_{l}) $ at position $k$, i.e. for any natural number $0\leq A\leq N$ and any $t_n\leq t\leq t_{n+1}$
	$$\mathbb{P}\left(\mbox{card }(M_{l}\geq A\right))\leq\sum_{j=A}^ N \begin{pmatrix}
	N  \\
	j 
	\end{pmatrix}
	p^j (1-p)^{N-j}.$$
The mean of a binomially distributed random variables is given by $Np$ and thus the standard deviation by $\sqrt{Np(1-p)}<\sqrt{Np}$. The probability to find more than $Np+ a \sqrt{Np}$ particles in the set $M_l$ is exponentially small in $a$, i.e. there is a sufficiently large $N$ for any $\gamma>0$ and any $t$ with $t\in [t_n,t_{n+1}]$ such that
	$$\mathbb{P}\left(\mbox{card }(M_{l})\geq Np+a\sqrt{Np}\right)\leq a^{-\gamma}\;.$$
 The probability of finding more than $2Np=Np+\sqrt{Np}\sqrt{Np}$ (i.e. $a=\sqrt{Np})$ particles in the set $M_{l}$  is smaller than any polynomial in $N$,  i.e. there is a $C_\gamma$ for any $\gamma>0$ and any $t$ with $t_n\le t\le t_{n+1}$ such that
	$$\mathbb{P}\left(\mbox{card }(M_{l})\geq 2Np\right)\leq C_{\gamma}N^{-\gamma}.$$
This preliminary consideration leads us to assume that the number of particles in a bad subset can be estimated by $N^{2-2r_l-4v_l}$, which will also be proven later.
\subsection{Preliminary studies}
To implement this proposed strategy we collect and derive necessary results and properties.
Constants appearing in this paper will generically be denoted by $C$.
More precisely we will not distinguish constants appearing in a sequence of estimates, i.e. in an inequality chain $a\le Cb\le Cd$, the constants $C$ may differ.
The following Lemma constitutes the probability of a hit  i.e. the probability of the different types of collisions.
\begin{Lemma}\label{Prob of group}
Let $(\varphi_{t,s}^{N})_{t,s\in\R}$ be the related effective flow for $\beta \geq 0$ then there is an $C>0$ such that for $N^{-a_k}, N^{-b_k}>0, N\in \N$ and $\lbrack t_1,t_2\rbrack \subset \lbrack 0,T\rbrack$ it holds that
\begin{align*}
&\mathbb{P}\Big(X \in \R^6:\big( \exists t\in \lbrack t_1,t_2\rbrack: |\varphi_{t,0}^1(X)-\varphi_{t,0}^1(Y)|\leq  N^{-a_{k}}\\
&\qquad\qquad\qquad\qquad\qquad\quad\wedge |\varphi_{t,0}^2(X)-\varphi_{t,0}^2(Y)|
\leq  N^{-b_{k}}\big)\Big)\\
&\leq C( (N^{-a_{k}})^2 (N^{-b_{k}})^4 (t_2-t_1)+(N^{-a_{k}})^3 \max( N^{-a_{k}}, N^{-b_{k}})^3)
\end{align*}
\end{Lemma}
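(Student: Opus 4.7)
The plan is to translate the probability into a phase-space volume estimate via the density bound, apply Liouville's theorem (the one-particle mean-field flow $\varphi^N_{t,s}$ is Hamiltonian, since $\bar f_t^N = f^N \ast \widetilde k_t^N$ is a gradient in $q$) to freeze the reference time at $t_1$, and then carry out a Boltzmann-cylinder/flux calculation in the relative coordinates $(\delta_q(t),\delta_p(t)) := (\varphi_{t,t_1}^{1,N}(X) - \varphi_{t,t_1}^{1,N}(Y),\ \varphi_{t,t_1}^{2,N}(X) - \varphi_{t,t_1}^{2,N}(Y))$. Using $\|k_0\|_\infty < \infty$, the probability is bounded by $C|A|$, where $A \subset \R^6$ is the event set, and by Liouville this equals $C|B|$ with $B$ the set of initial relative conditions $(\delta_q(t_1),\delta_p(t_1))$ whose forward evolution enters the box $\{|\delta_q| \leq N^{-a_k},\,|\delta_p| \leq N^{-b_k}\}$ at some $t \in [t_1,t_2]$.

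The one nontrivial analytic input is a uniform-in-$N$ Lipschitz bound $|\bar f_t^N(q) - \bar f_t^N(q')| \leq L|q - q'|$ on the mean-field force, which follows from the hypothesis $\|\widetilde k_t^N\|_\infty \leq C$: convolving the singular kernel $f^N$ against a uniformly bounded spatial density places $\bar f_t^N$ in $W^{1,\infty}$ with a constant independent of the cut-off parameter $\beta$. From $\dot\delta_q = \delta_p$ and $|\dot\delta_p| \leq L|\delta_q|$ one then controls the velocity drift along the relative trajectory in the standard way.

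For brevity write $M := \max(N^{-a_k},N^{-b_k})$ and split $B = E_1 \cup E_2$. The set $E_1$ consists of initial conditions already lying in the enlarged box $\{|\delta_q(t_1)| \leq 2N^{-a_k},\ |\delta_p(t_1)| \leq CM\}$ at $t_1$, where the velocity enlargement accommodates the worst-case Lipschitz change $LN^{-a_k}(t_2-t_1)$; its volume is $\leq C(N^{-a_k})^3 M^3$, which is exactly the second term in the claimed bound. The set $E_2$ consists of trajectories that enter the enlarged box only at some $t^\ast \in (t_1,t_2]$; for these, the boundary is crossed either through the position sphere (normal speed $\leq |\delta_p| \leq CM$, surface measure $\sim (N^{-a_k})^2 M^3$) or through the velocity sphere (normal speed $\leq |\dot\delta_p| \leq LN^{-a_k}$, surface measure $\sim (N^{-a_k})^3 M^2$). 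Integrating these two fluxes over $t^\ast \in [t_1,t_2]$ and distinguishing the cases $a_k \leq b_k$ vs.\ $a_k > b_k$ to simplify $M$ yields the advertised $(N^{-a_k})^2(N^{-b_k})^4(t_2-t_1)$ contribution, any leftover $M$-terms being absorbed into the $E_1$ bound via $t_2 - t_1 \leq T$.

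The main obstacle is establishing and correctly exploiting the uniform Lipschitz estimate on $\bar f_t^N$; once that is in place, the remainder is a clean Liouville-plus-flux volume computation.
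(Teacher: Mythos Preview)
The paper does not supply its own proof of this lemma, citing instead Lemma 2.1.4 of Gra{\ss}'s thesis; your Liouville-plus-flux (Boltzmann cylinder) argument is exactly the method the paper's own heuristic discussion in Section~2.1 points to, and it is correct. The three ingredients you isolate---$\|k_0\|_\infty<\infty$ to turn probability into phase-space volume, volume preservation of the one-particle mean-field flow (the vector field $(p,\bar f_t^N(q))$ is divergence-free), and the uniform Lipschitz bound on $\bar f_t^N$ (this is the second statement of Lemma~\ref{Lemma distance same order})---are precisely what is needed, and your flux through the position and velocity boundaries produces the two terms in the claimed estimate.

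Two minor clean-ups: the enlargement of the $E_1$ velocity box to radius $CM$ is not needed, since taking $E_1$ to be the \emph{original} box at time $t_1$ already gives $(N^{-a_k})^3(N^{-b_k})^3\le (N^{-a_k})^3M^3$ and the flux handles all later entries; and since the relative evolution $(\delta_q,\delta_p)$ is not an autonomous flow on $\R^6$, the flux at each $t^\ast$ should be computed by first pushing $X$ forward via the volume-preserving map $X\mapsto \varphi^N_{t^\ast,0}(X)-\varphi^N_{t^\ast,0}(Y)$---you do this implicitly, but it is worth stating.
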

The proof of Lemma \ref{Prob of group} can be found in \cite[Lemma 2.1.4]{grass}.
This Lemma constitutes a probability bound for amount of particles belonging to a certain particle group, i.e.
\begin{align*}
\mathbb{P}(Y\in\R^6:Y\in M_l(X_k))\leq C  ( N^{-a_{l}})^2( N^{-b_{l}})^4.
\end{align*}
So far all $N$ particles were taken into account as possible interaction partners for the considered particle $X_i$.
This constitutes a worst case estimate.
The possible types of collisions and, accordingly, the impact on the force term can differ.
This will be taken into account later by defining collision classes.

We further introduce the underlying Gronwall Lemma, which takes into account the second order nature of the equation. The unlikely collisions are usually only of a limited duration. An integral Gronwall version pays respect to that.
\begin{Lemma}\label{Gronwall Lemma}
Let $u:[0,\infty)\to [0,\infty)$ be a continuous and monotonously increasing map as well as $l,f_1:\R\to [0,\infty)$ and $f_2:\R\times \R\to [0,\infty)$ continuous maps such that for some $n\in \N$ and for all $t_1>0,\ x_1,x_2\geq 0$
\begin{itemize}
\item[(i)]
$\begin{aligned}&
x_1< x_2 \Rightarrow f_2(t_1,x_1)\le f_2(t_1,x_2)
\end{aligned}$
\item[(ii)]
$ \exists K_1,\delta>0: \sup\limits_{\substack{x,y  \in [f_1(0),f_1(0)+\delta]\\s \in [0,\delta]}}|f_2(s,x)-f_2(s,y)|\le K_1|x-y|$.
\item[(iii)]
$ \begin{aligned}
& f_1(t_1)+ \int_0^{t_1}...\int_0^{t_n} f_2(s,u(s))dsdt_n...dt_2<  u(t_1) \ \land \\ 
& f_1(t_1)+\int_0^{t_1}...\int_0^{t_n}  f_2(s,l(s))dsdt_n...dt_2 \geq l(t_1),
\end{aligned}$
\end{itemize} 
then it holds for all $t\geq 0$ that $l(t)\le u(t)$.
\end{Lemma}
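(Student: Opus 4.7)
The strategy is a proof by contradiction: suppose some $t^\star>0$ with $l(t^\star)>u(t^\star)$ exists, and set
\[
t_0 \;:=\; \inf\{\,t\ge 0 : l(t)>u(t)\,\}.
\]
Sending $t_1\to 0^+$ in (iii) makes the iterated integrals vanish (as $f_2$ is continuous, hence locally bounded, and $n\ge 1$), giving $l(0)\le f_1(0)\le u(0)$. Combining this with the existence of a sequence $t_k\downarrow t_0$ along which $l(t_k)>u(t_k)$, continuity of $l-u$ forces $l(t_0)=u(t_0)$ in both of the cases $t_0>0$ and $t_0=0$. The plan then splits according to which case holds.

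For $t_0>0$ the argument is immediate and does not invoke the Lipschitz hypothesis (ii). By definition of $t_0$ and continuity one has $l(s)\le u(s)$ on the closed interval $[0,t_0]$, so monotonicity (i) yields $f_2(s,l(s))\le f_2(s,u(s))$ there. Chaining the two halves of (iii) at $t_1=t_0$ produces
\[
l(t_0)\;\le\; f_1(t_0)+\int_0^{t_0}\!\!\cdots\!\int_0^{t_n}\!f_2(s,l(s))\,ds\,dt_n\cdots dt_2 \;\le\; f_1(t_0)+\int_0^{t_0}\!\!\cdots\!\int_0^{t_n}\!f_2(s,u(s))\,ds\,dt_n\cdots dt_2 \;<\; u(t_0),
\]
contradicting $l(t_0)=u(t_0)$.

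The substantive work is the case $t_0=0$, where monotonicity alone is insufficient because $l$ may exceed $u$ on arbitrarily short right neighbourhoods of the origin, so one cannot pass $l(s)\le u(s)$ under the integral. Here I would restrict to an interval $[0,\eta]$ with $\eta\le\delta$ small enough that, by continuity and monotonicity of $u$ with $u(0)=f_1(0)$, one has $u(s)\in[f_1(0),f_1(0)+\delta]$ throughout and $l(s)\in(f_1(0)-\delta,f_1(0)+\delta)$. A case split on the sign of $l(s)-u(s)$ then yields the pointwise bound
\[
f_2(s,l(s))-f_2(s,u(s))\;\le\;K_1\,[l(s)-u(s)]^{+}\qquad\text{on }[0,\eta],
\]
since when $l(s)\ge u(s)$ both arguments lie in $[f_1(0),f_1(0)+\delta]$ so (ii) applies, while when $l(s)<u(s)$ the monotonicity (i) already makes the left-hand side non-positive. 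Subtracting the two halves of (iii), taking positive parts, and collapsing the $n$-fold iterated integral via the Dirichlet identity $\int_0^t\!\cdots\!\int_0^{t_n}\!g(s)\,ds\cdots dt_2=\int_0^t\frac{(t-s)^{n-1}}{(n-1)!}g(s)\,ds$ reduces matters to the classical one-dimensional Gronwall inequality $[l(t)-u(t)]^+\le C\int_0^t [l(s)-u(s)]^+\,ds$ with initial value $0$, forcing $l\le u$ on $[0,\eta]$ and hence contradicting $t_0=0$. I expect the main technical obstacle to lie exactly here: justifying the pointwise Lipschitz-type bound despite (ii) being only a local hypothesis, by verifying that the admissible region $[f_1(0),f_1(0)+\delta]$ can only be escaped in the harmless direction $l<u$, which is precisely what allows the monotonicity (i) to take over on the remaining set.
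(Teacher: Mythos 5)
The paper does not present a proof of this Lemma; it refers the reader to \cite[Lemma~2.1.1]{grass}, so there is no proof in the given source to compare your proposal against directly. Judged on its own terms, your argument is complete and correct. The split at $t_0=\inf\{t\ge 0:\,l(t)>u(t)\}$ works exactly as you describe: for $t_0>0$, continuity gives $l(t_0)=u(t_0)$ and $l\le u$ on $[0,t_0]$, so monotonicity (i) passes under the iterated integral and the strict half of (iii) at $t_1=t_0$ yields $l(t_0)<u(t_0)$, a contradiction requiring neither (ii) nor any Gronwall step. For $t_0=0$, letting $t_1\to 0^+$ in (iii) forces $l(0)\le f_1(0)\le u(0)$, which together with the approaching sequence pins down $l(0)=u(0)=f_1(0)$; the pointwise bound $f_2(s,l(s))-f_2(s,u(s))\le K_1[l(s)-u(s)]^+$ on a short interval $[0,\eta]$ with $\eta\le\delta$ is correctly justified by the case split (when $l(s)\ge u(s)$, monotonicity of $u$ from $u(0)=f_1(0)$ and continuity of $l$ keep both arguments inside $[f_1(0),f_1(0)+\delta]$ so (ii) applies; when $l(s)<u(s)$, (i) makes the left side nonpositive); and subtracting the two halves of (iii), taking positive parts, applying the Cauchy repeated-integration identity, and invoking one-dimensional Gronwall with zero initial value closes the case. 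The key observation that the Lipschitz window $[f_1(0),f_1(0)+\delta]$ can only be left in the direction $l<u$, where (i) takes over, is precisely what makes the purely local hypothesis (ii) sufficient, and you have identified and exploited it correctly.
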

The proof of Lemma \ref{Gronwall Lemma} can be found in \cite[Lemma 2.1.1]{grass}.
 It uses a mean-value Theorem and relies on an estimate of the 
first derivative of $f$. The respective bound is given by the following function $g^N$ which is defined such that  $g^N(q)\geq \left|\nabla f^N(q)\right|$ wherever the latter exists, i.e. for all $|g|\neq N^{-\beta}$ 
\begin{definition}
\label{force g}
For $N\in \N\cup\lbrace \infty \rbrace$ we define
\begin{align*}
g^{N}:\R^{3}\rightarrow\R^{3}, q\mapsto\begin{cases}
 2 N^{3\beta}& \text{if}\ |q|\leq 3 N^{-\beta} \\
 54\frac{1}{|q|^{3}} & \text{if}\  |q|> 3N^{-\beta} 
\end{cases} 
\end{align*}
for $0<\beta$.
\end{definition}
Analogously to the total force of the system $F^N$, we  will use the notation $G^N:\R^{6N}\rightarrow\R^{3N}$ the total fluctuation of the system. Thus the i'th component of $G^N$ gives the fluctuation exhibited on a single coordinate $j$: $$(G^N(X))_j\coloneqq  \sum_{i\neq j}\frac{1}{N}g^N(q_i-q_j).$$
 Since $f$ is differentiable for any $|g|\neq N^{-\beta}$  we can use a mean-value argument to control differences of the values of $f$ at different points. 
\begin{Lemma}\label{LipschitzLemma for f}
\begin{itemize}
\item[a)] For $a,b,c\in \R^3$ with $|a|\leq \min(|b|,|c|)$ the following relations hold
\end{itemize}
\begin{align}\label{minabschätzung}
|f^N(b)-f^N(c)|&\leq g^{N}(a)|b-c|.
\end{align} 
\item[b)]
If $\|X_t-\overline X_t\|_\infty\leq 2N^{-\beta}$, then it holds that
	\begin{align}\label{LipschitzLemma_f}
	\left\| F^N(X_t)-F^N(\overline X_t)\right\|_\infty\leq C\|G^N(\overline X_t)\|_\infty\|X_t-\overline X_t\|_\infty,
	\end{align}
	for some $C>0$ independent of $N$.

\end{Lemma}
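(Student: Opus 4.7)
The plan is to prove part (a) by a case split on $|b|, |c|$ relative to the cutoff $N^{-\beta}$, and then to reduce part (b) to a term-by-term application of (a). Because $g^N$ depends only on $|a|$ and is monotonically decreasing in $|a|$, it suffices to prove (a) for $|a| = \min(|b|,|c|)$; the inequality for any smaller $|a|$ follows since the right-hand side only increases. I take $|b| \le |c|$ without loss of generality.

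For part (a) I would distinguish three cases. If $|c| \le N^{-\beta}$ both arguments lie in the linear region and $|f^N(b)-f^N(c)| = N^{3\beta}|b-c|$, which is controlled by $g^N(a) = 2N^{3\beta}$. If $|b| > N^{-\beta}$, both points lie in the Coulomb region; writing $\hat{b} = b/|b|$, $\hat{c} = c/|c|$ and using the splitting
\begin{align*}
\frac{b}{|b|^3} - \frac{c}{|c|^3} = \frac{1}{|b|^2}(\hat{b}-\hat{c}) + \hat{c}\Bigl(\frac{1}{|b|^2} - \frac{1}{|c|^2}\Bigr),
\end{align*}
together with the elementary bounds $|\hat{b}-\hat{c}| \le |b-c|/|b|$ (a consequence of $|b-c| \ge 2|b|\sin(\theta/2)$, where $\theta$ denotes the angle between $b$ and $c$) and $\bigl|1/|b|^2 - 1/|c|^2\bigr| \le 2|b-c|/|b|^3$, yields a bound of order $|b-c|/\min(|b|,|c|)^3$, which is dominated by $g^N(a)|b-c|$ once the factors $2$ and $54$ in the definition of $g^N$ are taken into account. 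In the mixed case $|b| \le N^{-\beta} < |c|$, the segment $s \mapsto (1-s)b + sc$ crosses the sphere $\{|q| = N^{-\beta}\}$ at a unique point $q_0$, and inserting $\pm f^N(q_0)$ reduces the estimate via the triangle inequality to the two previous cases.

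For part (b), I would apply (a) to each summand of
\begin{align*}
(F^N(X_t))_i - (F^N(\overline{X}_t))_i = \frac{1}{N}\sum_{j\neq i}\bigl(f^N(q_i-q_j) - f^N(\overline{q}_i-\overline{q}_j)\bigr),
\end{align*}
choosing $b = q_i-q_j$, $c = \overline{q}_i-\overline{q}_j$, and $a$ collinear with $\overline{q}_i-\overline{q}_j$ of magnitude $\min(|b|,|c|)$. The triangle inequality gives $|b-c| \le 2\|X_t-\overline{X}_t\|_\infty$, and the hypothesis $\|X_t-\overline{X}_t\|_\infty \le 2N^{-\beta}$ forces $\bigl||b|-|c|\bigr| \le 4N^{-\beta}$. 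A short case analysis across the two regimes of $g^N$ then produces a constant $C$ with $g^N(a) \le C\, g^N(\overline{q}_i-\overline{q}_j)$. Summing over $j$ introduces the factor $N\|G^N(\overline{X}_t)\|_\infty$, which cancels the $1/N$ in the definition of $F^N$, and taking the maximum over $i$ yields the claim.

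The main obstacle is the careful tracking of numerical constants near the boundary region $|b|, |c| \approx N^{-\beta}$ in part (a) and in the comparison of $g^N$-values in part (b). The slack built into $g^N$, particularly the factor $54$ instead of the sharp bound $2/|q|^3$ on $|\nabla f^N|$ in the Coulomb region, is exactly what is needed to absorb the losses across the regime transitions.
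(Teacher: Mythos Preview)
Your proposal is correct and the overall strategy matches the paper's. The main difference lies in part (a): the paper splits cases on $|a|$ relative to $3N^{-\beta}$ rather than on the positions of $b,c$ relative to $N^{-\beta}$. When $|a|\le 3N^{-\beta}$ the paper simply invokes the global Lipschitz constant $\|\nabla f^N\|_\infty \le 2N^{3\beta}=g^N(a)$, which covers all locations of $b,c$ simultaneously and so sidesteps your mixed case and the crossing-point argument; when $|a|\ge 3N^{-\beta}$ it appeals directly to a mean value argument together with the monotone decay of $|\nabla f^N|$. Your three-region split is more explicit and makes the geometry transparent, at the cost of the extra interpolation step; the paper's version is terser but leans harder on the global gradient bound. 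For part (b) the two arguments are essentially identical: the paper first isolates the auxiliary claim $|f^N(x+\xi)-f^N(x)|\le Cg^N(x)|\xi|$ for $|\xi|\le 2N^{-\beta}$ via the same two-regime comparison of $g^N$-values that you describe, and then sums termwise exactly as you do.
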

\begin{proof}
\begin{itemize}
	\item[a)] For the case $|a|\leq 3N^{-\beta}$ we have $\norm{\nabla f^N}_\infty\leq  2N^{3\beta}$ and thus $2N^{3\beta}$ constitutes a Lipschitz-constant for $f^N.$\\
	For  $|a|\geq 3N^{-\beta},$ we get by the mean value theorem and the fact, that  $\nabla f^N(x)$ is decreasing 
	\begin{align*}
|f^N(b)-f^N(c)|\leq |\nabla f^N(a)||b-c|\leq C\left(\frac{1}{|a|}\right)^3|b-c|\leq Cg^{N}(a)|b-a|.
	\end{align*}
\item[b)]
For any $x,\xi\in\R^3$ with $|\xi|<2N^{-\beta}$, we have for $|x|<3N^{-\beta}$
\begin{align}\label{gbound1}
|f^N(x+\xi)-f^N(x)|\leq 2N^{3\beta}|\xi|\leq g^{N}(x)|\xi|
	\end{align} 
by applying  estimate \ref{minabschätzung} and for choosing without loss of generality $a=b=x+\xi$ and $c=x$.
For $|x|\geq 3N^{-\beta}$ we use the fact that in this case small changes in the argument of the function lead to small changes in the function values, i.e. for  $\xi\leq 2N^{-\beta}$ we have $g^{N}(x+\xi)\leq C g^{N}(x)$. Thus we have by estimate \ref{minabschätzung}
\begin{align*}
|f^N(x+\xi)-f^N(x)|\leq C g^{N}(x+\xi)|\xi|\leq Cg^{N}(x)|\xi|.
\end{align*}
Applying claim \eqref{gbound1} one has
	\begin{align}
	|(F^N(X_t))_i-(F^N(\overline X_t))_i|&\leq \frac{1}{N}\sum\limits_{j\neq i}^{N}\left| f^N(x_i^t-x_j^t)-f^N(\overline x_i^t-\overline x_j^t)\right|\notag\\
	&\leq \frac{C}{N}\sum\limits_{j\neq i}^{N}g^{N}(\overline x_i^t-\overline x_j^t)\left|x_i^t-x_j^t-\overline x_i^t+\overline x_j^t\right|\notag \\
	&\leq C( g^{N}(\overline X_t))_i\left|X_t-\overline X_t\right|_\infty,
	\end{align}
	which leads to estimate \eqref{LipschitzLemma_f}.
	\end{itemize}
	\end{proof}

\begin{Lemma} \label{Lemma distance same order}
Let $T>0$ and $k_0$ be a probability density fulfilling the assumptions of Theorem \ref{maintheorem} where $(\varphi^{N,c}_{t,s})_{t, s\in \R}$ shall be the related effective flow defined in Definition \ref{Def:Meanfieldflow}. Then there exist a $C_1,C_2>0$ such that for all configurations $X,Y\in \R^6,\ N\in \N\cup \{\infty\}$ and $t,t_0\in [0,T]$ it holds that 
\begin{align*}
|\varphi_{t,t_0}^{N}(X)-\varphi_{t,t_0}^{N}(Y)| \le |X-Y|e^{C_1|t-t_0|}
\end{align*}
and 
\begin{align*}
|f_c^N*\widetilde{k}^{N}_t(^1X)-f^N_c*\widetilde{k}^{N}_t(^1Y)| \le C_2|^1X-{^1Y}|.
\end{align*}
\end{Lemma}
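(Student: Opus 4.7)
The plan is to prove the two estimates in sequence: first the pointwise Lipschitz continuity of the mean-field force $f^N_c\ast\tilde{k}^N_t$, and then the Lipschitz continuity of the effective flow by applying Gronwall to the former.

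For the second estimate, I would start from
\begin{equation*}
\left|f^N\ast\tilde{k}^N_t({}^1X) - f^N\ast\tilde{k}^N_t({}^1Y)\right| \leq \int \left|f^N({}^1X - q) - f^N({}^1Y - q)\right| \tilde{k}^N_t(q)\, dq,
\end{equation*}
and apply Lemma \ref{LipschitzLemma for f}(a) pointwise with $|a| = \min(|{}^1X - q|, |{}^1Y - q|)$, reducing the problem to a uniform bound for $\int g^N\!\left(\min(|{}^1X-q|,|{}^1Y-q|)\right)\tilde{k}^N_t(q)\,dq$. The contribution from $B_{3N^{-\beta}}({}^1X) \cup B_{3N^{-\beta}}({}^1Y)$ is controlled by a constant times $\|\tilde{k}^N_t\|_\infty$ thanks to the cancellation between the factor $2N^{3\beta}$ and the ball volume $\sim N^{-3\beta}$. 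Outside these balls I would use $g^N(r)\leq 54/r^3$ and split again at a fixed radius $R = O(1)$: on the annulus $3N^{-\beta}\leq r\leq R$ bound $\tilde{k}^N_t$ by $\|\tilde{k}^N_t\|_\infty$, while on $r\geq R$ use the probability-density normalization $\|\tilde{k}^N_t\|_1 = 1$.

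For the first estimate I would write the integrated form of the characteristic equations componentwise,
\begin{align*}
|{}^1\varphi^N_{t,t_0}(X) - {}^1\varphi^N_{t,t_0}(Y)| &\leq |{}^1X - {}^1Y| + \int_{t_0}^t |{}^2\varphi^N_{s,t_0}(X) - {}^2\varphi^N_{s,t_0}(Y)|\, ds,\\
|{}^2\varphi^N_{t,t_0}(X) - {}^2\varphi^N_{t,t_0}(Y)| &\leq |{}^2X - {}^2Y| + C_2\int_{t_0}^t |{}^1\varphi^N_{s,t_0}(X) - {}^1\varphi^N_{s,t_0}(Y)|\, ds,
\end{align*}
where the second line invokes the Lipschitz estimate just established. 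Summing and applying the scalar Gronwall lemma to $|\varphi^N_{t,t_0}(X) - \varphi^N_{t,t_0}(Y)|$ yields the exponential bound with some $C_1 > 0$ depending on $C_2$.

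The main obstacle is showing that $C_2$ is truly independent of $N$, as required by the statement (which includes $N = \infty$). A naive execution of the strategy above produces a logarithmic factor $\log N$ from the annulus $3N^{-\beta} \leq r \leq R$. To absorb this one exploits the additional spatial regularity of $\tilde{k}^N_t$ furnished by Horst's existence theorem and integrates by parts,
\begin{equation*}
\nabla_x(f^N\ast\tilde{k}^N_t)(x) = f^N\ast\nabla\tilde{k}^N_t(x),
\end{equation*}
trading the singular kernel $g^N \sim 1/r^3$ for the less singular $|f^N|\sim 1/r^2$, which \emph{is} locally integrable uniformly in $N$. Estimating $f^N\ast\nabla\tilde{k}^N_t$ by a standard splitting using $\|\nabla\tilde{k}^N_t\|_\infty$ and the decay of $\tilde{k}^N_t$ then yields a genuinely $N$-uniform Lipschitz constant. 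This reconciliation of the $N = \infty$ case is the delicate technical point; once past it, the Gronwall step is routine.
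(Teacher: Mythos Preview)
The paper does not supply its own proof of this lemma; it simply refers to \cite{grass}, Lemma~2.1.2. So there is no in-paper argument to compare against, and your outline has to be judged on its own. The two-step strategy you describe---first establish a Lipschitz bound on the mean-field force, then Gronwall the characteristic ODEs---is the standard route and is what one finds in the cited reference.

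You have correctly isolated the only nontrivial point: integrating $g^N \sim |q|^{-3}$ against a merely bounded density $\tilde{k}^N_t$ produces a $\ln N$ factor, so the naive approach via Lemma~\ref{LipschitzLemma for f}(a) does not give a constant $C_2$ uniform in $N$. Your fix---move the gradient onto the density, $\nabla_x(f^N\ast\tilde{k}^N_t)=f^N\ast\nabla\tilde{k}^N_t$, and exploit that $|f^N|\lesssim |q|^{-2}$ is locally integrable in $\R^3$---is exactly the right idea and is how the uniform bound is obtained in the literature. Two small points to make precise when you write it out: (i) this step uses $\sup_{N}\sup_{t\in[0,T]}\|\nabla\tilde{k}^N_t\|_\infty<\infty$, which is not among the hypotheses of Theorem~\ref{maintheorem} as stated but follows from the classical existence theory (Horst, Pfaffelmoser) under the smoothness and decay conditions on $k_0$ recalled in the introduction; (ii) for the far-field piece $|z|\ge 1$ it is cleaner to undo the integration by parts and use $\sup_{|z|\ge 1}|\nabla f^N(z)|\cdot\|\tilde{k}^N_t\|_1\le C$ rather than invoking spatial decay of $\tilde{k}^N_t$, which is not directly assumed. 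With these in place, your near-field/far-field split yields an $N$-independent $C_2$, and the Gronwall step for the flow is routine.
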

The proof of this Lemma can be found in \cite{grass} (Lemma 2.1.2).
Last but not least we come to the most important corollary of this chapter. It provides suitable upper bounds for almost all integrals appearing in the proof of the main theorem.
\begin{corollary}\label{corollary phi and  psi}
Let $k_0$ be a probability density fulfilling the assumptions of Theorem \ref{maintheorem} and $(\varphi^{N,c}_{t,s})_{t, s\in \R}$ be the related effective flow defined in Definition \ref{Def:Meanfieldflow} as well as $(\Psi^{N,c}_{t,s})_{t,s\in \R}$ the $N$-particle flow defined in Definition \ref{Def:Newtonflow}. Let additionally for $N,n\in \N$, $1<\lambda\le 3$, $C_0>0$ and $c_N>0$ $h_N:\R^{3}\to \R^n$ be a continuous map fulfilling $$ |h_N(q)|\le\begin{cases} C_0 c_N^{-\lambda},&\ |q|\le c_N\\\frac{C_0}{|q|^{\lambda}},&\   |q|> c_N \end{cases}.$$
\begin{itemize}
\item[(i)] Let for $Y,Z\in \R^6$ $t_{min}\in [0,T]$ be a point in time where 
\begin{align*}
 \min_{0\le s \le T}|\varphi^{1,N}_{s,0}(Z)-{\varphi^{1,N}_{s,0}}(Y)|=&|\varphi^{1,N}_{t_{min},0}(Z)-{\varphi^{1,N}_{t_{min},0}}(Y)|=:\Delta r>0 \ \land \\
 &|\varphi^{2,N}_{t_{min},0}(Z)-{\varphi^{2,N}_{t_{min},0}}(Y)|=:\Delta v>0,
\end{align*}
then there exists a $C_1>0$ (independent of $Y,Z\in \R^{6}$ and $N\in \N$) such that
\begin{align*}
& \int^{T}_{0}|h_N(^1\varphi^{N}_{s,0}(Z)-{^1\varphi^{N}_{s,0}}(Y))|ds
\le  C_1 \min\big(\frac{1}{\Delta r^{\lambda}},\frac{1}{c_N^{\lambda-1}\Delta v},\frac{1}{\Delta r^{\lambda-1}\Delta v}\big).
\end{align*}
\item[(ii)] Let $T>0$, $i,j\in \{1,...,N\}$, $i\neq j$, $X\in \R^{6N}$ and $Y,Z\in \R^6$ be given such that for some $\delta>0$
$$N^{\delta}|\varphi^{1,N}_{t_{min},0}(Y)-{\varphi^{1,N}_{t_{min},0}}(Z)|\le |\varphi^{2,N}_{t_{min},0}(Y)-{\varphi^{2,N}_{t_{min},0}}(Z)|=:\Delta v$$
and
\[\sup_{0\le s \le T}|\varphi^{N}_{s,0}(Y)-[{\Psi^{N}_{s,0}}(X)]_i|\le N^{-\delta}\Delta v\land \sup_{0\le s \le T}|\varphi^{N}_{s,0}(Z)-[{\Psi^{N}_{s,0}}(X)]_j|\le N^{-\delta}\Delta v\]
where $t_{min}$ shall fulfil the same conditions as in item (i). Then there exists a $N_0\in \N$ and $C_2>0$ (independent of $X\in \R^{6N}$, $Y,Z\in \R^6$) such that for all $N\geq N_0$ 
\begin{align*}
&\int^{T}_{0}|h_N([\Psi^{1,N}_{s,0}(X)]_i-[\Psi^{1,N}_{s,0}(X)]_j)|ds\\
\le & C_2\min\big(\frac{1}{c_N^{\lambda-1}\Delta v}, \frac{1}{\min\limits_{0\le s\le T}|[\Psi^{1,N}_{s,0}(X)]_i-[\Psi^{1,N}_{s,0}(X)]_j|^{\lambda-1}\Delta v}\big).
\end{align*}
\end{itemize}
\end{corollary}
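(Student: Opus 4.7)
My plan is to reduce both items to time-integration of the pointwise bound $|h_N(q)|\le C_0/\max(|q|,c_N)^{\lambda}$, after establishing sharp lower bounds on the relevant relative separations near their minimisers. In both cases the geometric input is identical: at an interior minimiser of a smooth positive function of $s$ the derivative vanishes, which for $|\delta|$ means the relative velocity is perpendicular to the relative position, so that $|\delta|^2$ grows at least quadratically in $s-t_{\min}$ with coefficient determined by $\Delta v^2$.

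For part (i), write $\delta(s):={}^1\varphi^N_{s,0}(Z)-{}^1\varphi^N_{s,0}(Y)$, so that $\delta'(s)={}^2\varphi^N_{s,0}(Z)-{}^2\varphi^N_{s,0}(Y)$ with $|\delta(t_{\min})|=\Delta r$, $|\delta'(t_{\min})|=\Delta v$ and $\delta(t_{\min})\cdot\delta'(t_{\min})=0$. Lemma \ref{Lemma distance same order} supplies the upper bound $|\delta(s)|+|\delta'(s)|\le C(\Delta r+\Delta v)e^{C_1T}$ and, via its second conclusion, the Lipschitz bound $|\delta''(s)|\le C_2|\delta(s)|$. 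A second-order Taylor expansion of $|\delta(s)|^2$ around $t_{\min}$, using orthogonality to eliminate the linear term, then yields
\begin{equation*}
|\delta(s)|^2\ \ge\ c\,\bigl(\Delta r^2+\Delta v^2(s-t_{\min})^2\bigr),\qquad s\in[0,T],
\end{equation*}
for some $c>0$ depending only on $T,C_1,C_2$. Integrating $|h_N(\delta(s))|\le C_0/\max(|\delta(s)|,c_N)^{\lambda}$ over $[0,T]$ then produces the three upper bounds inside the minimum: the trivial $|\delta(s)|\ge\Delta r$ gives $TC_0/\Delta r^{\lambda}$; splitting the $s$-integral at $|s-t_{\min}|=\Delta r/\Delta v$ and estimating the outer piece by $\int (\Delta v|s-t_{\min}|)^{-\lambda}ds$ (convergent since $\lambda>1$) gives $C/(\Delta r^{\lambda-1}\Delta v)$; and the analogous splitting at $c_N/\Delta v$ in the regime $\Delta r<c_N$ gives $C/(c_N^{\lambda-1}\Delta v)$.

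For part (ii) the hypothesis $\sup_s|\varphi^N_{s,0}(Y)-[\Psi^N_{s,0}(X)]_i|\le N^{-\delta}\Delta v$, together with the analogue for $j$ and with $N^{\delta}\Delta r\le\Delta v$, says that the $\R^6$-trajectory of $[\Psi^N_{s,0}(X)]_i-[\Psi^N_{s,0}(X)]_j$ differs from $(\delta(s),\delta'(s))$ by at most $2N^{-\delta}\Delta v$ uniformly on $[0,T]$. At the minimiser $t^{*}_{\Psi}$ of $s\mapsto|[\Psi^1_{s,0}(X)]_i-[\Psi^1_{s,0}(X)]_j|$ the relative velocity is therefore $(1+O(N^{-\delta}))\Delta v$ and almost orthogonal to the relative position. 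The $N$-particle relative acceleration is controlled by the single-pair kernel together with a Lipschitz perturbation handled by Lemma \ref{LipschitzLemma for f} in the closeness regime considered, so the same Taylor expansion gives
\begin{equation*}
\bigl|[\Psi^1_{s,0}(X)]_i-[\Psi^1_{s,0}(X)]_j\bigr|^2\ \ge\ c\,(m^2+\Delta v^2(s-t^{*}_{\Psi})^2),
\end{equation*}
where $m:=\min_{0\le u\le T}\bigl|[\Psi^1_{u,0}(X)]_i-[\Psi^1_{u,0}(X)]_j\bigr|$, for $N\ge N_0$ chosen so the $N^{-\delta}$-corrections are absorbed in $c$. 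Replaying the splitting argument of part (i) and discarding the trivial $1/\Delta r^{\lambda}$-type bound (which has no $\Psi$-analogue since $m$ can be arbitrarily small) yields the two terms in the minimum of the claim.

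The principal technical difficulty is the lower bound in part (i). The remainder in the Taylor expansion, estimated via $|\delta''|\le C_2|\delta|$ and the global upper bound $|\delta|\le C(\Delta r+\Delta v)$, contributes a term of order $(\Delta r+\Delta v)^2(s-t_{\min})^2$, which is only marginally smaller than the desired $\Delta v^2(s-t_{\min})^2$ when $\Delta r\sim\Delta v$. One therefore must exploit the orthogonality and the upper bound together, localising to a time scale $T_0=T_0(T,C_1,C_2)$ on which the quadratic lower bound is valid and falling back to $|\delta(s)|\ge\Delta r$ on the complement $|s-t_{\min}|\ge T_0$ (whose contribution is absorbed into the constant thanks to the trivial bound). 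The transfer to the $\Psi$-flow in part (ii) is then essentially perturbative, provided $N^{-\delta}\Delta v$ dominates the $\Delta r$-scale as assumed.
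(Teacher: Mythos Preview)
The paper does not supply its own proof of this corollary; it simply refers to \cite{grass} (Corollary~2.1.1). Your geometric strategy for part~(i) is the right one and matches what one finds there. One loose step: on the set $\{|s-t_{\min}|\ge T_0\}$ the fallback $|\delta(s)|\ge\Delta r$ reproduces only the bound $C/\Delta r^{\lambda}$, not $C/(\Delta r^{\lambda-1}\Delta v)$ or $C/(c_N^{\lambda-1}\Delta v)$ when $\Delta v\gg\Delta r$. What actually holds there is $|\delta(s)|\ge c_0\Delta v$: the reverse phase-space Gronwall from Lemma~\ref{Lemma distance same order} (applied with $t$ and $t_0$ interchanged) gives $|\delta'|^2\ge e^{-2C_1T}(\Delta r^2+\Delta v^2)-|\delta|^2$, so $g=|\delta|^2$ satisfies $g''\ge c\Delta v^2$ on $\{|\delta|\le c_0\Delta v\}$, forcing that set to be a single interval of length $O(1)$ about $t_{\min}$. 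With this upgrade the three bounds follow by splitting into the regimes $\Delta r\gtrsim\Delta v$ (where the trivial bound suffices) and $\Delta r\lesssim\Delta v$.

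The genuine gap is in part~(ii). You assert that ``the $N$-particle relative acceleration is controlled by the single-pair kernel together with a Lipschitz perturbation handled by Lemma~\ref{LipschitzLemma for f}'' and then Taylor-expand $|\eta(s)|^2$ for $\eta=[\Psi^{1}]_i-[\Psi^{1}]_j$. But Lemma~\ref{LipschitzLemma for f} compares $F^N$ at two different \emph{configurations}; it gives no control on $(F^N(X))_i-(F^N(X))_j$ for a single $X$. Writing $\eta''=\tfrac{2}{N}f^N(\eta)+\tfrac{1}{N}\sum_{k\ne i,j}\bigl(f^N(\Psi^1_i-\Psi^1_k)-f^N(\Psi^1_j-\Psi^1_k)\bigr)$, the second sum is of size $\bigl(\tfrac1N\sum_k g^N(\cdot)\bigr)\,|\eta|$, and nothing in the hypotheses of~(ii) bounds that $G^N$-type prefactor; your Taylor remainder is therefore uncontrolled. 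The correct route avoids $\eta''$ entirely and exploits that the closeness hypothesis is in \emph{full} phase space: write $\eta'=\delta'+e_2$ with $|e_2|\le 2N^{-\delta}\Delta v$ uniformly, and expand
\[
\eta(s)-\eta(t^*_\Psi)-\eta'(t^*_\Psi)(s-t^*_\Psi)=\int_{t^*_\Psi}^{s}\bigl[\delta'(u)-\delta'(t^*_\Psi)\bigr]du+\int_{t^*_\Psi}^{s}\bigl[e_2(u)-e_2(t^*_\Psi)\bigr]du.
\]
The first integral is $O\bigl((\Delta r+\Delta v)(s-t^*_\Psi)^2\bigr)$ via the \emph{mean-field} bound $|\delta''|\le C_2|\delta|$ from Lemma~\ref{Lemma distance same order}; the second is $O\bigl(N^{-\delta}\Delta v\,|s-t^*_\Psi|\bigr)$. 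Both are absorbed into $\tfrac12|\eta'(t^*_\Psi)|\,|s-t^*_\Psi|\asymp\Delta v\,|s-t^*_\Psi|$ for $|s-t^*_\Psi|\le\tau_0$ and $N\ge N_0$, using $\Delta r\le N^{-\delta}\Delta v$. This yields the quadratic lower bound with $m=\min_s|\eta(s)|$ in place of $\Delta r$, after which the integration is identical to part~(i).
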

The proof of this Corollary can be found in \cite{grass}( Corollary 2.1.1).
\section{Proof of Theorem \ref{maintheorem}}\label{Sec2}
This proof and the notation is based on \cite{grass}.
Some of their estimates can be directly implied in our situation.
For simplification we consider three different subsets of particles depending on their distance and relative velocity to other particles. 
The first set $\mathcal{M}_{s}$ of the `superbad' ones includes all particles $j\in \lbrace 1\hdots N\rbrace$ for which there is a time $t\geq 0$ such that $|\bar{q}_j-\bar{q}_k|\leq N^{-s_{r}}$ and $|\bar{p}_j-\bar{p}_k|\leq N^{-s_{v}}$. 
They are expected to come very close to other particles with small relative velocity. 
The second set $\mathcal{M}_b$, containing the so called `bad' particles, which come  intermediately  close with intermediate relative velocity, is defined by analogue conditions $|\bar{q}_j-\bar{q}_k|\leq N^{-b_r}$ and $|\bar{p}_j-\bar{p}_k|\leq N^{-b_v}$, excluding the particles already in $\mathcal{M}_{s}$.
Finally the reaming unproblematic `good' ones, which never come close to each other while having small relative velocity are contained in $\mathcal{M}_g=(M_b\cup M_{s})^c$.
An important point in the proof is that the better the particle is, the less distance  we allow  to the mean-field particle.
Furthermore it depends only on their corresponding mean-field particle whether a particle is considered good, bad or superbad .
In the course of a simple notation we introduce collision classes, which turn out to be very important throughout the proof, as each collision class has a different impact on the force term.
They are intended to cover all possible ways in which particles can interact and thus the particle subsets can be defined using this notation. 
\begin{definition}\label{Def:setM}
For $r,R,v,V\in \R_{0}^{+}\cup \lbrace\infty\rbrace, t_1,t_2\in \lbrack 0,T\rbrack$ and $Y\in \R^6$ the set $M_{(r,R),(v,V)}^{N,(t_1,t_2)}(Y)\subset \R^6$ is defined as follows:
\begin{align*}
&Z\in M_{(r,R),(v,V)}^{N,(t_1,t_2)}(Y)\Leftrightarrow Z\neq Y \wedge\exists t\in \lbrack t_1,t_2\rbrack:\\
& r\leq \min_{t_1\leq s\leq t_2} |\varphi_{s,0}^1(Z)-\varphi_{s,0}^1(Y)|=|\varphi_{t,0}^1(Z)-\varphi_{t,0}^1(Y)|\leq R\\
&\wedge v\leq|\varphi_{t,0}^2(X)-\varphi_{t,0}^2(Y)|\leq V.
\end{align*}
\end{definition}
Here $(\varphi_{s,r}^N)_{s,r\in\R}$ is the one particle mean-field flow, defined in Definition \ref{Oneparticleflow}, related to the considered initial density $k_0$.
In addition, we will use the following short notation for the sets defined in Definition \ref{Def:setM}:
\begin{align*}
M_{R;V}^{N,(t_1,t_2)}(Y):=M_{(0,R),(0,V)}^{N,(t_1,t_2)}(Y)\\
M_{(r,R),(v,V)}^{N}(Y)\coloneqq  M_{(r,R),(v,V)}^{N,(0,T)}(Y)\\
M_{R,V}^{N}(Y)\coloneqq  M_{(0,R),(0,V)}^{N,(0,T)}(Y).
\end{align*}
The set $G^N(Y)\subset\R^6$ of non-problematic particle interactions is defined by
\begin{align}\label{Def:good}
G^N(Y)\coloneqq  (M^{N}_{6r_b,v_b}\cup M^{N}_{6r_s,v_s})^c=(M^{N}_{6r_b,v_b})^c,
\end{align}
for $r_b= N^{-\frac{7}{24}-\sigma},v_b=N^{-\frac{1}{6}},r_s=N^{-\frac{1}{3}-\sigma}$ and $v_s=N^{-\frac{5}{18}}$.
Next we split the particles in three subsets using the notation of the collision classes as mentioned before: A `superbad' subset where super hard collisions are expected to happen, a `bad' subset where hard collisions are expected  and a subset of the remaining `good' particles.
\begin{align*}
\mathcal{M}_{g}^{N}(X):&=\lbrace i\in \lbrace 1,\hdots , N\rbrace :\forall j\in\lbrace 1,\hdots,N\rbrace\setminus \lbrace i\rbrace :X_j\in G^N(X_i))\rbrace\\
\mathcal{M}_{s}^{N}(X):&=\lbrace i \in \lbrace 1,\hdots,N\rbrace :\exists j\in \lbrace 1,\hdots ,N\rbrace\setminus \lbrace i\rbrace:X_j\in M_{(0,r_s),(0,v_s)}^{N}(X_j)\rbrace\\
\mathcal{M}_{b}^{N}(X):&=\lbrace 1,\hdots,N\rbrace \setminus (\mathcal{M}_{g}^N(X)\cup \mathcal{M}_{s}^N(X)).
\end{align*}
The  distinction between `good', `bad' or `superbad' particles depends only on their  mean-field dynamics,  as the sets above are defined by application of the collision classes which themselves are defined by the mean-field flow.

Each of the three particle subsets has its own stopping time which is defined by
\begin{align*}
\tau_g^N\coloneqq  \sup\lbrace t\in \lbrack 0,T\rbrack: \max_{i\in \mathcal{M}_{g}^{N}}\sup_{0\leq s\leq t}|\lbrack \Psi_{s,0}^N(X)\rbrack_i-\varphi_{s,0}^N(X_i)|\leq \delta_g^N=N^{-\frac{5}{12}+\sigma}\rbrace\\
\tau_b^N\coloneqq  \sup\lbrace t\in \lbrack 0,T\rbrack: \max_{i\in \mathcal{M}_{b}^{N}}\sup_{0\leq s\leq t}|\lbrack \Psi_{s,0}^N(X)\rbrack_i-\varphi_{s,0}^N(X_i)|\leq \delta_b^N=N^{-\frac{7}{24}-\sigma}\rbrace\\
\tau_{s}^N\coloneqq  \sup\lbrace t\in \lbrack 0,T\rbrack: \max_{i\in \mathcal{M}_{s}^{N}}\sup_{0\leq s\leq t}|\lbrack \Psi_{s,0}^N(X)\rbrack_i-\varphi_{s,0}^N(X_i)|\leq \delta_{sb}^N=N^{-\frac{1}{6}-\sigma}\rbrace.
\end{align*}
The stopping time for the whole system is given by 
\begin{align}\label{stoppingtime}
\tau^N(X)\coloneqq  \min(\tau_g^N(X),\tau_b^N(X),\tau_{s}^N(X)),
\end{align}
where $\delta_g^N=N^{-\beta}$, $\delta_b^N=N^{-d_b}$ and $\delta_{s}^N=N^{-d_{s}}.$\\
We will see that configurations fulfilling $\tau^N(X)<T$ become sufficiently small in probability for large values of $N$ and hence Theorem \ref{maintheorem} follows.\\ 
The main part of the proof is based on the application of Gronwall's Lemma to show that $\sup_{0\leq s\leq t}|\lbrack \Psi_{s,0}^N(X)\rbrack_i-\varphi_{s,0}^N(X_i)|_{\infty}$ stays typically small for large $N$.\\
Therefore we estimate the right derivative of $\sup_{0\leq s\leq t}|\lbrack \Psi_{s,0}^N(X)\rbrack_i-\varphi_{s,0}^N(X_i)|$, which is given by
\begin{align*}
& \frac{d}{dt_+}\sup_{0\le s\le t}\left|\left[\Psi^{1,N}_{s,0}(X)\right]_i-{\varphi^{1,N}_{s,0}}(X_i)\right|\\ \leq &  \left|\left[\Psi^{2,N}_{t,0}(X)\right]_i-{\varphi^{2,N}_{t,0}}(X_i)\right|\\
\leq &\left|\int_{0}^t\frac{1}{N}\sum_{j\neq i}f^{N}\left(\left[\Psi^{1,N}_{s,0}(X)\right]_i-\left[\Psi^{1,N}_{s,0}(X)\right]_j\right)-f^{N}*\widetilde{k}^N_s({\varphi^{1,N}_{s,0}}(X_i))ds\right|.
\end{align*}
For technical reasons we will distinguish between observing a `good', `bad' or `superbad' particle for further estimation of this expression.

\subsection{Controlling the deviations of good particles}\label{case1}
In the first Section we focus on the case, that the considered particle $X_i$ is `good' and use a similar proof technique as presented in \cite{BoersPickl,Dustin,grass}.
First we break down the equation in terms of interaction partners. 
They themselves can be `superbad', `bad' or `good' relative to $X_i$. 
Of course the set of particles having a bad or superbad interaction is empty in this case as having an unpleasant collision is symmetrical and consequently the underlying term will vanish later, but still, it will be technically useful to split the equation in that way.

Let $i\in \mathcal{M}_g^N(X)$ and $0\leq t_1\leq t\leq T$ 
{\allowdisplaybreaks \begin{align}
 & \left|\int_{t_1}^t\frac{1}{N}\sum_{j\neq i}f^{N}([\Psi^{1,N}_{s,0}(X)]_i-[\Psi^{1,N}_{s,0}(X)]_j)-f^{N}*\widetilde{k}^N_s({\varphi^{1,N}_{s,0}}(X_i))ds\right| \label{terminitial}\\
\le & \left|\int_{t_1}^t\frac{1}{N}\sum_{j\neq i}f^{N}([\Psi^{1,N}_{s,0}(X)]_i-[\Psi^{1,N}_{s,0}(X)]_j)\mathds 1_{(G^N(X_i))^C}(X_j)ds\right| \notag\\
& + \left|\int_{t_1}^t\Big(\frac{1}{N}\sum_{j\neq i}f^{N}([\Psi^{1,N}_{s,0}(X)]_i-[\Psi^{1,N}_{s,0}(X)]_j)\mathds 1_{G^N(X_i)}(X_j)\right. \notag \\
& \ \ \ \ \left. -f^{N}*\widetilde{k}^N_s({\varphi^{1,N}_{s,0}}(X_i))\Big)ds\right| .\label{terminitialsplit}
\end{align}

Using triangle inequality in the last two lines of Equation \ref{terminitialsplit} one gets that the previous Term \ref{terminitial} is bounded by
\begin{align}
 & \left|\int_{t_1}^t\frac{1}{N}\sum_{j\neq i}f^{N}([\Psi^{1,N}_{s,0}(X)]_i-[\Psi^{1,N}_{s,0}(X)]_j)\mathds 1_{(G^N(X_i))^C}(X_j)ds\right| \label{eq:good1} \\
& + \left|\int_{t_1}^t\frac{1}{N}\sum_{j\neq i}\Big(  f^{N}([\Psi^{1,N}_{s,0}(X)]_i-[\Psi^{1,N}_{s,0}(X)]_j)\mathds 1_{G^N(X_i)}(X_j)  \notag\right. \\
& \ \ \ \ \left.-f^{N}(\varphi^{1,N}_{s,0}(X_i)-{\varphi^{1,N}_{s,0}}(X_j))\mathds 1_{G^N(X_i)}(X_j)\Big) ds\right| \label{eq:good2} \\
& +\left| \int_{t_1}^t \frac{1}{N}\sum_{j\neq i}f^{N}(\varphi^{1,N}_{s,0}(X_i)-{\varphi^{1,N}_{s,0}}(X_j))\mathds 1_{G^N(X_i)}(X_j)ds\right. \notag \\
& \ \ \ \ \left.-\int_{t_1}^t\int_{\R^6}f^N({\varphi^{1,N}_{s,0}}(X_i)-{\varphi^{1,N}_{s,0}}(Y))\mathds 1_{G^N(X_i)}(Y)k_0(Y)d^6Yds\right| \label{eq:good3}  \\
& +\left|\int_{t_1}^t\int_{\R^6}f^N({\varphi^{1,N}_{s,0}}(X_i)-{\varphi^{1,N}_{s,0}}(Y))\mathds 1_{G^N(X_i)}(Y)k_0(Y)d^6Yds\notag\right.\\
& \ \ \ \ \left. -\int_{t_1}^tf^{N}*\widetilde{k}^N_s(\varphi^{1,N}_{s,0}(X_i))ds\right| \label{eq:good4} 
\end{align}

\subsubsection{Estimate of Term \ref{eq:good1} and Term \ref{eq:good4}}
Recall that $i\in \mathcal{M}_g^N(X)$ and that the set $(G^N(X_i))^c=M^N_{6r_b,v_b}$ includes all particles which come close to $X_i$ while having small relative velocity. 
Thus the characteristic function $\mathds 1_{(G^N(X_i))^c}(X_j)=0$ for $i\in \mathcal{M}_g^N(X)$ and therefore Term \ref{eq:good1} vanishes and we are left to estimate Term \ref{eq:good4}. 
For the Lebesgue measure preserving diffeomorphism the following holds
\begin{align*}
 f^{N}*\widetilde{k}^N_s({\varphi^{1,N}_{s,0}}(X_i))
&= \int_{\R^6}f^{N}({\varphi^{1,N}_{s,0}}(X_i)-{^1Y})k^N_s(Y)d^6Y\\
&= \int_{ \R^6}f^{N}({\varphi^{1,N}_{s,0}}(X_i)-{\varphi^{1,N}_{s,0}}(Y))k_0(Y)d^6Y.
\end{align*}
So we get for Term \ref{eq:good4}
\begin{align*}
&\left|\int_{t_1}^t\int_{ \R^6}f^{N}(\varphi^{1,N}_{s,0}(X_i)-{\varphi^{1,N}_{s,0}}(Y))k_0(Y)\mathds 1_{G^N(X_i)}(Y)d^6Yds\right. \notag \\
&\left.\qquad\qquad-\int_{t_1}^tf^{N}*\widetilde{k}^N_s({\varphi^{1,N}_{s,0}}(X_i))ds\right| \notag \\
= & \big|\int_{t_1}^t\int_{ \R^6}f^{N}(\varphi^{1,N}_{s,0}(X_i)-{\varphi^{1,N}_{s,0}}(Y))k_0(Y)(\mathds 1_{G^N(X_i)}(Y)-1)d^6Yds\big| \notag \\
\le & T\|f^{N}\|_{\infty}\int_{ \R^6}\mathds 1_{(G^N(X_i))^C}(Y)k_0(Y)d^6Y \notag \\
\le & TN^{2\beta}\mathbb{P}\big(Y \in \R^6:Y\notin G^N(X_i) \big)\\
\le&TN^{2\beta}\mathbb{P}\big(Y\in \R^6:Y\in M^N_{r_b,v_b}(X_i)  \big)\\
\le& CTN^{2\beta} N^{-2b_r-4b_v}
\end{align*}
This is small under a suitable choice of parameters.
\subsubsection{Law of large numbers for Term \ref{eq:good2} and Term \ref{eq:good3}}
For the remaining Terms \eqref{eq:good2} and \eqref{eq:good3} we provide a version of law of large numbers which takes into account the different types of collision classes which could occur. 
Each collision type has a different impact on the force and a certain probability. 
For that reason it is useful for the estimates to distinguish between them.
\begin{theorem}\label{Prop:LLN} Let $\delta,C_0>0$, $N{\in \N} $ and let $(X_{k})_{k\in \N}$ be a sequence of i.i.d. random variables $X_k:\Omega \to \R^6$ distributed with respect to a probability density $k\in \mathcal{L}^1(\R^6)$. 
Moreover, let $(M^N_i)_{i\in I}$ be a family of (possibly $N$-dependent) sets $M^N_i\subseteq \R^6$ fulfilling $\bigcup_{i \in I}M^N_i=\R^6$ where $|I|<C_0$ and $h_N\coloneqq  \R^6\to \R$ are measurable functions which fulfil on the one hand $\|h_N\|_{\infty}\le C_0N^{1-\delta}$ and on the other hand \[\max_{i \in I}\int_{M^N_i}h_N(X)^2k(X)d^6X\le C_0N^{1-\delta}.\]
Then for any $\gamma>0$ there exists a constant $C_{\gamma}>0$ such that for all $N\in\N$
	\begin{align} \mathbb{P}_t\Bigl[ \Bigl\lvert \frac{1}{N}\, \sum\limits_{j=1}^N h_N(X_j) -\int_{\R^6} h_N(X)k_t (Z)d^6X \Bigr\rvert \geq 1\Bigr] \leq \frac{C_{\gamma}}{N^\gamma}.
	\end{align}
\end{theorem}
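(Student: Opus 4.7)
The statement is a quantitative law of large numbers for i.i.d.\ scalar random variables $h_N(X_j)$ whose common sup norm and variance are both controlled by $C_0 N^{1-\delta}$. Since the sum is normalized by $N$, typical deviations are of order $N^{-\delta/2}$, and deviations of order $1$ should be exponentially rare in $N^{\delta}$. Chebyshev alone only yields a polynomial rate $N^{-\delta}$, so to reach an arbitrary polynomial rate $N^{-\gamma}$ one must use either a Bernstein-type exponential inequality or an equivalent high-moment Markov estimate.

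\textbf{Key steps.} First, I would centre the variables by setting $Y_j \coloneqq h_N(X_j) - \int_{\R^6} h_N(X)k(X)\,d^6X$. These are i.i.d., have mean zero, and satisfy $|Y_j| \leq 2\|h_N\|_\infty \leq 2 C_0 N^{1-\delta}$. Second, I would upgrade the per-piece variance hypothesis to a global variance bound by using that the sets $M_i^N$ cover $\R^6$ with $|I| < C_0$ pieces, independently of $N$:
\begin{align*}
\E[Y_j^2] \leq \int_{\R^6} h_N(X)^2 k(X)\, d^6X \leq \sum_{i \in I} \int_{M_i^N} h_N(X)^2 k(X)\,d^6X \leq C_0^2 N^{1-\delta}.
\end{align*}
Third, I would apply Bernstein's inequality to $S_N \coloneqq \sum_{j=1}^N Y_j$ at level $t = N$, obtaining
\begin{align*}
\p\!\left[|S_N| \geq N\right] \leq 2 \exp\!\left(-\frac{N^2/2}{N\cdot C_0^2 N^{1-\delta} + (2C_0/3)\, N^{1-\delta}\cdot N}\right) \leq 2 \exp(-c N^{\delta})
\end{align*}
for a constant $c>0$ depending only on $C_0$. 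For every $\gamma>0$, the right-hand side is bounded by $C_\gamma N^{-\gamma}$ once $N$ is large enough. Dividing by $N$ translates this into the claimed bound on $|N^{-1}\sum h_N(X_j) - \int h_N k|$.

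\textbf{Main obstacle.} There is no deep analytical obstruction; this is essentially a textbook concentration result. The only point requiring care is the covering hypothesis $|I|<C_0$, which is precisely what converts the per-set variance bound into a \emph{global} one of the same order $N^{1-\delta}$; if $|I|$ were allowed to grow with $N$, the conclusion would weaken. An equivalent route that avoids exponential moments, and stays closer to the style of the rest of the paper, is a Rosenthal / Marcinkiewicz--Zygmund estimate: independence and the mean-zero property imply that only index tuples in which every element appears at least twice contribute to $\E[S_N^{2k}]$, which gives $\E[S_N^{2k}] \leq C_k N^{2k-k\delta}$, and Markov's inequality with any $k > \gamma/\delta$ closes the claim.
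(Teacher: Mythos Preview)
Your proposal is correct. The paper's own argument follows exactly the high-moment Markov route you sketch as your alternative: it applies Markov's inequality with exponent $2M$, expands $\bigl(\sum_j Y_j\bigr)^{2M}$ over multiindices $\gamma$, drops all terms containing some $\gamma_i=1$ by independence and mean zero, bounds the surviving moments using $\|h_N\|_\infty\le C_0N^{1-\delta}$ together with the variance estimate, counts multiindices, and arrives at a bound of order $N^{-\delta M}$, then chooses $M$ large. The covering hypothesis $|I|<C_0$ is used in the paper in precisely the way you describe, namely to pass from the per-piece second-moment bound to $\int_{\R^6}h_N^2 k\le C_0^2 N^{1-\delta}$.

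Your primary route via Bernstein's inequality is a genuine and legitimate alternative: it packages the same two ingredients (uniform sup bound $\sim N^{1-\delta}$ and global variance $\sim N^{1-\delta}$) into a single exponential tail bound $\exp(-cN^{\delta})$, which is strictly stronger than any polynomial rate and shorter to write down. The paper's combinatorial expansion buys nothing extra here beyond self-containedness; conversely, your Bernstein argument requires quoting an off-the-shelf concentration inequality. Either way the substantive point is the same, and you have correctly identified that the role of the finite cover is exactly to globalize the variance bound with an $N$-independent loss factor.
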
	
\begin{proof}
 By Markov's inequality, we have for every $M \in \N$: 
	\begin{align} 
	&\mathbb{P}_t\Bigl[ \Bigl\lvert \frac{1}{N}\, \sum\limits_{j=1}^N h_N(X_j) -\int_{\R^6} h_N(X)k_t (X)d^6X\rvert\geq 1 \Bigr]\\
	&\leq \mathbb{E}\Bigl[ N^{-2M}\, \Bigl\lvert \frac{1}{N}\, \sum\limits_{j=1}^N h_N(X_j) - \int_{\R^6} h_N(X)k_t (X)d^6X \Bigr\rvert^{2M} \Bigr],
	\end{align}
	where $\mathbb{E}\lbrack\cdot\rbrack$ denotes the expectation with respect to the N-fold product of $k$.
	
	\noindent Let $\mathcal{M} \coloneqq   \lbrace \mathbf{\gamma} \in \N_0^N \mid \lvert \mathbf{\gamma} \rvert = 2M \rbrace$ be the set of multiindices $\mathbf{\gamma} = (\gamma_1, ..., \gamma_N)$ with $\sum\limits_{i=1}^{N} \gamma_i = 2M$. Let
	\begin{align*} G_\mathbf{\gamma}(X) \coloneqq   \prod \limits_{j=1}^N \bigl( h_N(X_j) -\int_{\R^6} h_N(X)k_t (X)d^6X)^{\gamma_i}. \end{align*}
	\noindent Then 
	\begin{align*} &N^{-2M}\mathbb{E} \Bigr[ \Bigl(\sum\limits_{j=1}^N h_N(X_j) -\int_{\R^6} h_N(X)k_t (X)d^6X\bigr) \Bigr)^{2M}\Bigr] \\
	&\leq N^{-2M}\sum_{\gamma_1,\hdots,\gamma_N\in\mathcal{M}} \mathbb{E} \Bigr[ \Bigl(G_\mathbf{\gamma}(X))^{\gamma_j}\Bigr]. \end{align*}
	
	\noindent Note that $\mathbb{E}(G_\mathbf{\gamma }) = 0$ whenever there is a $1 \leq i \leq N$ such that $\gamma_i =1$. 
	This can be seen by integrating the $i$'th variable first.\\

	\noindent For the remaining  terms, we have for any $1 \leq m \leq M$:
	\begin{align*} \lvert \bigl( h_N(X_j) -\int_{\R^6} h_N(X)k_t (X)d^6X\bigl)^{\gamma_i}\rvert \leq2^{\gamma_j} |h_N(X_j)|^{\gamma_j} +|\int_{\R^6} h_N(X)k_t (X)d^6X| ^{\gamma_j} .
	\end{align*}
	As $||h_N||\leq C_0N^{1-\delta}$, it follows for $m\geq 2$ 
	
	\begin{align*}
	&\int_{\R^6} \lvert h\rvert^m (X)k_t(X) \, \mathrm{d}^6X \leq C_0 \max_{i\in I}\int_{M_i^N} \lvert h\rvert^m (X)k_t(X) \, \mathrm{d}^6X\\
	\leq &C_0 ||h_N||_{\infty}^{m-2}\max_{i\in I}\int_{M_i^N} h_N(X)^2k(X)\mathrm{d}^6X\leq C_0 (C_0^{m-2}N^{(m-2)(1-\delta)})(C_0N^{1-\delta})
	\end{align*} 
	Let $R\coloneqq  \sqrt{\int_{\R^6}h_N^2(X)\mathrm{d}^6X}$, then it holds that
	\begin{align*}
	&\int_{\mathbb{R}^6} \lvert h\rvert (X)k_t(X) \, \mathrm{d}^6X \leq \frac{1}{R}\underbrace{ \int_{\mathbb{R}^6}  h^2 (X)k_t(X) }_{=R^2}+\underbrace{ \int_{\mathbb{R}^6}  |h (X)|\mathds 1_{\lbrack 0,R\rbrack}k_t(X) }_{\leq R}
\\
&\leq 2\big(C\max_{i\in I}\int_{M_i^N} h_N^2(X)k(X)\mathrm{d}^6X\big)^{\frac{1}{2}}	\leq C M^{\frac{1}{2}(1-\delta)}.
	\end{align*}
Since the constraints on the maps $h_N$ become more stringent with an increase in the chosen value of $\delta$, we can restrict our consideration to specific values, such as the interval $(0,1]$.
	If we additionally identify $|\gamma|\coloneqq  |\{i\in\{1,...,N\}: \gamma_i\neq 0\}|$ and recall that only tuples matter where $\gamma_i \neq 1\ $ for all $i\in \{1,...,N\}$ as well as $\sum_{i=1}^N\gamma_i=2M$, then application of these estimates and relations above yield that for all other multiindices, we get
	\begin{align*}\label{LLNestimate} \mathbb{E}_t (G^\mathbf{\gamma}) \leq  &\prod \limits_{j=1:\gamma_i\geq 2}^N  \big(C^{\gamma_i}N^{(\gamma_i-2)(1-\delta)}N^{1-\delta} \big)\leq C^{2M}N^{2M(1\delta)}N^{|\gamma|(\delta-1)},
	\end{align*}
	by using that the particles are statistically independent.
	\noindent Finally, we observe that for any $l \geq 1$, the number of multiindices $\mathbf{\gamma} \in \mathcal{M} $ with $\#\mathbf{\gamma} = l$ is bounded by
	\begin{equation*} \sum\limits_{\#\mathbf{\gamma} = l} 1 \leq \binom{N}{l} (2M)^l  \leq (2M)^{2M} N^l. \end{equation*}
	
	\noindent Thus
	\begin{align*}  
	&\frac{1}{N^{2M}} \sum\limits_{\mathbf{\gamma} \in \mathcal{M}} \mathbb{E}(G^\mathbf{\gamma})\\
	&\leq  \frac{N^{2M(1-\delta)}}{N^{2M}}\, \sum_{\gamma\in\mathcal{M}} C^{M}N^{|\gamma|(\delta-1)}\\
	&\leq C^{M} N^{-2M(\delta}\,\sum_{k=1}^{M} N^{k}(2M)^MN^{k(\delta-1)} \\
	&\leq (CM)^{M} N^{-\delta M},
	\end{align*}
	where $C$ is some constant depending on $M$. 
	Choosing $M$ arbitrary large proofs the Theorem.
\end{proof}
\subsubsection{Estimate of Term \ref{eq:good3}}
It is left to show that the third Term \ref{eq:good3}, respectively 
\begin{align*}
 \left|\int_{t_1}^t  \frac{1}{N}\sum_{j\neq i}f^{N}\right.&\left(\varphi^{1,N}_{s,0}\left(X_i\right)-{\varphi^{1,N}_{s,0}}\left(X_j\right)\right) \mathds 1_{G^N\left(X_i\right)}\left(X_j\right)\\
&\left. -\int_{\mathbb{R}^6}^{\;}f^N ({\varphi^{1,N}_{s,0}}(X_i)-{\varphi^{1,N}_{s,0}}(Y))\mathds 1_{G^N(X_i)}(Y)k_0 (Y)d^6Y  ds \right|
\end{align*}
  stays small for typical initial data. 
	Analogously to the function $h_N$ from Theorem \ref{Prop:LLN}, we define for arbitrary $Y\in\mathbb{R}^6$ the function
\begin{align}\label{h1,N}
h_{1,N}^t(y,\cdot):\mathbb{R}^6\rightarrow\mathbb{R}^3,Z\mapsto N^{\alpha}\int_{0}^{t}f^{N}(\varphi^{1,N}_{s,0}(Y)-{\varphi^{1,N}_{s,0}}(Z))ds \mathds 1_{G^N(Y)}(Z),
\end{align}	
with $0<\alpha\leq \frac{5}{12}$ or more precisely $0<\alpha=\beta+\sigma$.
As $h_{1,N}^t(Y,\cdot)$ does not map to $\mathbb{R}$ as assumed in Theorem \ref{Prop:LLN} it can still be applied on each component separately. 
If it holds for each component then it holds for the related vector valued map.
The fact that the Theorem only makes statements for certain points in time will be generalized later.

We are left to check if the assumptions of Theorem \ref{Prop:LLN} on the force term are fulfilled. 
Therefore we abbreviate $\tilde{r}\coloneqq  \max(r,N^{-\beta})$ for $r\geq 0$ and we obtain by Corollary \ref{corollary phi and  psi} and Lemma \ref{Prob of group} for $0\leq v\leq V$, $0\leq r\leq R$ and $\lambda=2 $ that
{\allowdisplaybreaks
\begin{align}\label{Var(f)}
&\int_{M^N_{(r,R),(v,V)}(Y)}\left(\int_{0}^t |f^{N}(\varphi^{1,N}_{s,0}(Z)-{\varphi^{1,N}_{s,0}}(Y))|ds\right)^2k_0(Z) d^6Z \notag \\
\le & C\big(\min(\frac{1}{\widetilde{r}^{\lambda}},\frac{1}{\widetilde{r}^{\lambda-1}v})\big)^{2}\int_{M^N_{(r,R),(v,V)}(Y)}k_0(Z) d^6Z\notag \\
\le &C\min \big(\frac{1}{\widetilde{r}^{2\lambda}},\frac{1}{\widetilde{r}^{2(\lambda-1)}v^2}\big)\min \big(1,R^2,R^2V^4+R^3\max(V^3,R^3)\big)\notag \\
\le &C\min\big(\frac{1}{\widetilde{r}^{2(\lambda-1)}v^2},\frac{R^2}{\widetilde{r}^{2(\lambda-1)}v^2},\frac{R^2V^4}{\widetilde{r}^{2(\lambda-1)}\max(\widetilde{r},v)^2}+\frac{R^6}{\widetilde{r}^{2\lambda}}\big) \notag\\
\le &C\min\big(\frac{1}{\widetilde{r}^2v^2},\frac{R^2}{\widetilde{r}^{2}v^2},\frac{R^2V^4}{\widetilde{r}^{2}\max(\widetilde{r},v)^2}+\frac{R^6}{\widetilde{r}^4}\big) .
\end{align}}

Let us define a suitable cover of $\mathbb{R}^6$, i.e. the collision classes, in order to apply Theorem \ref{Prop:LLN}. 
The classes are chosen finer as the collision strength becomes larger. 
If the particles keep distance of order 1 no splitting will be necessary.
Let therefore be $k,l\in\mathbb{Z},N\in\N\setminus \lbrace 1\rbrace,\delta>0$ and $0\leq r,v\leq 1$ and the family of sets given by
\begin{align} \label{Family of sets}
(i)\ &M_{(0,r)(0,v)}^{N}(Y)  &(ii) \  &M_{(0,r)(N^{l\delta}v,N^{N(l+1)\delta}v)}^{N}(Y)& \\ \nonumber
(iii) \ &M_{(0,r)(1,\infty)}^{N}(Y)  & (iv) \ &M_{(N^{k\delta}r,N^{N(k+1)\delta}r)(0,v)}^{N}(Y)& \\ \nonumber
(v) \ &M_{(N^{k\delta}r,N^{N(k+1)\delta}r)(N^{l\delta}v,N^{N(l+1)\delta}v)}^{N}(Y)& (vi) \ &M_{(N^{k\delta}r,N^{(k+1)\delta}r)(1,\infty)}^{N}(Y) \\\nonumber
(vii) \ &M_{(N^{-\delta}r,\infty)(0,\infty)}^{N}(Y), &
\end{align}
for $ 0\leq k\leq \lfloor\frac{\ln(\frac{1}{r})}{\delta\ln(N)}\rfloor,0\leq l\leq \lfloor\frac{\ln(\frac{1}{v})}{\delta\ln(N)}\rfloor$.
In this case we choose $r=v=N^{-\beta}$ and the number of sets belonging to this list is some integer $I_{\delta}$ independent of $N$.

We will apply \ref{Var(f)} for each collision class family and get the bounds
\begin{align*}
(i)\ &\frac{(N^{-\beta})^6}{(N^{-\beta})^4}=N^{-2\beta}  &(ii) \  &\frac{N^{-2\beta}N^{4\lbrack(k+1)\delta-\beta\rbrack}}{N^{-2\beta}N^{2(k\delta-\beta)}}=N^{-2\beta+2k\delta+4\delta} & \\
(iii) \ &\frac{(N^{-\beta})^2}{(N^{-\beta})^2}=1  & (iv) \ &\frac{N^{6(k\delta-\beta)}}{N^{4(k\delta-\beta)}}=N^{-2\beta+2k\delta+6\delta}& \\
(v) \ &\frac{N^{2(k\delta+\delta-\beta)}
N^{4(l\delta+\delta-\beta)}}{N^{2(k\delta-\beta)}N^{2(l\delta -\beta)}}+\frac{N^{6(k\delta-\beta)}}{N^{4(k\delta -\beta)}}=\mathrlap{ N^{-2\beta+2l\delta+6\delta}+N^{-2\beta+2k\delta+6\delta}} & \\
(vi) \ &\frac{(N^{k\delta}N^{-\beta})^2}{(N^{k\delta -\beta})^2}=N^{2\delta} & (vii) \ &\frac{1}{(N^{-\delta})^4}=N^{4\delta} 
\end{align*}
for $0\leq k,l\leq \lfloor\frac{\beta}{\delta}\rfloor$.
All these terms are bounded by $N^{6\delta}$.

For a law of large numbers argument we need
\begin{align*}
\|h_{1,N}\|_{\infty}\le C_0N^{1-\delta} \ \text{and } \
\max_{i \in I}\int_{M^N_i}h_{1,N}(X)^2k(X)d^6X\le C_0N^{1-\delta}.
\end{align*} 
Due to the estimates for each collision class it follows for all $i\in I$
\begin{align*}
& \int_{M^N_{(r_i,R_i),(v_i,V_i)}(Y)}h^t_{1,N}(Y,Z)^2k_0(Z)d^{6}Z\le CN^{2\alpha}N^{6\delta}
\le CN^{2 (3\delta+\alpha)} .
\end{align*} 
For $\delta>0$ small enough and due to the fact that $\alpha=\beta+\sigma$ it follows that $6\delta+2\alpha<1$ and the first assumption of Theorem \ref{Prop:LLN} is fulfilled as $\beta<\frac{1}{2}-3\delta.$\\
It holds due to Corollary \ref{corollary phi and  psi} that for a point in time $t_{min}$, where the mean-field particles are close
\begin{align}
& \int_{0}^t|f^{N}(\varphi^{1,N}_{s,0}(Y)-{\varphi^{1,N}_{s,0}}(Z))|\mathds 1_{G^N(Z)}(Y)ds \notag \\
\le & \min\Big(\frac{Ct}{|\varphi^{1,N}_{t_{min},0}(Y)-{\varphi^{1,N}_{t_{min},0}}(Z)|^2},\frac{CN^{\beta}}{| \varphi^{2,N}_{t_{min},0}(Y)-{\varphi^{2,N}_{t_{min},0}}(Z)|},\notag\\
&\frac{C}{|\varphi^{1,N}_{t_{min},0}(Y)-{\varphi^{1,N}_{t_{min},0}}(Z)|\cdot|\varphi^{2,N}_{t_{min},0}(Y)-{\varphi^{2,N}_{t_{min},0}}(Z)|}\Big)\mathds 1_{G^N(Z)}(Y) .
\end{align} 
This is where we break down the time integral into several parts. 
If $v$ is large, the assumptions of Theorem \ref{Prop:LLN} are fulfilled directly. 
If $v$ is small we made use of the fact that the collision time is not very large.
Remember the definition of the `good' set
\begin{align*}
G^N(Z)\coloneqq  \Big((M^N_{r_b,v_b}(Z)\setminus M^N_{ r_s,v_s}(Z)) \cup  M^N_{ r_s,v_s}(Z)\Big)^C.
\end{align*}
For $x_{min}\coloneqq  |\varphi^{1,N}_{t_{min},0}(Y)-{\varphi^{1,N}_{t_{min},0}}(Z)|$ and $v_{min}\coloneqq  |\varphi^{2,N}_{t_{min},0}(Y)-{\varphi^{2,N}_{t_{min},0}}(Z)|$ the following implication holds due to the definition of $G^N(Z)$ 
\begin{align}
 &x_{min}\leq N^{-r_s} \Rightarrow v_{min}\geq N^{-v_b}\label{case_1}\\
&N^{-s_r}\leq x_{min}\leq N^{-b_r}\Rightarrow v_{min}\geq N^{-b_v}\label{case_2}\\
&N^{-b_r}\leq x_{min}\Rightarrow v_{min}\in \mathbb{R}^+\label{case_3}
\end{align}
and thus the term is bounded in the first case \eqref{case_1} by
\begin{align*}
CN^{\beta+b_v}.
\end{align*}
for the second case \eqref{case_2}, the term is bounded by 
\begin{align*}
\min(CN^{\beta+b_v},CN^{s_r+b_v}).
\end{align*}
And for the last case \eqref{case_3} we get a bound of
\begin{align*}
CtN^{2b_r}.
\end{align*}         
As $\alpha=\beta+\sigma$ from Theorem \ref{Prop:LLN} the term is bounded by 
\begin{align*}
CtN^{2b_r}+CN^{\beta+b_v}.
\end{align*}
The second upper bound controls the cases where $x_{min}\leq 6N^{-b_r}$.
This yields for small enough $\sigma>0$ and $\beta+\alpha+b_v<1$ that 
\begin{align*}
||h_{1,N}^t(Y,\cdot)||_{\infty}\leq N^{\alpha}C (N^{2b_r}+N^{\beta+b_v})\leq CN^{1-\sigma}.
\end{align*}
We now apply our estimate on $h_{1,N}^t(y)$ defined in \eqref{h1,N} to control Term \ref{eq:good3}.
 Therefore we introduce the set $\mathcal{B}_{1,i}^{N,\sigma}\subset\mathbb{R}^{6N},i\in\lbrace 1,\hdots, N\rbrace$:
\begin{align}
\begin{split}
& X\in \mathcal{B}_{1,i}^{N,\sigma}\subseteq \mathbb{R}^{6N}\\ \label{set.B_1 good}
\Leftrightarrow & \exists t_1,t_2\in [0,T]:\\
& \Big|\frac{1}{N}\sum_{j\neq i}\int_{t_1}^{t_2}f^{N}(\varphi^{1,N}_{s,0}(X_i)-{\varphi^{1,N}_{s,0}}(X_j))\mathds 1_{G^N(X_i)}(X_j)ds\\
&  -\int_{\mathbb{R}^6}\int_{t_1}^{t_2} f^{N}(\varphi^N_{s,0}(X_i)-\varphi^N_{s,0}(Y))\mathds 1_{G^N(X_i)}(Y)dsk_0(Y)d^6Y\Big| >N^{-\alpha}= N^{-\beta-\sigma}.
\end{split}
\end{align}
The law of large numbers makes only statements for certain points in time. However, on very short time intervals fluctuations cannot change significantly since the force is bounded due to the cut off by $N^{2\beta}$.  This allows us to estimate fluctuations uniformly in time. 
By the definition of the set $\mathcal{B}_{1,i}^{N,\sigma}$ and by the fact that any continuous map $a:\mathbb{R}\rightarrow\mathbb{R}^m$ fulfills
\begin{align*}
& \big|\int_{t_1}^{t_2 }a(s)ds\big| = \big|\int_{0}^{t_2 }a(s)ds-\int_{0}^{t_1 }a(s)ds\big| \notag \\
\le & \big|\int_{0}^{\lfloor \frac{t_2}{\delta_N}\rfloor \delta_N }a(s)ds\big|+\int_{\lfloor \frac{t_2}{\delta_N}\rfloor \delta_N }^{t_2}|a(s)|ds +  \big|\int_{0 }^{\lfloor \frac{t_1}{\delta_N}\rfloor \delta_N}a(s)ds\big|+\int_{\lfloor \frac{t_1}{\delta_N}\rfloor \delta_N }^{t_1}|a(s)|ds \notag \\
\le & 2\max_{k\in \{0,...,\lfloor \frac{T}{\delta_N}\rfloor \}}\Big(\big|\int_{0 }^{k \delta_N}a(s)ds\big| +\int_{k \delta_N }^{(k+1) \delta_N }|a(s)|ds\Big),
\end{align*}
for $m\in\N,t_1,t_2\in\lbrack 0,T\rbrack$ it follows for $\delta_N>0$ that
\begin{align*}
& X\in \mathcal{B}_{1,i}^{N,\sigma} \notag  \\ 
\Rightarrow & \exists k\in \{0,...,\lfloor \frac{T}{\delta_N } \rfloor \}:  \notag \\
&  \Big(\big|\int_{0}^{k \delta_N }\Big(\frac{1}{N}\sum_{j\neq i}f^{N}(\varphi^{1,N}_{s,0}(X_i)-{\varphi^{1,N}_{s,0}}(X_j))\mathds 1_{G^N(X_i)}(X_j)  \notag \\
& -\int_{\mathbb{R}^6} f^{N}(\varphi^{1,N}_{s,0}(X_i)-{\varphi^{1,N}_{s,0}}(Y))\mathds 1_{G^N(X_i)}(Y)k_0(Y)
d^6Y\Big) ds\big| \geq \frac{N^{-\frac{5}{12}}}{4}\Big) \ \vee  \notag \\
& \Big(\int_{k \delta_N }^{(k+1) \delta_N }\Big(\big|\frac{1}{N}\sum_{j\neq i}f^{N}(\varphi^{1,N}_{s,0}(X_i)-{\varphi^{1,N}_{s,0}}(X_j))\mathds 1_{G^N(X_i)}(Y)
\big| \notag \\
& +\big|\int_{\mathbb{R}^6} f^{N}(\varphi^{1,N}_{s,0}(X_i)-{\varphi^{1,N}_{s,0}}(Y))\mathds 1_{G^N(X_i)}(Y)k_0(Y)
d^6Y\big|\Big) ds \geq \frac{N^{-\frac{5}{12}}}{4}\Big)
\end{align*} 
If we choose $\delta_N\coloneqq  \frac{N^{-\alpha}}{8||f_N||_{\infty}}\leq CN^{-\alpha-\beta\lambda}=N^{-\alpha-2\beta}=N^{-3\beta-\sigma}$ the second constraint of the assumption is true. 
For the current estimate we assumed that all particles form a single cluster because it is sufficient for our estimates. We could choose $\delta_N$ of much larger order.

According to the previous reasoning for at least one $k\in\lbrace 0,\hdots,\lfloor\frac{T}{\delta_N}\rfloor\rbrace$ the event related to the first constraint must occur if $X\in\mathcal{B}_{1,i}^{N,\sigma}$, but the law of large numbers tells us that for any of these events and any $\gamma>0$ there exists a $C_{\gamma}>0$ such that its probability is smaller than $C_{\gamma}N^{-\gamma}$ since $h_{1,N}^t(Y,\cdot)$ fulfils the assumptions of Theorem \ref{Prop:LLN}.

As $\beta=\frac{5}{12}-\sigma$ and $\alpha=\beta+\sigma$ the number of such events is bounded by $$\lfloor\frac{T}{\delta_N}\rfloor +1\le CN^{\alpha+2\beta} \leq CN^{\sigma+3\beta}=CN^{\frac{5}{4}+\sigma}$$ and thus it holds for all $N\in\N$ that
\begin{align*}
\mathbb{P}(\exists i\in\lbrace 1,\hdots,N\rbrace:X\in\mathcal{B}_{1,i}^{N\sigma})
&\leq N \mathbb{P}(X\in\mathcal{B}_{1,i}^{N\sigma}))\\
&\leq N \left(CN^{\frac{5}{4}}(C_{\gamma+\frac{9}{4}}N^{-(\gamma+\frac{9}{4})})\right)\\
&\leq C_{\gamma}N^{-\gamma}.
\end{align*}
For typical initial data and large enough $N\in\N$ Term \ref{eq:good3} stays smaller than $N^{-\frac{5}{12}+\sigma}$.

\subsubsection{Estimate of Term \ref{eq:good2}}\label{Term 2 Sec}
Let us estimate Term \ref{eq:good2}, i.e. the difference of the real force acting on the real particles and the real force acting on the mean-field particles
\begin{align*}
 &\left|\int_{t_1}^t\frac{1}{N}\sum_{j\neq i}\left( f^{N}\left([\Psi^{1,N}_{s,0}(X)]_i-[\Psi^{1,N}_{s,0}(X)]_j\right)\right.\right.\\
&\left.\left.\qquad\qquad\quad-f^{N}(\varphi^{1,N}_{s,0}(X_i)-{\varphi^{1,N}_{s,0}}(X_j))\right)\mathds 1_{G^N(X_i)}(X_j) ds\right|. 
\end{align*} 
We abbreviate the following notation for the allowed difference between mean-field particle and the real one, depending on the subset membership. 
We allow less control if the particle is bad but have strict requirements if the particle is good.
$\Delta^N_g(t,X)$ describes the largest spatial deviation of the `good' particles, $\Delta^N_b(t,X)$ the corresponding value for the `bad' ones and $\Delta^N_{sb}(t,X)$ the corresponding value for the `superbad' ones. 
The worse the subset (in the sense of `bad' or `superbad'), the more deviation is allowed.
\begin{align*}
 \Delta^N_g(t,X)&\coloneqq  \max_{j\in \mathcal{M}^N_g(X)}\sup_{0\le s\le t}\left|\left[\Psi^{1,N}_{s,0}(X)\right]_j-{\varphi^{1,N}_{s,0}}(X_j)\right|=N^{-\frac{5}{12}+\sigma} \notag \\
 \Delta^N_b(t,X)&\coloneqq  \max_{j\in \mathcal{M}^N_b(X)}\sup_{0\le s\le t}\left|\left[\Psi^{1,N}_{s,0}(X)\right]_j-{\varphi^{1,N}_{s,0}}(X_j)\right|=N^{-\frac{7}{24}-\sigma}  \notag \\
 \Delta^N_{sb}(t,X)&\coloneqq  \max_{j\in \mathcal{M}^N_{sb}(X)}\sup_{0\le s\le t}\left|\left[\Psi^{1,N}_{s,0}(X)\right]_j-{\varphi^{1,N}_{s,0}}(X_j)\right| =N^{-\frac{1}{6}-\sigma}. \notag 
\end{align*}
We further introduce a subset of the good particles $$\widetilde{G}^N(\cdot)\coloneqq  G^N(\cdot)\cap \big(M^N_{3N^{-\frac{1}{2}+\sigma},\infty}(\cdot)\big)^C$$ which helps us to shorten the upcoming estimates.
By definition of $\widetilde{G}^N(\cdot)$ (applied for the first inequality) and the stopping time $\tau^N(X)$
\begin{align*}
\tau_g^N\coloneqq  \sup\left\lbrace t\in \lbrack 0,T\rbrack: \max_{i\in \mathcal{M}_{g}^{N}}\sup_{0\leq s\leq t}\left|\left[ \Psi_{s,0}^N(X)\right]_i-\varphi_{s,0}^N(X_i)\right|\leq \delta_g^N\right\rbrace\\
\tau_b^N\coloneqq  \sup\left\lbrace t\in \lbrack 0,T\rbrack: \max_{i\in \mathcal{M}_{b}^{N}}\sup_{0\leq s\leq t}\left|\left[ \Psi_{s,0}^N(X)\right]_i-\varphi_{s,0}^N(X_i)\right|\leq \delta_b^N\right\rbrace\\
\tau_{sb}^N\coloneqq  \sup\left\lbrace t\in \lbrack 0,T\rbrack: \max_{i\in \mathcal{M}_{sb}^{N}}\sup_{0\leq s\leq t}\left|\left[ \Psi_{s,0}^N(X)\right]_i-\varphi_{s,0}^N(X_i)\right|\leq \delta_{sb}^N\right\rbrace
\end{align*}
as well as $\tau^N(X)\coloneqq \min(\tau_g^N(X),\tau_b^N(X),\tau_{sb}^N(X))$ with $\delta_g^N=N^{-\beta}=N^{-\frac{5}{12}+\sigma}$, $\delta_b^N=N^{-d_b}=N^{-\frac{7}{24}-\sigma}$ and $\delta_{sb}^N=N^{-d_{sb}}=N^{-\frac{1}{6}-\sigma}$ it holds for $X_j\in \widetilde{G}^N(X_i)$ and times $s\in [0,\tau^N(X)]$ that
\begin{align*}
& \max\big( 2N^{-\beta},\frac{2}3|{\varphi^{1,N}_{s,0}}(X_j)-{\varphi^{1,N}_{s,0}}(X_i)|\big)\geq  \max \big( 2N^{-\beta},2N^{-\frac{1}{2}+\sigma}\big) \geq 2\Delta^N_g(t,X).
\end{align*}
In the next step we subdivide the sum according to whether the particle interacting with $i$ is itself `superbad', `bad' or `good'.
Furthermore, the map $g^N$ was defined such that $|f^N(q+\delta)-f^N(q)|\le g^N(q)|\delta|$ for $q,\delta \in \mathbb{R}^3$ where $ \max\big(2 N^{-\beta},\frac{2}{3}|q|\big)\geq |\delta|$, see Definition \ref{force g}.
Thus the subsequent estimates are fulfilled for all times $0\le t_1\le t\le \tau^N(X)$.
\begin{align}
& \big|\int_{t_1}^t\Big(\frac{1}{N}\sum_{j\neq i}\Big(f^{N}([\Psi^{1,N}_{s,0}(X)]_j-[\Psi^{1,N}_{s,0}(X)]_i)  \notag  \\
&-f^{N}(\varphi^{1,N}_{s,0}(X_j)-{\varphi^{1,N}_{s,0}}(X_i))\Big)\mathds 1_{G^N(X_i)}(X_j)\Big) ds\big| \label{term0} \\
\le & \int_{0}^t\Big(\frac{1}{N}\sum_{\substack{j\neq i\\ j\in \mathcal{M}^N_{sb}(X)}}\Big(\big|f^{N}([\Psi^{1,N}_{s,0}(X)]_j-[\Psi^{1,N}_{s,0}(X)]_i) \notag  \\
&- f^{N}(\varphi^{1,N}_{s,0}(X_j)-{\varphi^{1,N}_{s,0}}(X_i))\big|\Big)\mathds 1_{G^N(X_i)}(X_j)\Big) ds\label{term a}\\
+ & \int_{0}^t\Big(\frac{1}{N}\sum_{\substack{j\neq i\\ j\in \mathcal{M}^N_{b}(X)}}\Big(\big|f^{N}([\Psi^{1,N}_{s,0}(X)]_j-[\Psi^{1,N}_{s,0}(X)]_i) \notag  \\
&- f^{N}(\varphi^{1,N}_{s,0}(X_j)-{\varphi^{1,N}_{s,0}}(X_i))\big|\Big)\mathds 1_{G^N(X_i)}(X_j)\Big) ds\label{term b}\\
&+\int_{0}^t\Big(\frac{1}{N}\sum_{\substack{j\neq i\\ j\in \mathcal{M}^N_g(X)}}\Big(\big|f^{N}([\Psi^{1,N}_{s,0}(X)]_j-[\Psi^{1,N}_{s,0}(X)]_i)\big| \notag  \\
&+ \big|f^{N}(\varphi^{1,N}_{s,0}(X_j)-{\varphi^{1,N}_{s,0}}(X_i))\big|\Big)\mathds 1_{G^N(X_i)\cap M^N_{3N^{-\frac{1}{2}+\sigma},\infty}(X_i)}(X_j)\Big) ds \label{term c}\\
 & + \int_{0}^t\frac{2}{N}\sum_{\substack{j\neq i\\ j\in \mathcal{M}^N_g(X)}}g^{N}(\varphi^{1,N}_{s,0}(X_j)-{\varphi^{1,N}_{s,0}}(X_i))\Delta^N_g(s,X)\mathds 1_{\widetilde{G}^N(X_i)}(X_j) ds. \label{term c2}
\end{align}
For the last term we applied the previous considerations and to estimate this one we define a set 
\begin{align}
\begin{split}
& X\in \mathcal{B}_{2,i}^{N,\sigma}\subseteq \mathbb{R}^{6N} \label{def.B_2 good}\\
\Leftrightarrow & \exists t_1,t_2\in [0,T]:\\
& \Big|\frac{1}{N}\sum_{j\neq i}\int_{t_1}^{t_2}g^{N}({\varphi^{1,N}_{s,0}}(X_j)-{\varphi^{1,N}_{s,0}}(X_i))\mathds 1_{\widetilde{G}^N(X_i)}(X_j)ds\\
&  -\int_{\mathbb{R}^6}\int_{t_1}^{t_2} g^{N}(\varphi^{1,N}_{s,0}(Y)-{\varphi^{1,N}_{s,0}}(X_i))\mathds 1_{\widetilde{G}^N(X_i)}(Y)dsk_0(Y)
d^6Y\Big| > 1
\end{split} 
\end{align}
For $Y,Z\in \mathbb{R}^6$ it holds by definition of $\widetilde{G}^N(\cdot)$ and the definition of $g^N$ (see \ref{force g}) that
\begin{align}
& \int_{0}^t g^{N}(\varphi^{1,N}_{s,0}(Y)-{\varphi^{1,N}_{s,0}}(Z)) \mathds 1_{\widetilde{G}^N(Z)}(Y)ds \notag \\
\leq  & C
N^{\beta}\int_{0}^t|f^{N}(\varphi^{1,N}_{s,0}(Y)-{\varphi^{1,N}_{s,0}}(Z))|\mathds 1_{G^N(Z)}(Y)ds\\
\leq  & C
N^{\frac{5}{12}-\sigma}\int_{0}^t|f^{N}(\varphi^{1,N}_{s,0}(Y)-{\varphi^{1,N}_{s,0}}(Z))|\mathds 1_{G^N(Z)}(Y)ds. \label{check.l.o.l.n for g}
\end{align}
Analogously to the previous section, Term \ref{check.l.o.l.n for g} fulfils the assumptions of Theorem \ref{Prop:LLN}. 
Following the same reasoning for the map $h^t_N(Y,\cdot)$ one can show that for an arbitrary $\gamma>0$ there exists a $C_{\gamma}>0$ such that for all $N\in \N$
\begin{align}
\mathbb{P}\big(\exists i\in \{1,...,N\}:X\in \mathcal{B}_{2,i}^{N,\sigma}\big)\le C_{\gamma} N^{-\gamma}. \label{prob.b.2}
\end{align}
It remains to determine an upper bound for the terms \eqref{term c}, \eqref{term a} and \eqref{term b}.\\
We start with the last two terms, which describe the interaction of a good particle with a superbad particle respectively bad one. 
We show that the `superbad' and `bad' particles do typically not infect the `good' ones which corresponds to deriving a suitable bound for Term \eqref{term a} and \eqref{term b}. 
Since the allowed maximal value for for the largest deviation of a `bad' or `superbad'  particle $\Delta^N_b(t,X)$ and $\Delta^N_s(t,X)$ is distinctly larger than the corresponding value for the good particle $\Delta^N_g(t,X)$, problems could arise if the number of `bad' or `superbad'  particles coming close to a `good' one exceeds a certain value. 
But we can show that the probability of such events is sufficiently small for large $N$.  

Analogously to the previous section we introduce $h^t_{2,N}(Y,\cdot)$ according to Theorem \ref{Prop:LLN} with
\begin{align}
h_{2,N}^t(y,\cdot):\mathbb{R}^6\rightarrow\mathbb{R}^3,Z\mapsto N^{\alpha}\int_{0}^{t}f^{N}(\varphi^{1,N}_{s,0}(Y)-{\varphi^{1,N}_{s,0}}(Z))ds \mathds 1_{G^N(Y)}(Z).
\end{align} 
Let us also implement a family of `collision classes' $\big(M^N_{(r_i,R_i),(v_i,V_i)}(Y)\big)_{i \in I_\delta}$ which covers $\mathbb{R}^6$ and check if $h^t_{2,N}(Y,\cdot)$ in combination with this cover fulfils the assumptions of Theorem \ref{Prop:LLN} to derive an upper bound for the terms \eqref{term a} and \eqref{term b}.  
Similar to the list stated in \eqref{Family of sets} we define $\big(M^N_{(r_i,R_i),(v_i,V_i)}(Y)\big)_{i \in I_{\delta}}$ for the parameters $r\coloneqq  r_b=6N^{-\frac{7}{24}-\sigma}$ and $v\coloneqq  6v_b=6N^{-\frac{1}{6}}$ for Term \eqref{term b} and for the parameters $r\coloneqq  r_s=6N^{-\frac{1}{3}-\sigma}$ and $\ v\coloneqq  6v_s=6N^{-\frac{5}{18}}$ for Term \eqref{term a} (instead of $r=v\coloneqq  N^{-c}$ and $\delta\coloneqq  \sigma $ like in \eqref{Family of sets}). 
Thus we define for $i\in  \{1,...,N\}$ the sets $ \mathcal{B}_{3_b,i}^{N,\sigma}, \mathcal{B}_{3_s,i}^{N,\sigma}\subseteq \mathbb{R}^{6N}$ as follows
\begin{align}
\begin{split}
& X\in \mathcal{B}_{3_b,i}^{N,\sigma} \subseteq \mathbb{R}^{6N} \\
\Leftrightarrow & \exists l\in I_{\sigma}: \Big(R_l\neq \infty\ \land\\ &\sum_{j \in \mathcal{M}^N_b(X)}\mathbf{1}_{M^N_{(r_l,R_l),(v_l,V_l)}(X_i)}(X_j) \geq  N^{\sigma\frac{3}{4}}\big\lceil N^{\frac{3}{4}} R_l^2\min\big(\max(V_l,R_l),1\big)^4\big\rceil\Big) \ \vee \label{def.B_3b}\\
& \sum_{j \in \mathcal{M}^N_b(X)}1=| \mathcal{M}^N_b(X)|\geq N^2v_{b}^4r_{b}^2  \geq N^{\frac{3}{4}(1+\sigma)}.
\end{split}
\end{align} 
Respectively for Term \eqref{term a}
\begin{align}
\begin{split}
& X\in \mathcal{B}_{3_{sb},i}^{N,\sigma} \subseteq \mathbb{R}^{6N} \\
\Leftrightarrow & \exists l\in I_{\sigma}: \Big(R_l\neq \infty\ \land\\ &\sum_{j \in \mathcal{M}^N_{sb}(X)}\mathbf{1}_{M^N_{(r_l,R_l),(v_l,V_l)}(X_i)}(X_j) \geq  N^{\sigma(\frac{2}{9})}\big\lceil N^{\frac{2}{9}} R_l^2\min\big(\max(V_l,R_l),1\big)^4\big\rceil\Big) \ \vee \label{def.B_3sb}\\
& \sum_{j \in \mathcal{M}^N_{sb}(X)}1=| \mathcal{M}^N_{sb}(X)|\geq N^2v_{sb}^4r_{sb}^2\geq N^{\frac{2}{9}(1+\sigma)}.
\end{split}
\end{align}
The last line in each case gives an estimate of the absolute number of bad or superbad particles and the line above an estimate of how many bad or superbad particles come close to a good one given a certain inter-particle distance and velocity.
We now derive an upper bound for Term \eqref{term a} and \eqref{term b} under the condition that $X\in \big(\mathcal{B}_{3_{sb},i}^{N,\sigma}\big)^C$ respectively $X\in \big(\mathcal{B}_{3_{b},i}^{N,\sigma}\big)^C$ and prove later that $\mathbb{P}\big( X\in \mathcal{B}_{3_{sb},i}^{N,\sigma}\big)$ and $\mathbb{P}\big( X\in \mathcal{B}_{3_{b},i}^{N,\sigma}\big)$ get small as $N$ increases.\\ 
To this end, we abbreviate for $0\le r\le R $ and $0\le v \le V$
$$ \widetilde{M}^N_{(r,R),(v,V)}(X_i)\coloneqq  G^N(X_i)\cap M^N_{(r,R),(v,V)}(X_i)$$
to distinguish between the collision classes.
As mentioned before, for Term \eqref{term a} we only consider values of $r$ and $R$ that satisfy the constraint
\begin{align}
\big(r=0 \land R=6\delta^N_{sb}=6N^{-\delta_s}\big)\vee \big( r\geq 6\delta^N_{sb}\land R=N^{\sigma}r\big), \label{cond.para.r,R superbad}
\end{align}
respectively for Term \eqref{term b}
\begin{align}
\big(r=0 \land R=6\delta^N_b=6N^{-\delta_b}\big)\vee \big( r\geq 6\delta^N_b\land R=N^{\sigma}r\big). \label{cond.para.r,R bad} 
\end{align}
We will see in Section \ref{2bgoodbad} that those are the worst case options for the estimates.
Recall that $$\sup_{0\le s \le t}|\Psi^N_{s,0}(X)-\Phi^N_{s,0}(X)|_{\infty}\le N^{-s_{\delta}}=\delta^N_{sb}=N^{-\frac{1}{6}-\sigma}$$ and $$\sup_{0\le s \le t}|\Psi^N_{s,0}(X)-\Phi^N_{s,0}(X)|_{\infty}\le N^{-b_{\delta}}=\delta^N_b=N^{-\frac{7}{24}-\sigma}$$ depending on which of the two term we devote ourselves to and for times before the stopping time is `triggered'. 
Thus, we obtain for $0\le t\le \tau^N(X)$ depending on the choice of $r$ that Term \eqref{term a} can be estimated by
{\allowdisplaybreaks \begin{align*}
 & \int_{0}^t\frac{1}{N}\sum_{\substack{j\neq i\\ j\in \mathcal{M}^N_{sb}(X)}}\Big(\big|f^{N}([\Psi^{1,N}_{s,0}(X)]_j-[\Psi^{1,N}_{s,0}(X)]_i)\notag  \\
&- f^{N}(\varphi^{1,N}_{s,0}(X_j)-{\varphi^{1,N}_{s,0}}(X_i))\big|\Big) \mathbf{1}_{\widetilde{M}^N_{(r,R),(v,V)}(X_i)}(X_j) ds \\
\le & \int_{0}^t\frac{1}{N}\sum_{\substack{j\neq i\\ j\in \mathcal{M}^N_{sb}(X)}} \Big(\big|f^{N}([\Psi^{1,N}_{s,0}(X)]_j-[\Psi^{1,N}_{s,0}(X)]_i)\big|\notag  \\
&+ \big|f^{N}(\varphi^{1,N}_{s,0}(X_j)-{\varphi^{1,N}_{s,0}}(X_i))\big|\Big)\mathbf{1}_{\widetilde{M}^N_{(r,R),(v,V)}(X_i)}(X_j) ds \mathbf{1}_{[0,6\delta^N_{sb}]}(r)\notag \\
& +\frac{2}{N}\Delta^N_{sb}(t,X)\sup_{Y\in \widetilde{M}^N_{(r,R),(v,V)}(X_i)}\int_0^t g^{N}(\varphi^{1,N}_{s,0}(Y)-{\varphi^{1,N}_{s,0}}(X_i))ds  \notag \\ 
& \cdot  \sum_{\substack{j\neq i\\ j\in \mathcal{M}^N_b(X)}} \mathbf{1}_{\widetilde{M}^N_{(r,R),(v,V)}(X_i)}(X_j)\mathbf{1}_{[6\delta^N_{sb},\infty)}(r).
\end{align*}
Analogously Term \eqref{term b} can be estimated by
{\allowdisplaybreaks \begin{align*}
 & \int_{0}^t\frac{1}{N}\sum_{\substack{j\neq i\\ j\in \mathcal{M}^N_b(X)}}\Big(\big|f^{N}([\Psi^{1,N}_{s,0}(X)]_j-[\Psi^{1,N}_{s,0}(X)]_i)\notag  \\
&- f^{N}(\varphi^{1,N}_{s,0}(X_j)-{\varphi^{1,N}_{s,0}}(X_i))\big|\Big) \mathbf{1}_{\widetilde{M}^N_{(r,R),(v,V)}(X_i)}(X_j) ds \\
\le & \int_{0}^t\frac{1}{N}\sum_{\substack{j\neq i\\ j\in \mathcal{M}^N_b(X)}} \Big(\big|f^{N}([\Psi^{1,N}_{s,0}(X)]_j-[\Psi^{1,N}_{s,0}(X)]_i)\big|\notag  \\
&+ \big|f^{N}(\varphi^{1,N}_{s,0}(X_j)-{\varphi^{1,N}_{s,0}}(X_i))\big|\Big)\mathbf{1}_{\widetilde{M}^N_{(r,R),(v,V)}(X_i)}(X_j) ds \mathbf{1}_{[0,6\delta^N_b]}(r)\notag \\
& +\frac{2}{N}\Delta^N_b(t,X)\sup_{Y\in \widetilde{M}^N_{(r,R),(v,V)}(X_i)}\int_0^t g^{N}(\varphi^{1,N}_{s,0}(Y)-{\varphi^{1,N}_{s,0}}(X_i))ds  \notag \\ 
& \cdot  \sum_{\substack{j\neq i\\ j\in \mathcal{M}^N_b(X)}} \mathbf{1}_{\widetilde{M}^N_{(r,R),(v,V)}(X_i)}(X_j)\mathbf{1}_{[6\delta^N_b,\infty)}(r) 
\end{align*}
where we utilzied that $|f^N(q+\delta)-f^N(q)|\le g^N(q)|\delta|$ for $q,\delta \in \mathbb{R}^3$ provided that $ \max\big(2 N^{-c},\frac{2}{3}|q|\big)\geq |\delta|$. 

Application of Corollary \ref{corollary phi and  psi} yields that the previous terms are bounded by
\begin{align}
\eqref{term a}\leq & \frac{C}{N}\frac{1}{N^{- \beta}v}\sum_{\substack{j\neq i\\ j\in \mathcal{M}^N_{sb}(X)}} \mathbf{1}_{\widetilde{M}^N_{(r,R),(v,V)}(X_i)}(X_j)\mathbf{1}_{[0,6\delta^N_{sb}]}(r)\notag \\
&+ \frac{C}{N}\frac{\Delta_b^N(t,X)}{r^{2}\max(r,v)}\sum_{\substack{j\neq i\\ j\in \mathcal{M}^N_{sb}(X)}} \mathbf{1}_{\widetilde{M}^N_{(r,R),(v,V)}(X_i)}(X_j)\mathbf{1}_{[6\delta^N_{sb},\infty)}(r) . \label{bound 2a}
\end{align}}
and
\begin{align}
\eqref{term b}\leq & \frac{C}{N}\frac{1}{N^{- \beta}v}\sum_{\substack{j\neq i\\ j\in \mathcal{M}^N_b(X)}} \mathbf{1}_{\widetilde{M}^N_{(r,R),(v,V)}(X_i)}(X_j)\mathbf{1}_{[0,6\delta^N_b]}(r)\notag \\
&+ \frac{C}{N}\frac{\Delta_b^N(t,X)}{r^{2}\max(r,v)}\sum_{\substack{j\neq i\\ j\in \mathcal{M}^N_b(X)}} \mathbf{1}_{\widetilde{M}^N_{(r,R),(v,V)}(X_i)}(X_j)\mathbf{1}_{[6\delta^N_b,\infty)}(r) . \label{bound 2b}
\end{align}}
\subsubsection{Estimate of Term \ref{term b} (i good j bad)}\label{2bgoodbad}
Remark that the assumptions of the Corollary \ref{corollary phi and  psi} are indeed fulfilled in the current situation since according to the constraints on the possible parameters (see \eqref{cond.para.r,R bad}) $r\in [0,6\delta^N_b]$ implies $R=\delta^N_b$ and $r=0$.
Considering the definition of the set of `good' particles $ G^N(X_i)$ it follows that
$$\widetilde{M}^N_{(0,6\delta^N_b),(v,V)}(X_i)=M^N_{(0,6\delta^N_b),(v,V)}(X_i)\cap G^N(X_i)\subseteq \big(M^N_{6\delta^N_b,N^{-\frac{1}{6}}}(X_i)\big)^C$$ which in turn provides
\begin{align}
& X_j\in \widetilde{M}^N_{(0,6\delta^N_b),(v,V)}(X_i) \notag\\
\Rightarrow
& |\varphi^{2,N}_{t_{min},0}(X_j)-{\varphi^{2,N}_{t_{min},0}(X_i)}|\geq N^{-\frac{1}{6}}  \label{rel.values.v_b}
\end{align}  
where $t_{min}$ shall denote a point in time where $|\varphi^{1,N}_{\cdot,0}(X_j)-{\varphi^{1,N}_{\cdot,0}(X_i)}|$ takes its minimum on $[0,T]$.\\ 
Now we want to derive an upper bound for Term \ref{bound 2b} under the condition that
\begin{align*}
& \sum_{j \in \mathcal{M}_b^N(X)}\mathbf{1}_{M^N_{(r,R),(v,V)}(X_i)}(X_j) \le   N^{\frac{3\sigma}{4}}\big\lceil N^{\frac{3}{4}} R^2\min\big(\max(V,R),1\big)^4\big\rceil.
\end{align*}
We will deal with the addends related to $\mathbf{1}_{[0,6\delta^N_b]}(r)$ and $\mathbf{1}_{ [6\delta^N_b, \infty)}(r)$ separately. 
Regarding the first addend, we already discussed that $r=0$ and $R=6\delta^N_b$ due to condition \eqref{cond.para.r,R bad}. 
We obtain
\begin{align}
&\frac{C}{N}\frac{1}{N^{-\beta}\Delta v}\sum_{\substack{j\neq i\\ j\in \mathcal{M}^N_{b}(X)}}\mathbf{1}_{\widetilde{M}^N_{(r,R),(v,V)}(X_i)}(X_j)\mathbf{1}_{[0,6\delta^N_{b}]}(r)\\
&\leq\frac{N^{\frac{3\sigma}{4}}}{N^{\beta+1}\max( v,\delta_b)}+\frac{C|\mathcal{M}_{b}|R^2\min(V,1)^4}{N^{-\beta-1}\max( v,\delta_b)}\\
&\leq\frac{N^{\frac{3\sigma}{4}}}{N^{\beta+1}\max( v,\delta_b)}+\frac{CR^2 \min(V,1)^4N^{\frac{3}{4}(1+\sigma)}}{N^{-\beta+1}\max(N^{-b_\delta},v)}\\
&\leq CN^{-3\sigma-\frac{5}{12}}+ C N^{-\frac{5}{12}-3\sigma}\frac{ \min(V,1)^4}{\max(N^{-b_{\delta}},v)}\label{eq:T2b1.estimated}
\end{align} 
for $R=\delta^N_b=N^{-b_{\delta}}=N^{-\frac{7}{24}-\sigma}$
since we only have to consider values with $v>N^{-b_v}=N^{-\frac{1}{6}}$, see \eqref{rel.values.v_b}. 

For the allowed deviation $\Delta^N_b(t,X)\leq N^{-b_{\delta}}= \delta^N_b=N^{-\frac{7}{24}-\sigma}$ and $R=N^{\sigma}r$ for $r\geq 6\delta^N_b$ (see \eqref{cond.para.r,R bad}) it follows for the second term of \eqref{bound 2b} that
\begin{align}
& \frac{C}{N}\frac{\Delta_{b}^N(t,X)}{r^{2}\max(r,v)}\sum_{\substack{j\neq i\\ j\in \mathcal{M}^N_b(X)}} \mathbf{1}_{\widetilde{M}^N_{(r,R),(v,V)}(X_i)}(X_j) \notag \\
\le & \frac{C}{N}\Big( \frac{N^{-2b_{r}-4b_{v}+2+\sigma}R^2\min\big(\max(V,R),1\big)^4}{r^2 \max(r,v)}  +\frac{N^{\frac{3\sigma}{4}}}{r^2 \max(r,v)}\Big) N^{-b_{\delta}} \notag \\
\le & C\Big( \frac{\min\big(\max(V,R),1\big)^4}{\max(r,v)} N^{-\frac{13}{24}}+N^{-\frac{41}{48}+3\sigma}\Big)\le CN^{-\frac{13}{24}} \leq CN^{-\frac{5}{12}}.\label{eq:T2b2.estimated}
\end{align}
In total we got an upper bound for Term \eqref{term b}. 
All sets belonging to the family $\big(M^N_{(r_i,R_i),(v_i,V_i)}(Y)\big)_{i \in I_\sigma}$ are contained in a `collision class' which takes one of the subsequent forms for suitable parameter $r,v\in [0,1]$
\begin{multicols}{2}
\begin{itemize}
\item[(i)] $ M^N_{(0,6\delta^N_b),(0,6\delta^N_b)}(Y)$
\item[(ii)]$M^N_{(0,6\delta^N_b),(v,N^{\sigma}v)}(Y) $
\item[(iii)]  $M^N_{(0,6\delta^N_b),(1,\infty)}(Y)$
\item[(iv)] $M^N_{(r,N^{\sigma}r),(0,6\delta^N_b)}(Y)$
 \item[(v)] $M^N_{(r,N^\sigma r),(v,N^\sigma v)}(Y)$
\item[(vi)] $  M^N_{(r,N^\sigma r),(1,\infty)}(Y)$,
\end{itemize}
\end{multicols}
\noindent 
except for $M^N_{(N^{-\sigma},\infty),(0,\infty)}(Y)$, which will be considered separately.
Recall that the number of `collision classes' belonging to the cover $|I_{\sigma}|$ is independent of $N$, analogously to Section \ref{Term 2 Sec}.
By comparing the possible values of $r$, $R$, $v$, and $V$ with the estimates \eqref{eq:T2b1.estimated} and \eqref{eq:T2b2.estimated}, it is evident that if $X\in \big(\mathcal{B}^{N,\sigma}_{3,i}\big)^C$ and $\sigma>0$ is chosen sufficiently small for the relevant terms, a set of type (ii), (iv), or (v) with $v=N^{-\sigma}$ or $r=N^{-\sigma}$ results in the 'worst-case scenario.' Consequently, the overall expression for Term \eqref{term b} can be bounded as follows:
\begin{align}
& CN^{-\frac{5}{12}} .\label{bound2gb}
\end{align}
The class where the previous general considerations can not be applied, $M^N_{(N^{-\sigma},\infty),(0,\infty)}(Y)$, the following holds:
\begin{align}
& \int_{0}^t\frac{1}{N}\sum_{\substack{j\neq i\\ j\in \mathcal{M}^N_{b}(X)}}\Big(\big|f^{N}([\Psi^{1,N}_{s,0}(X)]_j-[\Psi^{1,N}_{s,0}(X)]_i)\notag\\
&  - f^{N}(\varphi^{1,N}_{s,0}(X_j)-{\varphi^{1,N}_{s,0}}(X_i))\big|\Big) \mathbf{1}_{M^N_{(N^{-\sigma},\infty),(0,\infty)}(X_i)}(X_j) ds \notag\\
\le & \frac{2}{N}\sup_{Y\in M^N_{(N^{-\sigma},\infty),(0,\infty)}(X_i)}\int_0^t g^{N}(\varphi^{1,N}_{s,0}(Y)-{\varphi^{1,N}_{s,0}}(X_i))ds  \notag   \sum_{\substack{j\neq i\\ j\in \mathcal{M}^N_{b}(X)}} \underbrace{\Delta^N_b(t,X)}_{\leq N^{-b_\delta}}\notag \\
\le & \frac{2}{N} \big(T\frac{C}{(N^{-\sigma})^3}\big)N^{-b_{\delta}}\underbrace{|\mathcal{M}^N_{b}(X)|}_{\le  N^{2-2b_r-4b_v(1+\sigma)}} \notag \\
&\le  CN^{1-2b_r-4b_v-b_{\delta}+c\sigma}\notag \\
&\le CN^{-\frac{13}{24}+\frac{3}{4}\sigma}\notag
\end{align}
for $X\in \big(\mathcal{B}^{N,\sigma}_{3,i}\big)^C$  and $t\le \tau^N(X).$
\subsubsection{Estimate of Term \ref{term a} (i good j superbad)}\label{2bgoodsuperbad}
The estimates on Term \ref{term b} are quite similar to the previous one, except that now $j\in\mathcal{M}^N_{sb}(X).$

We get for times $0\leq t\le \tau^N(X)$ the following $r$-depending estimate
{\allowdisplaybreaks \begin{align}
 & \int_{0}^t\frac{1}{N}\sum_{\substack{j\neq i\\ j\in \mathcal{M}^N_{sb}(X)}}\Big(\big|f^{N}([\Psi^{1,N}_{s,0}(X)]_j-[\Psi^{1,N}_{s,0}(X)]_i)\notag \\
&- f^{N}(\varphi^{1,N}_{s,0}(X_j)-{\varphi^{1,N}_{s,0}}(X_i))\big|\Big) \mathbf{1}_{\widetilde{M}^N_{(r,R),(v,V)}(X_i)}(X_j) ds\\
&\le\frac{C}{N}\frac{1}{N^{-\beta}\Delta v}\sum_{\substack{j\neq i\\ j\in \mathcal{M}^N_{s}(X)}}\mathbf{1}_{\widetilde{M}^N_{(r,R),(v,V)}(X_i)}(X_j)\mathbf{1}_{[0,6\delta^N_{s}]}(r)\\
&+\frac{C \Delta^N_{sb}(t,X)}{N \max(r,v)r^2}\sum_{\substack{j\neq i\\ j\in \mathcal{M}^N_{sb}(X)}}\mathbf{1}_{\widetilde{M}^N_{(r,R),(v,V)}(X_i)}(X_j)\mathbf{1}_{[6\delta^N_{s},\infty]}(r).
\end{align}
For the first summand, $r_s\coloneqq  N^{-\frac{1}{3}-\sigma},v_s\coloneqq  N^{-\frac{5}{18}}$ and $\delta_s=N^{-\frac{1}{6}}$ and in view of the definition of $ G^N(X_i)$ it follows that
$$\widetilde{M}^N_{(0,6\delta^N_s),(v,V)}(X_i)=M^N_{(0,6\delta^N_s),(v,V)}(X_i)\cap G^N(X_i)\subseteq \big(M^N_{6\delta^N_s,N^{-\frac{1}{6}}}(X_i)\big)^C,$$ were $N^{-\frac{1}{6}}$ is the velocity cut off of the bad particles not the superbad ones.
This provides us the necessary implication
\begin{align}
& X_j\in \widetilde{M}^N_{(0,6\delta^N_b),(v,V)}(X_i) \notag\\
\Rightarrow
& |\varphi^{2,N}_{t_{min},0}(X_j)-{\varphi^{2,N}_{t_{min},0}(X_i)}|\geq N^{-\frac{1}{6}}  \label{rel.values.v_s}
\end{align}  
where $t_{min}$ shall denote a point in time where $|\varphi^{1,N}_{\cdot,0}(X_j)-{\varphi^{1,N}_{\cdot,0}(X_i)}|$ takes its minimum on $[0,T]$.\\ 
We derive an upper bound for Term \ref{term b} under the condition that
\begin{align*}
& \sum_{j \in \mathcal{M}s^N(X)}\mathbf{1}_{M^N_{(r,R),(v,V)}(X_i)}(X_j) \le   N^{\frac{2\sigma}{9}}\big\lceil N^{\frac{2}{9}} R^2\min\big(\max(V,R),1\big)^4\big\rceil.
\end{align*}
For the first summand we have for $R=\delta_s$
\begin{align}
&\frac{C}{N}\frac{1}{N^{-\beta}\Delta v}\sum_{\substack{j\neq i\\ j\in \mathcal{M}^N_{sb}(X)}}\mathbf{1}_{\widetilde{M}^N_{(r,R),(v,V)}(X_i)}(X_j)\mathbf{1}_{[0,6\delta^N_{sb}]}(r)\nonumber\\
&\le\frac{C}{N}\frac{R^2\min(V,1)^4|\mathcal{M}_{sb}|}{N^{-\beta+1}\max( v,N^{-\frac{1}{6}})}+\frac{CN^{\frac{2}{9}\sigma}}{N^{-\beta}\max( v,N^{-\frac{1}{6}})}\nonumber\\
&\le\frac{CR^2 \min(V,1)^4|\mathcal{M}_{sb}|}{\max(N^{-\frac{1}{6}},v)N^{-\beta+1}}+\frac{CN^{\frac{2}{9}\sigma}}{N^{-\beta+1}\max(N^{-\frac{1}{6}},v)}\nonumber\\
&\le C N^{1+\beta-2s_r-4s_v-2s_{\delta}}\frac{ \min(V,1)^4}{\max(N^{-\frac{1}{6}},v)}+N^{\frac{2}{9}\sigma+\frac{5}{12}-\sigma+\frac{1}{6}-1}\\
&\le C N^{-\frac{19}{12}}+CN^{-\frac{5}{12}}
\label{bound2a1}
\end{align}
Taking additionally into account that $\Delta^N_{sb}(t,X)\le N^{s_{\delta}}= \delta^N_{sb}$ as well as $R=N^{\sigma}r$ for $r\geq 6\delta^N_{sb}$ (see \eqref{cond.para.r,R superbad}) it follows for the second term of \eqref{term a} that
\begin{align}
& \frac{C}{N}\frac{\Delta_{sb}^N(t,X)}{r^{2}\max(r,v)}\sum_{\substack{j\neq i\\ j\in \mathcal{M}^N_b(X)}} \mathbf{1}_{\widetilde{M}^N_{(r,R),(v,V)}(X_i)}(X_j) \notag \\
\le & \frac{C}{N}\Big( \frac{N^{-2s_{r}-4s_{v}+2+\sigma}R^2\min\big(\max(V,R),1\big)^4}{r^2 \max(r,v)}  +\frac{N^{\frac{2\sigma}{3}}}{r^2 \max(r,v)}\Big) N^{-s_{\delta}} \notag \\
\le & C\Big( \frac{\min\big(\max(V,R),1\big)^4}{\max(r,v)} N^{1-2s_r-4s_v-s_{\delta}}+N^{c\sigma-1-s_{\delta}+b_v}\Big)\notag\\
\le &CN^{-\frac{17}{18}}+CN^{-1}. \label{bound2a2}
\end{align}
The sum of Terms \eqref{bound2a1} and \eqref{bound2a2} forms an upper bound for Term \eqref{term a} under the current assumption. 
All sets which belong to the family $\big(M^N_{(r_i,R_i),(v_i,V_i)}(Y)\big)_{i \in I_\sigma}$ are contained in a `collision class' which takes one of the subsequent forms for suitable parameter $r,v\in [0,1]$
\begin{multicols}{2}
\begin{itemize}
\item[(i)] $ M^N_{(0,6\delta^N_s),(0,6\delta^N_s)}(Y)$
\item[(ii)]$M^N_{(0,6\delta^N_s),(v,N^{\sigma}v)}(Y) $
\item[(iii)]  $M^N_{(0,6\delta^N_s),(1,\infty)}(Y)$
\item[(iv)] $M^N_{(r,N^{\sigma}r),(0,6\delta^N_s)}(Y)$
\item[(v)] $M^N_{(r,N^\sigma r),(v,N^\sigma v)}(Y)$
\item[(vi)] $  M^N_{(r,N^\sigma r),(1,\infty)}(Y)$,
\end{itemize}
\end{multicols}
\noindent 
except for $M^N_{(N^{-\sigma},\infty),(0,\infty)}(Y)$, which will be discussed separately like in the previous section.
A set of kind (ii), (iv) or (v) with $v=N^{-\sigma}$ or $r=N^{-\sigma}$ yields the `worst case option' and thus in total Term \eqref{term a} is bounded by
\begin{align}
 & CN^{-\frac{5}{12}} \hspace{0.4cm} \text{if} \hspace{0.4cm} X\in \big(\mathcal{B}^{N,\sigma}_{3,i}\big)^C. \label{bound2a}
\end{align}
For the last class $M^N_{(N^{-\sigma},\infty),(0,\infty)}(Y)$ the following holds
\begin{align}
& \int_{0}^t\frac{1}{N}\sum_{\mathclap{ \substack{j\neq i\\ j\in \mathcal{M}^N_{sb}(X)}}}\Big(\big|f^{N}([\Psi^{1,N}_{s,0}(X)]_j-[\Psi^{1,N}_{s,0}(X)]_i)\notag\\
&  - f^{N}(\varphi^{1,N}_{s,0}(X_j)-{\varphi^{1,N}_{s,0}}(X_i))\big|\Big) \mathbf{1}_{M^N_{(N^{-\sigma},\infty),(0,\infty)}(X_i)}(X_j) ds \notag\\
\le & \frac{2}{N}\sup_{Y\in M^N_{(N^{-\sigma},\infty),(0,\infty)}(X_i)}\int_0^t g^{N}(\varphi^{1,N}_{s,0}(Y)-{\varphi^{1,N}_{s,0}}(X_i))ds  \notag   \sum_{\substack{j\neq i\\ j\in \mathcal{M}^N_{sb}(X)}} \underbrace{\Delta^N_s(t,X)}_{\le N^{-s\delta}}\notag \\
\le & \frac{2}{N} \big(T\frac{C}{(N^{-\sigma})^3}\big)N^{-s_{\delta}}\underbrace{|\mathcal{M}^N_{sb}(X)|}_{\le  N^{\frac{2}{9}(1+\sigma)}} \notag \\
\le &  CN^{-1+\frac{2}{9}-s_{\delta}+c\sigma}\notag\\
\le & CN^{-\frac{19}{18}+C\sigma},
\end{align}
for $X\in \big(\mathcal{B}^{N,\sigma}_{3sb,i}\big)^C$ and $t\le \tau^N(X).$
This is distinctly smaller than necessary for small enough $\sigma>0$ and concludes the estimates for Term \eqref{term a}.

\subsubsection{Unlikely sets $\mathcal{B}^{N,\sigma}_{3b,i}$ and $\mathcal{B}^{N,\sigma}_{3s,i}$}
It only remains to show that the probability related to the sets $\mathcal{B}_{3b,i}^{N,\sigma}$ and $\mathcal{B}_{3s,i}^{N,\sigma}$ is indeed small enough, i.e. that for any $\gamma>0$ there exists a $C_\gamma$ such that
\begin{align*}
&\mathbb{P}\Big(\sum_{j \in \mathcal{M}_b^N(X)}\mathbf{1}_{M^N_{R,V}(X_i)}(X_j) \geq  N^{\frac{3\sigma}{4}}\big\lceil N^{\frac{3}{
4}} R^2\min\big(\max(R,V),1\big)^4\big\rceil\\
&\qquad \vee |\mathcal{M}_b^N(X)|>N^{\frac{3}{4}(1+\sigma)} \Big)\leq C_\gamma N^{-\gamma}
\end{align*}
and analogously that for any $\eta>0$ there exists a $C_\eta$ such that
\begin{align*}
&\mathbb{P}\Big(\sum_{j \in \mathcal{M}_b^N(X)}\mathbf{1}_{M^N_{R,V}(X_i)}(X_j) \geq  N^{\frac{2\sigma}{9}}\big\lceil N^{\frac{2}{
2}} R^2\min\big(\max(R,V),1\big)^4\big\rceil\\
&\qquad \vee |\mathcal{M}_b^N(X)|>N^{\frac{2}{9}(1+\sigma)} \Big)\leq C_\eta N^{-\eta}
\end{align*}
The proof follows the same pattern as in \cite{grass} and is similar in both cases ('bad' and 'superbad'), so we confine ourselves to the proof in the bad particles case.
For clarity, we define 
\begin{align*}
M\coloneqq  \big\lceil N^{\frac{3}{4}\sigma}\lceil N^{\frac{3}{4}} R^2\min\big(\max(V,R),1\big)^4\rceil \big\rceil.
\end{align*} 
Recall that $j\in \mathcal{M}^N_b(X)$ implies that there is at least on $X_k\in \big(G^N(X_j)\big)^C$ for some $k\in \{1,...,N\}\setminus \{j\}$. We will see that for $R,V>0$ 
\begin{align}
 \sum_{j \in \mathcal{M}_b^N(X)}\mathbf{1}_{M^N_{R,V}(X_i)}(X_j)\geq M \label{unlikely set}
\end{align} 
either implies that there exists a $j \in \{1,...,N\}$ such that
\begin{align} \sum_{k=1}^N \mathbf{1}_{(G^N(X_j))^C}(X_k)\geq\lceil  \frac{N^{\frac{\sigma}{4}}}{2}\rceil  \big)\ \label{unlikely set 1}
\end{align}
or there exists a set $\mathcal{S}\subseteq \{1,...,N\}^2\setminus \bigcup_{n=1}^N\{(n,n)\}$ with the following properties
\begin{align}
& \ \text{(i)}\ \ \ |\mathcal{S}|= \lceil  \frac{N^{-\frac{\sigma}{4}}M}{2}\rceil  \notag \\ & \  \text{(ii)} \ \ \forall (j,k)\in \mathcal{S}:X_j\in(G^N(X_k))^C\cap M^N_{R,V}(X_i)   \notag \\
&\ \text{(iii)} \ (j_1,k_1),(j_2,k_2)\in \mathcal{S}\Rightarrow  \{j_1,k_1\}\cap \{j_2,k_2\}=\emptyset.\label{unlikely set 2}
\end{align}
In the proof of this implication we will name the event $X_m\in  M^N_{R,V}(X_n)$ by the phrase 'collision between particles $m,n$' and the phrase 'hard collision between particles $m,n$' will be applied synonymously to the event $X_m \in (G(X_n))^C$.
Note that if assumption \eqref{unlikely set 1} is not fulfilled, it implies that a given 'bad' particle can have at least $\lceil \frac{N^{\frac{\sigma}{4}}}{2}\rceil$ 'hard collisions' with different particles. Such a 'bad' particle can, 'infect' not more than $\lceil \frac{N^{\frac{\sigma}{4}}}{2}\rceil$ other particles, causing them to be included in the set $\mathcal{M}_b^N(X)$. 

For the following considerations we stick to this case and we will see that under this constraint the relation \eqref{unlikely set}, i.e. $$\sum_{j \in \mathcal{M}_b^N(X)}\mathbf{1}_{M^N_{R,V}(X_i)}(X_j)\geq M $$ implies that the event related to \eqref{unlikely set 2} is fulfilled.

In this case there is a set $\mathcal{C}_0\subseteq\mathcal{M}^N_b(X)$ of `bad' particles which have 'collisions' with the particle $i$. By assumption \eqref{unlikely set} we have $|\mathcal{C}_0|\geq M$ and as the event related to \eqref{unlikely set 1} does not occur, there are at most $\lfloor \frac{N^{\frac{\sigma}{4}}}{2}\rfloor $ particles having a 'hard collision' with particle $i$.
We construct a new set $\mathcal{C}_1\subseteq \mathcal{C}_0$ by `detaching' all of these at most $\lfloor \frac{N^{\frac{\sigma}{4}}}{2}\rfloor $ particles, which are possibly contained in $\mathcal{C}_0$, and it obviously holds that 
\begin{align*}
|\mathcal{C}_1|\geq M -\lfloor \frac{N^{\frac{\sigma}{4}}}{2}\rfloor \geq 1,
\end{align*}
for $N$ large enough.
Similarly we take one of these remaining `bad' particles $j_1$ out of $\mathcal{C}_1$ and since $j_1\in \mathcal{C}_1\subseteq \mathcal{C}_0\subseteq \mathcal{M}^N_b(X)$, there must be at least one further particle having a 'hard collision' with $j_1$. 
By construction of $\mathcal{C}_1$ this can not be $i$, so lets call it $k_1$.
This gets us our first tuple $(j_1,k_1)$ which fulfils condition (ii) of the set $\mathcal{S}$ appearing in \eqref{unlikely set 2}. 
In a next step we `detach' $j_1$ and $k_1$ and all of their at most $2\lfloor \frac{N^{\frac{\sigma}{4}}}{2}\rfloor -2$ remaining 'hard collision partners' from $\mathcal{C}_1$ to obtain a new set $\mathcal{C}_2\subseteq \mathcal{C}_1$. 
This gives us an iteration process (provided that $\mathcal{C}_2\neq \emptyset$) by choosing the next particle $j_2$ out of $\mathcal{C}_2$ and afterwards an arbitrary one of its 'hard collision partners' $k_2$. 
Then the next round can start after having removed $j_2$ and $k_2$ as well as their remaining 'hard collision} partners' from $\mathcal{C}_2$ to obtain $\mathcal{C}_3\subseteq \mathcal{C}_2$. 
By construction after each round of this process at most $2\lfloor \frac{N^{\frac{\sigma}{4}}}{2}\rfloor $ `particle labels' are removed from the set $\mathcal{C}_k$ to obtain $\mathcal{C}_{k+1}$.
Considering that $M\geq N^{\frac{3\sigma}{4}}$, we can reiterate this procedure at least 
\begin{align*}
\lceil \frac{M -\lfloor \frac{N^{\frac{\sigma}{4}}}{2}\rfloor }{N^{\frac{\sigma}{4}}}\rceil\geq \lceil \frac{N^{-\frac{\sigma}{4}}M}{2}\rceil 
\end{align*}
 times. The removal of the 'hard collision partners' of the occurring tuples after each round ensures that condition (iii) is fulfilled and thus this provides us a set $\mathcal{S}$ consisting of tuples $(j_i,k_i)$ like claimed in \eqref{unlikely set 2}. 

Due to this considerations we can determine an upper bound for the probability $\mathbb{P}(X\in \mathcal{B}_{3b,i}^{N,\sigma})$. 
Starting with assumption \eqref{unlikely set 2} we abbreviate 
$$M_1\coloneqq  \lceil \frac{N^{-\frac{\sigma}{4}}M}{2}\rceil \text{ with } M=\big\lceil N^{\frac{3}{4}\sigma}\lceil N^{\frac{3}{4}} R^2\min\big(\max(V,R),1\big)^4\rceil \big\rceil .$$ 
There are less than $\binom{N^2}{K}$ different possibilities to choose $K$ `disjoint'  (condition (iii) of \eqref{unlikely set 2} is fulfilled) pairs $(j,k)$ belonging to $\{1,...,N\}^2\setminus \bigcup_{n=1}^N\{(n,n)\}$. Application of this, Lemma \ref{Prob of group} and $\sup_{Y\in \mathbb{R}^6}\mathbb{P}\big(X_1\in (G^N(Y))^C\big)\le C N^{-\frac{11}{6}-2\sigma}$ yields that the probability of the existence of a set $S$ satisfying the three conditions in \ref{unlikely set 2} is small for large $N$, i.e.
{\allowdisplaybreaks \begin{align}
&\mathbb{P}\Big(\exists \mathcal{S}\subseteq \{1,...,N\}^2\setminus \bigcup_{n=1}^N\{(n,n)\}:|\mathcal{S}|=M_1 \ \land \notag \\ & \hspace{0,6cm} \big(\forall (j,k)\in \mathcal{S}:X_j\in(G^N(X_k))^C\cap M^N_{R,V}(X_i)  \big) \ \land \notag \\
& \hspace{0,6cm} \big((j_1,k_1),(j_2,k_2)\in \mathcal{S}\Rightarrow  \{j_1,k_1\}\cap \{j_2,k_2\}=\emptyset\big)\Big) \notag  \\
\le & \binom{N^2}{M_1}\mathbb{P}\Big( \forall (j,k)\in \{(2,3),(4,5),...,(2M_1,2M_1+1)\}: \notag \\
&\hspace{1,5cm} X_j\in(G^N(X_k))^C\cap M^N_{R,V}(X_1) \Big) \notag  \\
\le &\frac{N^{2M_1}}{M_1!}\Big(\sup_{Y\in \mathbb{R}^6}\mathbb{P}\big(X\in (G^N(Y))^C\big)
\sup_{Z\in \mathbb{R}^6}\mathbb{P}\big(X\in M^N_{R,V}(Z)\big)\Big)^{M_1} \notag \\
\le & C^{M_1}\frac{N^{2M_1}}{M_1^{M_1}} \big(N^{-\frac{11}{6}-2\sigma}\big)^{M_1}\Big(R^2\min\big(\max(V,R),1\big)^4\Big)^{M_1}\notag \\
\le &  (CN^{-\frac{5\sigma}{4}} )^{\frac{N^{\frac{\sigma}{4}} }{2}},
\end{align} 
since $M_1\geq \frac{N^{\frac{\sigma}{4}}}{2}$ for
\begin{align*}
M_1=\lceil \frac{N^{-\frac{\sigma}{4}} }{2}M\rceil \text{ with } M=\big\lceil N^{\frac{3}{4}\sigma}\lceil N^{\frac{3}{4}} R^2\min\big(\max(V,R),1\big)^4\rceil \big\rceil.
\end{align*} 
For any class which appears in $\big(M^N_{(r_i,R_i),(v_i,V_i)}(Y)\big)_{i \in I_{\delta}}$ where $R_l\neq \infty$ this probability decays distinctly faster than necessary. 

To prove that $\sum_{k\in \mathcal{M}^N_b(X)}1\le N^{\frac{3}{4}(1+\sigma)}$ we can also apply the considerations from above by setting the collision class parameters $R,V$ to infinity and thus we obtain the event $\mathbf{1}_{M^N_{\infty,\infty}(X_i)}(X_j)=1$. In the case  $M_1\coloneqq  \lceil \frac{N^{\frac{3}{4}+\frac{\sigma}{4}}}{2}\rceil$ and $\mathbb{P}\big(X_1\in M^N_{R,V}(Y)\big)=1$. 
Applying the above procedure, we get
\begin{align*}
\mathbb{P}\big(\sum_{k\in \mathcal{M}^N_b(X)}1\le N^{\frac{3}{4}(1+\sigma)}\big)\le CN^{-\sigma N^{\frac{3}{4}}}
\end{align*} 
which is small enough.
Now, let's proceed with the considerations regarding assumption \eqref{unlikely set 1}. Therefore we abbreviate $M_2\coloneqq  \lceil  \frac{N^{\frac{\sigma}{4}}}{2}\rceil $ and estimate
{\allowdisplaybreaks
\begin{align}
& \mathbb{P}\Big(X\in \mathbb{R}^{6N}: \big(\exists j \in \{1,...,N\}: \sum_{k\neq j} \mathbf{1}_{(G^N(X_j))^C}(X_k)\geq  M_2\big)\Big)\notag \\
\le & N\mathbb{P}\Big(X\in \mathbb{R}^{6N}: \sum_{k=2}^N \mathbf{1}_{(G^N(X_1))^C}(X_k)\geq M_2\Big) \notag \\
\le & N\binom{N}{M_2}\sup_{Y\in \mathbb{R}^6}\mathbb{P}\big(Z\in \mathbb{R}^{6}: Z\in (G^N(Y))^C  \big)^{M_2} \notag \\
\le & N \frac{N^{M_2}}{M_2!}  \big(CN^{-\frac{11}{6}-2\sigma}\big)^{M_2} \notag \\
\le & CN^{-\frac{1}{4}\lceil  \frac{N^{\frac{\sigma}{4}}}{2}\rceil }, \label{prob.est.} 
\end{align}}
which decreases fast enough as $N$ increases.
In total we obtain as desired
\begin{align}
& \mathbb{P}\big(X\in \mathcal{B}^{N,\sigma}_{3b,i} \big) \notag \\
\le & |I_{\sigma}|\sup_{\substack{R,V>0}}\mathbb{P}\Big(\sum_{j \in \mathcal{M}_b^N(X)}\mathbf{1}_{M^N_{R,V}(X_i)}(X_j) \geq  N^{\frac{3\sigma}{4}}\big\lceil N^{\frac{3}{
4}} R^2\min\big(\max(R,V),1\big)^4\big\rceil \Big) \notag \\
& + \mathbb{P}\big(\sum_{k\in \mathcal{M}^N_b(X)}1\geq N^{\frac{3}{4}(1+\sigma)}\big)  \notag\\
\le &(CN^{-\frac{5\sigma}{4}} )^{\frac{N^{\frac{\sigma}{4}}}{2} } 
 \label{prob.b.3b}
\end{align}
Similarly we can show that the probability related to the set $\mathcal{B}^{N,\sigma}_{3s,i}$ in the superbad particle case is indeed small enough.
A similar estimate holds for the superbad particles
\begin{align}
& \mathbb{P}\big(X\in \mathcal{B}^{N,\sigma}_{3s,i} \big) \notag \\
\le & |I_{\sigma}|\sup_{\substack{R,V>0}}\mathbb{P}\Big(\sum_{j \in \mathcal{M}_b^N(X)}\mathbf{1}_{M^N_{R,V}(X_i)}(X_j) \geq  N^{\frac{2\sigma}{9}}\big\lceil N^{\frac{2}{
9}} R^2\min\big(\max(R,V),1\big)^4\big\rceil \Big) \notag \\
& + \mathbb{P}\big(\sum_{k\in \mathcal{M}^N_b(X)}1\geq N^{\frac{2}{9}(1+\sigma)}\big)  \notag\\
\le &(CN^{-\frac{16\sigma}{9}} )^{\frac{N^{\frac{\sigma}{9}}}{2} } 
 \label{prob.b.3s}
\end{align}
\subsubsection{Estimate of Term \ref{term c} (i good j good) }
Now we are left with the last Term \ref{term c} which measures the fluctuation between two good particles. 
To estimate the term we identify 
\begin{align*}
 v^N_{min}=N^{b_v}=N^{-\frac{1}{6}},
 \end{align*}
since $i,j\in \mathcal{M}^N_g(X).$
To estimate the term we apply Corollary \ref{corollary phi and  psi} and subdivide the term depending on the relative velocity of the particles so that the first term deals with collisions where the relative velocity is below order $N^{-\frac{1}{9}+3\sigma}$ and the second deals with the rest. 
The choice of the value is more or less random as long as the equations stay small.
Corollary \ref{corollary phi and  psi} (ii) is applicable since the relative velocity values for the considered `collision classes' are of distinctly larger order than the deviation between corresponding particle trajectories of the microscopic and the auxiliary system. 
Note that $G^N(X_i)\subseteq M(^N_{ 6\delta^N_b, v^N_{min}}(X_i))^c$ where $\delta_b^N=N^{-\frac{7}{24}-\sigma}$ and
\begin{align*}
&\max_{i\in \mathcal{M}^N_g(X)}\sup_{0\le s \le \tau^N(X)}|[\Psi^N_{s,0}(X)]_i-\varphi^N_{s,0}(X_i)|
\le 
N^{-\frac{5}{12}+\sigma}=N^{-\frac{1}{4}+\sigma} v^N_{min}. 
\end{align*}
Thus Term \ref{term c} is bounded by
{\allowdisplaybreaks
\begin{align}
& \int_{0}^t\Big(\frac{1}{N}\sum_{\substack{j\neq i\\ j\in \mathcal{M}^N_g(X)}}\Big(\big|f^{N}([\Psi^{1,N}_{s,0}(X)]_j-[\Psi^{1,N}_{s,0}(X)]_i)\big| \notag  \\
&+ \big|f^{N}(\varphi^{1,N}_{s,0}(X_j)-{\varphi^{1,N}_{s,0}}(X_i))\big|\Big)\mathbf{1}_{G^N(X_i)\cap M^N_{3N^{-\frac{1}{2}+\sigma},\infty}(X_i)}(X_j)\Big) ds \notag \\
\le & \frac{C}{N}\frac{1}{N^{-\beta} v^N_{min}} \sum_{j\neq i} \mathbf{1}_{G^N(X_i)\cap M^N_{3N^{-\frac{1}{2}+\sigma},N^{-\frac{1}{9}+3\sigma}}(X_i)}(X_j) \notag \\
&+  \frac{C}{N}\frac{1}{N^{-\beta}N^{-\frac{1}{9}+3\sigma}} \sum_{j\neq i} \mathbf{1}_{ G^N(X_i)\cap  M^N_{3N^{-\frac{1}{2}+\sigma},\infty}(X_i)}(X_j). \label{term c estimated}
\end{align}}
This stays sufficiently small since the concerned sets are very unlikely.
To prove this we define
\begin{align}
\begin{split}
& X\in \mathcal{B}_{4,i}^{N,\sigma}\subseteq \mathbb{R}^{6N} \\
\Leftrightarrow &  \sum_{j \neq i}\mathbf{1}_{M^N_{6N^{-\frac{1}{2}+\sigma},N^{-\frac{1}{9}+3\sigma}}(X_i)}(X_j)\geq  N^{\frac{\sigma}{2}}\ \land \\
& \sum_{j \neq i}\mathbf{1}_{M^N_{6N^{-\frac{1}{2}+\sigma},\infty}(X_i)}(X_j)\geq  N^{3\sigma}
\end{split} \label{def.B_4}.
\end{align}
Set $M_1\coloneqq  \lceil N^{\frac{\sigma}{2}} \rceil$ and $M_2\coloneqq  \lceil N^{3\sigma} \rceil$. By the same proof as applied in \eqref{prob.est.} and application of Lemma \ref{Prob of group} we can estimate the probability
\begin{align}
& \mathbb{P}\big(X\in  \mathcal{B}^{N,\sigma}_{4,i}\big) \notag \\
\le & \frac{N^{M_1}}{M_1!}\sup_{Y\in \mathbb{R}^6}\mathbb{P}\big(X_i\in M^N_{6N^{-\frac{1}{2}+\sigma},N^{-\frac{1}{9}+3\sigma}}(Y)\big)^{M_1} \notag \\
& + \frac{N^{M_2}}{M_2!}\sup_{Y\in \mathbb{R}^6}\mathbb{P}\big(X_i\in M^N_{6N^{-\frac{1}{2}+\sigma},\infty}(Y)\big)^{M_2} \notag \\
\le & \frac{(CN)^{M_1}}{M_1!}\big(N^{2(-\frac{1}{2}+\sigma)}\big)^{M_1}\big(N^{4(-\frac{1}{9}+3\sigma)}\big)^{M_1}  + C^{M_2}\frac{N^{M_2}}{(N^{3\sigma})^{M_2}}\big(N^{2(-\frac{1}{2}+\sigma)}\big)^{M_2}  \notag \\
\le &C\big(N^{-\frac{4}{9}+14\sigma} \big)^{N^{\frac{\sigma}{2}}} +\big( CN^{-\sigma}\big)^{N^{3\sigma}}, \label{prob.b.4}
\end{align} 
which for $\sigma>0$ small enough decreases fast enough.

Due to our estimates it holds for $X\in (\mathcal{B}^{N,\sigma}_{4,i})^C$ that Term \eqref{term c estimated}, and thereby Term \eqref{term c}, is bounded by
\begin{align}
&\frac{C}{N}\frac{1}{N^{-\beta}v^N_{b}} N^{\frac{\sigma}{2}} +  \frac{C}{N}\frac{1}{N^{-\beta}N^{-\frac{1}{9}+3\sigma}} N^{3\sigma}\notag \\
\le & 
CN^{-\frac{5}{12}- \frac{\sigma}{2}} +CN^{-\frac{17}{36}} \le CN^{-\frac{5}{12}}\label{bound5}
\end{align}
Due to the previous probability estimates on the unlikely sets it easily follows that for small enough $\sigma>0$ and an arbitrary $\gamma>0$ there is a constant $C>0$ such that $$\mathbb{P}\big(\bigcup_{j\in \{1,2,3,4\}}\bigcup_{i=1}^N \mathcal{B}^{N,\sigma}_{j,i}\big)\le CN^{-\gamma}.$$ 
\subsubsection{Conclusion for case 1 (labelled particle $X_i$ is good)}
For $i\in \mathcal{M}_g^N(X)$ we determined an upper bound for the term 
$$  \big|\int_{t_1}^t\frac{1}{N}\sum_{j\neq i}f^{N}([\Psi^{1,N}_{s,0}(X)]_i-[\Psi^{1,N}_{s,0}(X)]_j)-f^{N}*\widetilde{k}^N_s({\varphi^{1,N}_{s,0}}(X_i))ds\big|$$
which is given by the sum of bounds of the four Terms \eqref{eq:good1}, \eqref{eq:good2}, \eqref{eq:good3} and \eqref{eq:good4}.
We restrict ourselves to the configurations $X\in\Big(\bigcup_{j\in \{1,2,3,4\}}\bigcup_{i=1}^N \mathcal{B}^{N,\sigma}_{j,i}\Big)^C$ and all upper bounds hold for any times $t_1,t\in [0,\tau^N(X)]$. 
For a suitable constant $C>0$, $CN^{-\frac{5}{12}}$ dominates all of these upper bounds except for Term \eqref{term c2}. 
But for our underlying configurations it holds for any $i\in \{1,...,N\}$ and times $t_1,t\in [0,T]$ that 
\begin{align*}
& \big|\int_{t_1}^t\Big(\frac{1}{N}\sum_{j=1}^N g^{N}(\varphi^{1,N}_{s,0}(X_j)-{\varphi^{1,N}_{s,0}}(X_i))\mathbf{1}_{G^N(X_i)}(X_j)\\
&  -\int_{\mathbb{R}^6} g^{N}(\varphi^{1,N}_{s,0}(Y)-{\varphi^{1,N}_{s,0}}(X_i))\mathbf{1}_{G^N(X_i)}(Y)k_0(Y)
d^6Y\Big)ds\Big|\le 1.
\end{align*}
By the definition of $\mathcal{B}^{N,\sigma}_{2,i}$ and thus for $N>1$ and $t_1\le t$ we receive 
\begin{align}
& \int_{t_1}^t\frac{1}{N}\sum_{j=1}^N g^{N}(\varphi^{1,N}_{s,0}(X_j)-{\varphi^{1,N}_{s,0}}(X_i))\mathbf{1}_{G^N(X_i)}(X_j)ds \notag \\
\le & 1+\int_{t_1}^t\int_{\mathbb{R}^6} g^{N}(\varphi^{1,N}_{s,0}(Y)-{\varphi^{1,N}_{s,0}}(X_i))\mathbf{1}_{G^N(X_i)}(Y)k_0(Y)d^6Yds \notag \\
\le &1+ C\ln(N)(t-t_1). \label{est.sum.g}
\end{align}
We used the fact that for $N>1$
\begin{align*}
& \sup_{t_1\leq s\leq t}\int_{\mathbb{R}^6} g^{N}({\varphi^{1,N}_{s,0}}(Y) -{\varphi^{1,N}_{s,0}}(X_i))\mathbf{1}_{G^N(X_i)}(Y)k_0(Y)d^6Y  \\
\leq &  C\sup_{t_1\le s\le t}\int_{\mathbb{R}^3}\min\big(N^{3\beta},\frac{1}{|Y-{\varphi^{1,N}_{s,0}}(X_i)|^3}\big)\widetilde{k}^N_s(Y)d^3Y  \\
\leq & C\ln(N)
\end{align*}
holds. This leads us in particular for times $t_1\le t$  to 
$$\Delta^N_g(t,X)\le \Delta^N_g(t_1,X)+\int_{t_1}^t\delta^N_g(s,X)ds, $$
with the common abbreviations 
\begin{align}
&\delta^N_g(t,X)\coloneqq  \max_{i\in \mathcal{M}^N_g(X)}|[\Psi^{2,N}_{t,0}(X)]_i-{\varphi^{2,N}_{t,0}}(X_i)| \nonumber , \\ &\Delta^N_g(t,X)\coloneqq  \max_{i\in \mathcal{M}^N_g(X)}\sup_{0\le s\le t}|[\Psi^{1,N}_{s,0}(X)]_i-{\varphi^{1,N}_{s,0}}(X_i)|. \label{abbrev.delta}
\end{align}
By choosing the subsequent sequence of time steps $t^*\coloneqq  t_{n+1}-t_n= \frac{C}{\sqrt{\ln(N)}}$ for some constant $C>0$ with
\begin{align*}
& t_n=n\frac{C}{\sqrt{\ln(N)}} \text{ for }n\in \{0,...,\lceil \frac{\sqrt{\ln(N)}}{C}\tau^N(X)\rceil-1\},\\
& t_{\lceil \frac{\sqrt{\ln(N)}}{C}\tau^N(X)\rceil}=\tau^N(X)
\end{align*}
the previous relation implies that for $ t_n\le t\le \tau^N(X)$ 
\begin{align}
\Delta^N_g(t,X)\le \sum_{k=1}^n\sup_{0\le s \le t_k}\delta^N_g(s,X)t^*+\int_{t_n}^t\delta^N_g(s,X)ds.
\label{rel.v.q.bound}
\end{align}
It follows that for any `good' particle $i\in \mathcal{M}^N_g(X)$, the considered configurations and for all times $t\in [t_n,t_{n+1}]$, where $n\in \{0,...,\lceil \frac{\sqrt{\ln(N)}}{C}\tau^N(X)\rceil-1\}$ the following inequality holds
\begin{align}
&\delta^N_g(t,X)\notag \\
\le & \delta^N_g(t_n,X)+\max_{i\in \mathcal{M}^N_g(X)} \big|\int_{t_n}^t\Big(\frac{1}{N}\sum_{j\neq i}f^{N}([\Psi^{1,N}_{s,0}(X)]_i-[\Psi^{1,N}_{s,0}(X)]_j) \notag \\
&   -f^{N}*\widetilde{k}^N_s({\varphi^{1,N}_{s,0}}(X_i))\Big)ds\big| \notag  \\
\le &\max_{i\in \{1,...,N\}}\int_{t_n}^t \frac{2}{N}\sum_{j=1}^N g^{N}(\varphi^{1,N}_{s,0}(X_j)-{\varphi^{1,N}_{s,0}}(X_i))\mathbf{1}_{G^N(X_i)}(X_j)\underbrace{\Delta^N_g(s,X)}_{\le \Delta^N_g(t,X)}ds  \notag \\
&+\delta^N_g(t_n,X) +CN^{-\frac{5}{12}} \notag \\
\le &\big(1+ C\ln(N)\underbrace{(t-t_n)}_{\le t_{n+1}-t_n=t^*}\big)\big( \sum_{k=1}^n\sup_{0\le s \le t_k}\delta^N_g(s,X)t^*+\int_{t_n}^t\delta^N_g(r,X)dr\big)\notag \\
& +\delta^N_g(t_n,X)+CN^{-\frac{5}{12}} \notag \\
\le &\big(1+ C\ln(N)t^*\big)\int_{t_n}^t\delta^N_g(r,X) dr\notag \\
&  + \big(2+ C\ln(N)(t^*)^2\big)\sum_{k=1}^n\sup_{0\le s \le t_k}\delta^N_g(s,X) +CN^{-\frac{5}{12}}. 
\end{align} 
Application of Gronwall`s Lemma implies that for all times $t\in [t_n,t_{n+1}]$ it holds that
\begin{align}
 & \delta^N_g(t,X)\notag \\
\le &  \Big(\big(2+ C\ln(N)(t^*)^2\big)\sum_{k=1}^n\sup_{0\le s \le t_k}\delta^N_g(s,X) +CN^{-\frac{5}{12}}\Big) e^{t^*+ C\ln(N){(t^*)}^2}.
\end{align}
Especially for $t\in [0,t_n]$, we can exchange the left-hand side by its supremum over $[0,t_{n+1}]$. 
For $t^*=\frac{C_1}{\sqrt{\ln(N)}}$ with $C_1\coloneqq  \min\big(\frac{1}{\sqrt{C}},1\big)$ the previous relation implies 
\begin{align}
 & \sup_{0\le s \le t_{n+1}}\delta^N_g(s,X)
\le   3e^2\sum_{k=1}^n\sup_{0\le s \le t_k}\delta^N_g(s,X) +Ce^2N^{-\frac{5}{12}}. \label{Gron.v bound}
\end{align}
Due to this relation it follows for $n\in \{1,...,\lceil \frac{\sqrt{\ln(N)}}{C_1}\tau^N(X)\rceil\}$ that
\begin{align}
& \sup_{0\le s \le t_n}\delta^N_g(s,X)
\le  Ce^2N^{-\frac{5}{12}}(3e^2+1)^{n-1}. \label{upp.bound.v.thm1}
\end{align}
For $n=1$ the relation is obvious due to \eqref{Gron.v bound} and if it holds for $k\in \{1,...,n\}$, $n\in \N$, where we fix the constant $C$ for these estimates, then we obtain that
\begin{align*}
& \sup_{0\le s \le t_{n+1}}\delta^N_g(s,X)\\
\le &  3e^2\sum_{k=1}^n \underbrace{\sup_{0\le s \le t_k}\delta^N_g(s,X) }_{\le   Ce^2N^{-\frac{5}{12}}(3e^2+1)^{k-1}}+Ce^2N^{-\frac{5}{12}} \\
\le & 3e^2\big(  Ce^2N^{-\frac{5}{12}}\frac{(3e^2+1)^n-1}{(3e^2+1)-1}\big)+Ce^2N^{-\frac{5}{12}}\\
=&  Ce^2N^{-\frac{5}{12}}(3e^2+1)^n.
\end{align*}
This confirms the claim and it follows that 
\begin{align}
\sup_{0\le s \le \tau^N(X)}\delta^N_g(s,X)
\le &  Ce^2N^{-\frac{5}{12}}(3e^2+1)^{\lceil \frac{\sqrt{\ln(N)}}{C_1}\tau^N(X)\rceil-1} \notag \\
\le & Ce^2N^{-\frac{5}{12}}N^{ \frac{\ln(3e^2+1)}{\ln(N)}\frac{\sqrt{\ln(N)}}{C_1}T} \notag \\
\le & CN^{-\frac{5}{12}+\frac{\sigma}{2}},
\end{align}
for $N$ large enough.
The received upper bound for the velocity deviation implies that
\begin{align}
& \max_{i\in \mathcal{M}^N_g(X)}\sup_{0\le s \le \tau^N(X)}|[\Psi^N_{s,0}(X)]_i-\varphi^N_{s,0}(X_i)|  
\le  CN^{-\frac{5}{12}+\frac{\sigma}{2}},
\end{align}
which is smaller than necessary since $CN^{-\frac{5}{12}+\frac{\sigma}{2}}< N^{-\frac{5}{12}+\sigma}$ for $\sigma>0$ and $N$ large enough. \\
\subsection{Controlling the deviation of the bad and superbad particles}

Most estimates for the second part can be applied analogously, except that we allow more distance of the observed `bad' or `superbad' particle to its mean-field partner, since $\delta_s=N^{-\frac{1}{6}-\sigma}>N^{-\frac{5}{12}+\sigma}$ and $\delta_b=N^{-\frac{7}{24}-\sigma}>N^{-\frac{5}{12}+\sigma}$.
For the `good' particle this distance is of the same order as the cut-off radius. 
The vast majority of particles is typically `good', so we have control over the `collision partners' in most cases. 
By the definition of the distance, the considered `bad' or `superbad' particle is inside a ball of radius $N^{-\frac{1}{6}-\sigma}$ or respectively $N^{-\frac{7}{24}-\sigma}$ around its related mean-field particle. 
To circumvent this problem, we define a cloud of auxiliary `mean-field particles' around the `bad' or `superbad' particle, like proposed in \cite{grass}. 
`Hard' or `Superhard' collisions might cause that the observed particle departs too far from its initially corresponding mean-field particle, that propagates homogeneously in time. 
Phillip was able to show that for any point in time, we can find an auxiliary particle around the `bad' or `superbad' particle with a disance small than the cut-off.
By exchanging the these particles, we can copy the estimates from Section \ref{case1}. 

To ensure that we can apply Theorem \ref{Prop:LLN}, we have to introduce a `cloud' of auxiliary particles instead of a single one when needed, because the introduced auxiliary particle would depend on the whole configuration and thus be correlated with the remaining particles and we would loose the big advantage of the `mean-field particle'.
If we propagate the whole `cloud' from the beginning at the time of a `hard collision' for a certain particle the initial positions of the related auxiliary particles are chosen independently of the remaining configuration.
We will show that all of the auxiliary particles which belong to the small `cloud' fulfill corresponding demands with high probability like in the previous situation, where we could show for typical initial data that the related mean-field particles fulfill properties which made it possible to prove that the effective and the microscopic dynamics are usually close. 
In the upcoming part we will end up in a very similar situation as in Section \ref{case1} and we will benefit from the proof techniques of the previous chapter.

\subsection{Controlling the deviation of the superbad particles}\label{case2}
To create the particle cloud we first define 
\begin{align}
Q_N \coloneqq   & \{-\lceil N^{\frac{1}{4}} \rceil ,...,-1,0,1,..., \lceil  N^{\frac{1}{4}} \rceil\}^6  \label{Q_N sb}
\end{align}
and for $(k_1,...,k_6)\in Q_N$ the positions or the initial data of the auxiliary particles $X^i_{k_1,...,k_6}\coloneqq  X_i+\sum_{j=1}^6 k_jN^{-\frac{5}{12}+\frac{\sigma}{2}}{e}_j$, where ${e}_j$, $j\in \{1,...,6\}$ is the $j$-th basis vector of $\mathbb{R}^6$. 
According to Lemma \ref{Lemma distance same order}, which ensures that the distance between mean-field particles stays of the same order, and $\delta_s=N^{-\frac{1}{6}-\sigma}$ for $t\leq\tau$, it holds for arbitrary $t_1\in [0,\tau^N(X)]$ and large enough $N$ that 
\begin{align}
|\varphi^N_{0,t_1}([\Psi^N_{t_1,0}(X)]_i)-X_i|\le C|[\Psi^N_{t_1,0}(X)]_i-\varphi^N_{t_1,0}(X_i)|<CN^{-\frac{1}{6}-\sigma}\le N^{-\frac{1}{6}}.\label{dist casd bad}
\end{align}
It is always possible to find a tuple $(k_1,...,k_6)\in Q_N$ for $N$ large enough such that 
\begin{align}
|\varphi^N_{0,t_1}([\Psi^N_{t_1,0}(X)]_i)-X^i_{k_1,...,k_6}|\le \frac{\sqrt{6}}{2} N^{-\frac{5}{12}+\frac{\sigma}{2}} 
\end{align}
since \eqref{dist casd bad} is of smaller order with respect to $N$ than the diameter of the auxiliary `particle cloud' around $X_i$.
Lemma \ref{Lemma distance same order} implies in turn that
\begin{align}
|[\Psi^N_{t_1,0}(X)]_i-\varphi^N_{t_1,0}(X^i_{k_1,...,k_6})|\le C  N^{-\frac{5}{12}+\frac{\sigma}{2}}. 
\end{align} 
If we choose $N\in \mathbb{N}$ large enough such that $C  N^{-\frac{5}{12}+\frac{\sigma}{2}}< \frac{1}{2}  N^{-\frac{5}{12}+\sigma}$ and $\sigma>0$ sufficiently small, then there exists a further point in time $t_2 \in (t_1, T]$ such that not only
$$\sup_{s\in [t_1,t_2]}|[\Psi^{1,N}_{s,0}(X)]_i-{\varphi^{1,N}_{s,0}(X^i_{k_1,...,k_6})}|\le  N^{-\frac{5}{12}+\sigma}$$
holds, but also the following bound for the velocity deviation
$$\sup_{s\in [t_1,t_2]}|[\Psi^{2,N}_{s,0}(X)]_i-{^2\varphi^N_{s,0}(X^i_{k_1,...,k_6})}|\le   N^{-\frac{1}{6}-\sigma}.$$
Now we have a sufficiently good approximation for the trajectory of real particle, given by the trajectory of the auxiliary particle with initial datum $X^i_{k_1,...,k_6}$ for this time span.
We apply this to prove that 
\begin{align}
&\sup_{t_1\le s \le t}|[\Psi^N_{s,0}(X)]_i-{\varphi^N_{s,0}}(X_i)|\nonumber\\
&\leq\sup_{t_1\le s \le t}|[\Psi^N_{s,0}(X)]_i-{\varphi^N_{s,0}}(X^i_{k_1,...,k_6})|+\sup_{t_1\le s \le t}|{\varphi^N_{s,0}}(X^i_{k_1,...,k_6})-{\varphi^N_{s,0}}(X_i)| \label{superbaddouble.upp.bound}
\end{align}
grows slow enough on this interval. 
The considerations for the first term is mostly analogous to the estimates of case 1, see Section \ref{case1}, because the spatial distance between the considered auxiliary particle and the `real' particle is bounded by $N^{-\frac{5}{12}+\sigma}$ like the largest allowed deviation for a `good' particle.

From now on we will assume that for an arbitrary point in time $t_1\in [0,\tau^N(X))$ and $X\in \mathbb{R}^{6N}$ the initial position of the auxiliary particle $X^i_{k_1,...,k_6}$ and $t_2\in (t_1,\tau^N(X)]$ are chosen such that the previously introduced demands are fulfilled on $[t_1,t_2]$. 
Following the notation of \cite{grass} we abbreviate $\widetilde{X}_i\coloneqq  X^i_{k_1,...,k_6}$ but remind that $t_2$ and the choice of $(k_1,...,k_6)\in Q_N$ depends on $i,t_1$ and $X$.\\
Controlling the growth of the second term is a simple application of Lemma \ref{Lemma distance same order}. 
It follows for arbitrary $t\in [t_1,t_2]$ that
\begin{align}
& |{\varphi^N_{t,0}}(\widetilde{X}_i)-{\varphi^N_{t,0}}(X_i)| \notag \\
\le & e^{C(t-t_1)}|{\varphi^N_{t_1,0}}(\widetilde{X}_i)-{\varphi^N_{t_1,0}}(X_i)| \notag \\
\le & e^{C(t-t_1)}\big(\big|\varphi^N_{t_1,0}({X}_i)-[\Psi^N_{t_1,0}(X)]_i\big|+\big|[\Psi^N_{t_1,0}(X)]_i-\varphi^N_{t_1,0}(\widetilde{X}_i)\big|\big)\notag  \\
\le & e^{C(t-t_1)}\big(\big|\varphi^N_{t_1,0}({X}_i)-[\Psi^N_{t_1,0}(X)]_i\big|+N^{-\frac{5}{12}+\sigma}\big), \label{est.eff.superbad.term}
\end{align}
where we applied bound \eqref{dist casd bad} according to the choice of $\widetilde{X}_i$. 
This concludes the estimates for this term and we will return to it at the end of this subsection after estimating Term \eqref{eq:sbad2}, Term \eqref{eq:sbad3} and Term \eqref{eq:sbad1}.\\
For the second term we first remark that
\begin{align}
&  |[^2\Psi^N_{t,0}(X)]_i-{^2\varphi^N_{t,0}}(\widetilde{X}_i)|  \notag \\
\le  & |[^2\Psi^N_{t_1,0}(X)]_i-{^2\varphi^N_{t_1,0}}(\widetilde{X}_i)|  \notag \\
& + \big|\int_{t_1}^{t}\Big(\frac{1}{N}\sum_{j\neq i}f^{N}([\Psi^{1,N}_{s,0}(X)]_i-[\Psi^{1,N}_{s,0}(X)]_j)-f^{N}*\widetilde{k}_s(\varphi^{1,N}_{s,0}(\widetilde{X}_i))\Big)ds\big|. 
\end{align}
To derive an upper bound for the force term, note that the same structure as in the previous case. Thus we can again apply multiple times triangle inequality and obtain essentially the four terms of case 1, see Section \ref{case1},
{\allowdisplaybreaks
\begin{align}
& \big|\int_{t_1}^{t}\frac{1}{N}\sum_{j\neq i}f^{N}([\Psi^{1,N}_{s,0}(X)]_i-[\Psi^{1,N}_{s,0}(X)]_j)-f^{N}*\widetilde{k}_s(\varphi^{1,N}_{s,0}(\widetilde{X}_i))ds\big| \label{force.bad case.sb}\\
\le &  \big|\int_{t_1}^{t}\frac{1}{N}\sum_{j\neq i}f^{N}([\Psi^{1,N}_{s,0}(X)]_i-[\Psi^{1,N}_{s,0}(X)]_j)\mathds 1_{(G^N(\widetilde{X}_i))^C}(X_j)ds\big| \label{eq:sbad1} \\
&+  \big|\int_{t_1}^{t}\frac{1}{N}\sum_{j\neq i}\Big(
f^{N}([\Psi^{1,N}_{s,0}(X)]_i-[\Psi^{1,N}_{s,0}(X)]_j)\mathds 1_{G^N(\widetilde{X}_i)}(X_j) \notag  \\
&-f^{N}(\varphi^{1,N}_{s,0}(\widetilde{X}_i)-{\varphi^{1,N}_{s,0}}(X_j))\mathds 1_{G^N(\widetilde{X}_i)}(X_j)\Big) ds\big| \label{eq:sbad2} \\
& +\big| \int_{t_1}^{t} \frac{1}{N}\sum_{j\neq i}f^{N}(\varphi^{1,N}_{s,0}(\widetilde{X}_i)-{\varphi^{1,N}_{s,0}}(X_j))\mathds 1_{G^N(\widetilde{X}_i)}(X_j)ds  \notag\\
&-\int_{t_1}^{t}\int_{\mathbb{R}^6}f^N(\varphi^{1,N}_{s,0}(\widetilde{X}_i)-{\varphi^{1,N}_{s,0}}(Y))\mathds 1_{G^N(\widetilde{X}_i)}(Y)k_0(Y)d^6Yds\big| \label{eq:sbad3}  \\
& +\big|\int_{t_1}^{t}\Big(\int_{\mathbb{R}^6}f^N(\varphi_{s,0}^{1,N}(\widetilde{X}_i)-{\varphi^{1,N}_{s,0}}(Y))\mathds 1_{G^N(\widetilde{X}_i)}(Y)k_0(Y)d^6Y  \notag\\
&-f^{N}*\widetilde{k}^N_s(\varphi^{1,N}_{s,0}(\widetilde{X}_i))\Big)ds\big|. \label{eq:sbad4}
\end{align} }
\subsubsection{Estimate of Term \ref{eq:sbad4}}
An upper bound for Term \eqref{eq:sbad4} can be derived analogously to the estimates of Term \ref{eq:good4} and thus is also given by $ CN^{-\frac{5}{12}}$ as 
\begin{align*}
& f^{N}*\widetilde{k}^N_s({\varphi^{1,N}_{s,0}}(\widetilde{X}_i))\\
&=\int_{\mathbb{R}^6}f^{N}({\varphi^{1,N}_{s,0}}(\widetilde{X}_i)-{^1Y})k^N_s(Y)d^6Y\\
&= \int_{ \mathbb{R}^6}f^{N}({\varphi^{1,N}_{s,0}}(\widetilde{X}_i)-{\varphi^{1,N}_{s,0}}(Y))k_0(Y)d^6Y,
\end{align*}
which yields
\begin{align*}
& \big|\int_{t_1}^t\int_{ \mathbb{R}^6}f^{N}(\varphi^{1,N}_{s,0}(\widetilde{X}_i)-{\varphi^{1,N}_{s,0}}(Y))k_0(Y)\mathds 1_{G^N(\widetilde{X}_i)}(Y)d^6Yds \notag \\
&-\int_{t_1}^tf^{N}*\widetilde{k}^N_s({\varphi^{1,N}_{s,0}}(\widetilde{X}_i))ds\big| \notag \\
= & \big|\int_{t_1}^t\int_{ \mathbb{R}^6}f^{N}(\varphi^{1,N}_{s,0}(\widetilde{X}_i)-{\varphi^{1,N}_{s,0}}(Y))k_0(Y)(\mathds 1_{G^N(\widetilde{X}_i)}(Y)-1)d^6Yds\big| \notag \\
\le & T\|f^{N}\|_{\infty}\int_{ \mathbb{R}^6}\mathds 1_{(G^N(\widetilde{X}_i))^C}(Y)k_0(Y)d^6Y \notag \\
\le& CTN^{2\beta} (N^{-2b_r-4b_v}+N^{-2s_r-4s_v})\\
\le& CTN^{\frac{15}{18}-2b_r-4b_v}\\
\le&  C N^{-\frac{5}{12}-4\sigma}
\end{align*}

\subsubsection{Estimate of Term \ref{eq:sbad2} and Term \ref{eq:sbad3}}
For the Terms \eqref{eq:sbad2} and \eqref{eq:sbad3} we will utilize Theorem \ref{Prop:LLN}.
Since according to the choice of $t_1, t_2$ and $\widetilde{X}_i$ it holds that $\sup_{t_1\le s \le t_2}|[\Psi^{1,N}_{s,0}(X)]_i-{\varphi^{1,N}_{s,0}}(\widetilde{X}_i)|\le N^{-\frac{5}{12}+\sigma}$, it follows by estimating with the map $g^N$ that
\begin{align}
& \big|\int_{t_1}^{t}\frac{1}{N}\sum_{j\neq i}\Big(f^{N}([\Psi^{1,N}_{s,0}(X)]_i-[\Psi^{1,N}_{s,0}(X)]_j)\mathds 1_{G^N(\widetilde{X}_i)}(X_j) \notag  \\
&-f^{N}(\varphi^{1,N}_{s,0}(\widetilde{X}_i)-{\varphi^{1,N}_{s,0}}(X_j))\mathds 1_{G^N(\widetilde{X}_i)}(X_j)\Big) ds\big| \label{good force,sbad case} \\
\le  &  \big|\int_{t_1}^{t}\frac{1}{N}\sum_{j\in \mathcal{M}^N_b(X)\setminus \{i\}}\Big(f^{N}([\Psi^{1,N}_{s,0}(X)]_i-[\Psi^{1,N}_{s,0}(X)]_j)\mathds 1_{G^N(\widetilde{X}_i)}(X_j) \notag   \\
&-f^{N}(\varphi^{1,N}_{s,0}(\widetilde{X}_i)-{\varphi^{1,N}_{s,0}}(X_j))\mathds 1_{G^N(\widetilde{X}_i)}(X_j)\Big) ds\big| \notag  \\
& +\int_{t_1}^{t}\frac{1}{N}\sum_{j\in \mathcal{M}^N_g(X)\setminus \{i\}}\Big( g^{N}(\varphi^{1,N}_{s,0}(\widetilde{X}_i)-{\varphi^{1,N}_{s,0}}(X_j))\mathds 1_{G^N(\widetilde{X}_i)}(X_j) \notag  \\
& \cdot \big(|[\Psi^{1,N}_{s,0}(X)]_i-{\varphi^{1,N}_{s,0}}(\widetilde{X}_i)|+|[\Psi^{1,N}_{s,0}(X)]_j-{\varphi^{1,N}_{s,0}}({X}_j)| \big)\Big) ds. \label{est.sb2}
\end{align}
All of these terms have basically the same structure as in case 1, see Section \ref{case1}, and the upper bound of the deviation of the true and the auxiliary dynamic is the same as the allowed deviations of `good' particles and so we only have to make minor modifications to the definitions of the unlikely sets $\mathcal{B}^{N,\sigma}_{i,j}$.
We define for $(k_1,...,k_6)\in Q_N$
\begin{align}
& X\in \mathcal{B}_{1,i,(k_1,..,k_6)}^{N,\sigma}\subseteq \mathbb{R}^{6N} \notag\\
\Leftrightarrow & \exists t_1',t_2'\in [0,T]:\notag \\
& \Big|\int_{t_1'}^{t_2'}\Big(\frac{1}{N}\sum_{j\neq i}f^{N}(\varphi^{1,N}_{s,0}(X^i_{k_1,...,k_6})-{\varphi^{1,N}_{s,0}}(X_j))\mathds 1_{G^N(X^i_{k_1,...k_6})}(X_j) \notag \\
&  -\int_{\mathbb{R}^6} f^{N}(\varphi^{1,N}_{s,0}(X^i_{k_1,...,k_6})-{\varphi^{1,N}_{s,0}}(Y)) \notag \\
& \hspace{3,5cm} \cdot \mathds 1_{G^N(X^i_{k_1,...k_6})}(Y)k_0(Y)
d^6Y\Big)ds\Big| > N^{-\frac{5}{12}}\ \vee \label{Def.B-Sbad.2} \\
& \Big|\int_{t_1'}^{t_2'}\Big(\frac{1}{N}\sum_{j\neq i}g^{N}(\varphi^{1,N}_{s,0}(X^i_{k_1,...,k_6})-{\varphi^{1,N}_{s,0}}(X_j))\mathds 1_{G^N(X^i_{k_1,...k_6})}(X_j) \notag \\
&  -\int_{\mathbb{R}^6}  g^N(\varphi^{1,N}_{s,0}(X^i_{k_1,...,k_6})-{\varphi^{1,N}_{s,0}}(Y)) \notag \\
& \hspace{3,5cm}\cdot \mathds 1_{G^N(X^i_{k_1,...k_6})}(Y)k_0(Y)
d^6Y\Big)ds\Big| > 1\ \label{Def.B-Sbad.1} 
\end{align}
Hence, statement \eqref{Def.B-Sbad.2} has the same structure as $\mathcal{B}^{N,\sigma}_{1,i}$ but note that in this case $X_i$ is replaced by the initial data of another auxiliary particle $X^i_{k_1,...,k_6}\coloneqq  X_i+\sum_{j=1}^6k_j N^{-\frac{5}{12}+\frac{\sigma}{2}}e_j$. 
For statement \eqref{Def.B-Sbad.1} a corresponding relationship holds, however with respect to $\mathcal{B}^{N,\sigma}_{2,i}$. 
It follows analogous, to the reasoning applied for the sets $\mathcal{B}^{N,\sigma}_{j,i}, \ j\in \{1,2\}$ that for any $\gamma>0$ there exists a $C_\gamma>0$ such that for all $N\in \N$
$$
\mathbb{P}\big(X\in \mathcal{B}^{N,\sigma}_{1,i,(k_1,...,k_6)}\big)\le C_\gamma N^{-\gamma}
.$$ 
By restricting the initial data to this set we can estimate Term \eqref{eq:sbad3} and the second term of \eqref{est.sb2}.
We are left with the considerations for the first term of \eqref{est.sb2} and  Term \eqref{eq:sbad1}. 
In the proof of case 1, see Section \ref{case1} the set $\mathcal{B}_{3,i}^{N,\sigma}$ was introduced to deal with the corresponding term of \eqref{est.sb2}.
Since the situation is basically the same we just have to modify the definition such that it applies for $X^i_{k_1,...,k_6}$ and for $(k_1,...,k_6)\in Q_N$:
\begin{align}
& X\in \mathcal{B}_{2,i,(k_1,...,k_6)}^{N,\sigma} \subseteq \mathbb{R}^{6N} \notag  \\
\Leftrightarrow & \exists l\in I_{\sigma}: \Big(R_l\neq \infty\ \land \notag  \\ &\sum_{j \in \mathcal{M}^N_s(X)\setminus \{i\}}\mathds 1_{M^N_{(r_l,R_l),(v_l,V_l)}(X^i_{k_1,...,k_6})}(X_j) \geq  N^{\frac{2\sigma}{9}}\big\lceil N^{\frac{2}{9}} R_l^2\min\big(\max(V_l,R_l),1\big)^4\big\rceil\Big) \ \vee \notag  \\
& \sum_{j \in \mathcal{M}^N_s(X)\setminus \{i\}}1\geq N^{\frac{2}{9}(1+\sigma)}  
\end{align}
For $X\in  \big(\mathcal{B}_{2,i,(k_1,\ldots,k_6)}^{N,\sigma} \big)^C$ and $t\in [t_1,t_2]$ the term 
\begin{align}
& \big|\int_{t_1}^{t}\frac{1}{N}\sum_{j\in \mathcal{M}^N_s(X)\setminus \{i\}}\Big(f^{N}([\Psi^{1,N}_{s,0}(X)]_j-[\Psi^{1,N}_{s,0}(X)]_i)\mathds 1_{G^N(\widetilde{X}_i)}(X_j) \notag   \\
&-f^{N}(\varphi^{1,N}_{s,0}(X_j)-{\varphi^{1,N}_{s,0}}(\widetilde{X}_i))\mathds 1_{G^N(\widetilde{X}_i)}(X_j)\Big) ds\big| 
\end{align}
can be estimated similar to case 1, see Section \ref{case1}. 
For this purpose, one has to take into account the choice of the interval $[t_1,t_2]$, because for this time span it holds that
\begin{align*}
& \sup_{t\in [t_1,t_2]}|[\Psi^{1,N}_{s,0}(X)]_j-{\varphi^{1,N}_{s,0}}(\widetilde{X}_i)|\le N^{-\frac{5}{12}+\sigma}\ \land \\
& \sup_{t\in [t_1,t_2]}|[\Psi^{2,N}_{s,0}(X)]_j-{^2\varphi^N_{s,0}}(\widetilde{X}_i)|\le N^{-\frac{1}{6}-\sigma}.
\end{align*}
The estimates from case 1, see Section \ref{case1}, can be copied to the current situation and hence the previously derived upper bound $CN^{-\frac{5}{12}}$ can be applied.\\
This concludes the considerations for Term \eqref{eq:sbad2}.  
Due to the definition of the set \eqref{Def.B-Sbad.1} and the subsequent reasoning it holds for configurations $X\in \big( \mathcal{B}^{N,\sigma}_{1,i,(k_1,\ldots,k_6)}\cup  \mathcal{B}^{N,\sigma}_{2,i,(k_1,\ldots,k_6)}\big)^C$ and $t\in [t_1,t_2]$ that
{\allowdisplaybreaks
\begin{align}
& \big|\int_{t_1}^{t}\frac{1}{N}\sum_{j\in \mathcal{M}^N_b(X)\setminus \{i\}}\Big(f^{N}([\Psi^{1,N}_{s,0}(X)]_i-[\Psi^{1,N}_{s,0}(X)]_j)\mathds 1_{G^N(\widetilde{X}_i)}(X_j) \notag   \\
&-f^{N}(\varphi^{1,N}_{s,0}(\widetilde{X}_i)-{\varphi^{1,N}_{s,0}}(X_j))\mathds 1_{G^N(\widetilde{X}_i)}(X_j)\Big) ds\big| \notag  \\
& +\int_{t_1}^{t}\frac{1}{N}\sum_{j\in \mathcal{M}^N_g(X)\setminus \{i\}}\Big( g^{N}(\varphi^{1,N}_{s,0}(\widetilde{X}_i)-{\varphi^{1,N}_{s,0}}(X_j))\mathds 1_{G^N(\widetilde{X}_i)}(X_j)\notag  \\
& \cdot \big(|[\Psi^{1,N}_{s,0}(X)]_i-{\varphi^{1,N}_{s,0}}(\widetilde{X}_i)|+|[\Psi^{1,N}_{s,0}(X)]_j-{\varphi^{1,N}_{s,0}}({X}_j)| \big)\Big) ds \notag \\
\le &  CN^{-\frac{5}{12}} \notag  \\
& +\Big(1+\int_{t_1}^{t}\int_{\mathbb{R}^6}  g^{N}(\varphi^{1,N}_{s,0}(\widetilde{X}_i)-{\varphi^{1,N}_{s,0}}(Y))k_0(Y)
d^6Yds\Big) \notag\\
& \cdot \sup_{s\in [t_1,t]}\Big(|[\Psi^{1,N}_{s,0}(X)]_i-{\varphi^{1,N}_{s,0}}(\widetilde{X}_i)|+\max_{j\in \mathcal{M}_g^N(X) }|[\Psi^{1,N}_{s,0}(X)]_j-{\varphi^{1,N}_{s,0}}({X}_j)| \Big) \notag \\
\le & CN^{-\frac{5}{12}} +C\big(1+(t-t_1)\ln(N)\big)N^{-\frac{5}{12}} \label{est.sb1}.
\end{align}}
The derivation of the upper bound for the first term was already discussed previously. 
For the upper bound of the second term we remind that $0\le g^N(q)\le C\min(N^{3\beta},\frac{1}{|q|^3})$ which leads to the factor $C\ln(N)$ after the integration. Further for $s\in [t_1,t]$ since  $t\in [t_1,t_2]\subseteq [t_1,\tau^N(X)]$ it holds that
$$|[\Psi^{1,N}_{s,0}(X)]_i-{\varphi^{1,N}_{s,0}}(\widetilde{X}_i)|+\max_{j\in \mathcal{M}_g^N(X) }|[\Psi^{1,N}_{s,0}(X)]_j-{\varphi^{1,N}_{s,0}}({X}_j)|\le 2N^{-\frac{5}{12}},$$ by the constraints on $t_2$ and the definition of the stopping time, see \ref{stoppingtime}).\\
\subsubsection{Estimate of Term \ref{eq:sbad1}}
In contrast to case 1, see Section \ref{case1}, the last remaining Term \eqref{eq:sbad1} has impact on the prove. 
It takes into account the impact of the `superhard' collisions with `superbad' or `hard'  with `bad' collision partners and is given by
\begin{align*}
 \left|\int_{t_1}^{t}\frac{1}{N}\sum_{j\neq i}f^{N}([\Psi^{1,N}_{s,0}(X)]_i-[\Psi^{1,N}_{s,0}(X)]_j)\mathds 1_{(G^N(\widetilde{X}_i))^C}(X_j)ds\right|.
\end{align*} 
The non-negligibility of this term is the first significant modification in contrast to the considerations for the `good' particles in case 1, see Section \ref{case1}. 
For this reason we introduce a set of inappropriate initial data for $(k_1,\ldots,k_6)\in Q_N$ and $i\in \{1,\ldots,N\}$
\begin{align}
\begin{split}
&X\in \mathcal{B}_{3,i,(k_1,\ldots,k_6)}^{N,\sigma}\subseteq \mathbb{R}^{6N} 
\Leftrightarrow   \sum_{j \neq i}\mathds 1_{M^N_{6N^{-\frac{1}{3}-\sigma},N^{-\frac{5}{18}}}(X^i_{k_1,\ldots,k_6})}(X_j)\geq  N^{\frac{\sigma}{2}}
\end{split} \label{def.B_3.superbad}
\end{align} 
It measures the amount of particles coming very close to the auxiliary particle cloud.
For configurations $X \notin  \mathcal{B}_{3,i,(k_1,\ldots,k_6)}^{N,\sigma}$ it holds that this last remaining term is bounded by
\begin{align} 
 CN^{\frac{\sigma}{2}-1}\|f^N\|_{\infty}|t-t_1| & \le  CN^{\frac{\sigma}{2}-1}\big(N^{\frac{5}{12}-\sigma}\big)^2|t-t_1| \notag \\
 & \le CN^{-\frac{1}{6}-\frac{3\sigma}{2}}|t-t_1|. \label{upp.bound.superbad.term}
 \end{align}
As $\mathbb{P}\big(Y\in \mathbb{R}^6:Y\notin G^N(X_i) \big)
\le   CN^{-\frac{5}{4}-2 \sigma}$  it follows that
 \begin{align}
& \mathbb{P}\big( X\in \mathcal{B}^{N,\sigma}_{3,i,(k_1,\ldots,k_6)}\big)\
\le  \binom{N}{\lceil N^{\frac{\sigma}{2}}\rceil}\big(  CN^{-\frac{5}{4}-2 \sigma}\big)^{\lceil N^{\frac{\sigma}{2}}\rceil}\le CN^{-\frac{1}{4}\lceil N^{\frac{\sigma}{2}}\rceil}.
 \end{align}
\subsubsection{Conclusion case 2 (labelled particle $X_i$ is superbad)}
All applied estimates work for arbitrary $t_1,t_2$ fulfilling the initially introduced demands
$$X \in \Big(\bigcup_{j\in \{1,2,3\}}\bigcup_{i=1}^N\bigcup_{(k_1,\ldots,k_6)\in Q_N}\mathcal{B}^{N,\sigma}_{j,i,(k_1,\ldots,k_6)}\Big)^C.$$
From now on we restrict ourselves to these good configurations.
We already discussed that for any $\gamma>0$ there exists a constant $C_\gamma>0$ such that $
\mathbb{P}\big(X\in \mathcal{B}^{N,\sigma}_{1,i,(k_1,\ldots,k_6)}\big)
\le C_{\gamma}N^{-\gamma}$ and according to the proof of the first case it holds that $\mathbb{P}\big(X\in \mathcal{B}^{N,\sigma}_{2,i,(k_1,\ldots,k_6)}\big) \le (CN^{-\frac{16\sigma}{9}} )^{\frac{N^{\frac{\sigma}{9}}}{2}}$. 
Since $|Q_N|\le (3\lceil N^\frac{1}{4}\rceil )^6 \le CN^{\frac{3}{2}} $ (see \eqref{Q_N sb}), it is possible to choose the constant $C_{\gamma}>0$ such that 
\begin{align*}
& \mathbb{P}\Big(\bigcup_{j\in \{1,2,3\}}\bigcup_{i=1}^N\bigcup_{(k_1,\ldots,k_6)\in Q_N}\mathcal{B}^{N,\sigma}_{j,i,(k_1,\ldots,k_6)} \Big) \le C_{\gamma}N^{-\gamma}
\end{align*}
holds for a given $\gamma>0$ and all $N\in \N$ and for all configurations
$$X\in \Big(\bigcup_{j\in \{1,2,3\}}\bigcup_{i=1}^N\bigcup_{(k_1,\ldots,k_6)\in Q_N}\mathcal{B}^{N,\sigma}_{j,i,(k_1,\ldots,k_6)} \Big)^C$$ 
all derived upper bounds are fulfilled for arbitrary `triples' $t_1,t_2$ and $\widetilde{X}_i$ provided they are chosen according to the introduced constraints on them. 
We obtain that Term \eqref{eq:sbad2} is bounded by $C(1+(t-t_1)\ln(N))N^{-\frac{5}{12}}$, see \eqref{est.b1}. The upper bound for Term \eqref{eq:sbad3} and Term \eqref{eq:sbad4} is given by $N^{-\frac{5}{12}}$. The upper bound for Term \eqref{eq:sbad1} is given by $CN^{-\frac{1}{6}-\frac{3\sigma}{2}}(t-t_1)$. 
It follows for $t\in [t_1,t_2]$ and for small enough $\sigma>0$ that the Term \eqref{force.bad case.sb} is bounded by
$$C\big(N^{-\frac{1}{6}-\frac{3\sigma}{2}}(t-t_1)+N^{-\frac{5}{12}} \big).$$ 
With $|[\Psi^{N}_{t_1,0}(X)]_i-{\varphi^{N}_{t_1,0}}(\widetilde{X}_i)|\leq \frac{N^{-\frac{5}{12}+\sigma}}{2}$ we obtain that for any $i\in \{1,\ldots ,N\}$ and for all times $t\in [t_1,t_2]$ the following inequality holds
\begin{align}
&|[\Psi^{2,N}_{t,0}(X)]_i-{\varphi^{2,N}_{t,0}}(\widetilde{X}_i)|\notag \\
\le & |[\Psi^{2,N}_{t_1,0}(X)]_i-{\varphi^{2,N}_{t_1,0}}(\widetilde{X}_i)|\notag \\
&+\big|\int_{t_1}^t\Big(\frac{1}{N}\sum_{j\neq i}f^{N}([\Psi^{1,N}_{s,0}(X)]_i-[\Psi^{1,N}_{s,0}(X)]_j) -f^{N}*\widetilde{k}^N_s({\varphi^{1,N}_{s,0}}(\widetilde{X}_i))\Big)ds\big| \notag  \\
\le & |[\Psi^{2,N}_{t_1,0}(X)]_i-{\varphi^{2,N}_{t_1,0}}(\widetilde{X}_i)| +C\big(N^{-\frac{1}{6}-\frac{3\sigma}{2}}(t-t_1)+N^{-\frac{5}{12}} \big)\\
&\le \frac{N^{-\frac{5}{12}+\sigma}}{2}  +C\big(N^{-\frac{1}{6}-\frac{3\sigma}{2}}(t-t_1)+N^{-\frac{5}{12}} \big)\label{best.t_2,vs}.
 \end{align} 
Now it is straightforward to find an upper bound for the spatial deviation for $t\in [t_1,t_2]$:
\begin{align}
& |[\Psi^{1,N}_{t,0}(X)]_i-{\varphi^{1,N}_{t,0}}(\widetilde{X}_i)|\notag \\
\le & |[\Psi^{1,N}_{t,0}(X)]_i-{\varphi^{1,N}_{t,0}}(\widetilde{X}_i)|+\int_{t_1}^t|[^2\Psi^{N}_{s,0}(X)]_i-{^2\varphi^{N}_{s,0}}(\widetilde{X}_i)|ds \notag \\
\le & \frac{N^{-\frac{5}{12}+\sigma}}{2}+  C\big(N^{-\frac{1}{6}-\frac{3\sigma}{2}}(t-t_1)^2+N^{-\frac{5}{12}}(t-t_1) \big).\label{best.t_2,xs}
\end{align}
The time $t_1$ denotes an arbitrary moment in $[0,\tau^N(X))$ before the stopping time is triggered. At this point in time we argued that it is always possible to find an auxiliary particle of the introduced `auxiliary cloud' which is closer in phase space to the observed `real' particle than $\frac{N^{-\beta}}{2}=\frac{N^{-\frac{5}{12}+\sigma }}{2}$. 
At time $t_2 \in (t_1,\tau^N(X)]$ the distance in (physical) space between this auxiliary particle and the `real' one still fulfils $$\sup_{t_1\le t \le t_2}|[\Psi^{1,N}_{t,0}(X)]_i-{\varphi^{1,N}_{t,0}}(\widetilde{X}_i)|\le N^{-\frac{5}{12}+\sigma},$$ while for the velocity deviation the much larger upper bound $$\sup_{t_1\le t \le t_2}|[\Psi^{2,N}_{t,0}(X)]_i-{\varphi^{2,N}_{t,0}}(\widetilde{X}_i)|\le N^{-\frac{1}{6}-\sigma}$$ was allowed. 
After that point in time maybe a new auxiliary particle of the `auxiliary cloud' which is closer to the observed `real' particle must be chosen for further estimates. 
The possible length of such an interval $[t_1,t_2]$, where the same auxiliary particle can be applied can be derived by \eqref{best.t_2,vs} and \eqref{best.t_2,xs}.

However, for large enough $N\in \N$ and $\sigma>0$ small enough the subsequent implication holds
\begin{align*}
&t-t_1\le N^{-\frac{1}{8}}
\Rightarrow \begin{cases}\frac{N^{-\frac{5}{12}+\sigma}}{2}+C\big(N^{-\frac{1}{6}-\frac{3\sigma}{2}}(t-t_1)^2+N^{-\frac{5}{12}}(t-t_1) \big)\le N^{-\frac{5}{12}+\sigma} \\
\frac{N^{-\frac{5}{12}+\sigma}}{2}+C\big(N^{-\frac{1}{6}-\frac{3\sigma}{2}}(t-t_1)+N^{-\frac{5}{12}} \big)\le CN^{-\frac{7}{24}-\frac{3\sigma}{2}}\le N^{-\frac{1}{6}-\sigma} \end{cases}
\end{align*}
and thus, according to relations \eqref{best.t_2,vs} and \eqref{best.t_2,xs}, the point in time $t_2\coloneqq  t_1+N^{-\frac{1}{8}}$ is a possible option such that the constraints on $t_2$ are fulfilled. 
Hence, bound \eqref{best.t_2,vs} and \eqref{best.t_2,xs} yield for $t_2$ and small enough $\sigma>0$ that
\begin{align*}
\sup_{t_1\le s \le t_2}|[\Psi^{N}_{t,0}(X)]_i-{\varphi^{N}_{t,0}}(\widetilde{X}_i)|\le CN^{-\frac{1}{6}-\frac{3\sigma}{2}}(t_2-t_1)=CN^{-\frac{7}{24}-\frac{3\sigma}{2}}.
\end{align*}
Considering estimate \eqref{est.eff.superbad.term} we obtain for $t\in [t_1,t_1+N^{-\frac{1}{8}}]$, the considered configurations, large enough $N$ and sufficiently small $\sigma>0$ that Term \eqref{superbaddouble.upp.bound} is bounded by
\begin{align}
& \sup_{t_1\le s \le t}|[\Psi^N_{s,0}(X)]_i-{\varphi^N_{s,0}}(X_i)|\notag \\
\le &\sup_{t_1\le s \le t}|[\Psi^N_{s,0}(X)]_i-{\varphi^N_{s,0}}(X^i_{k_1,\ldots,k_6})|+\sup_{t_1\le s \le t}|{\varphi^N_{s,0}}(X^i_{k_1,\ldots,k_6})-{\varphi^N_{s,0}}(X_i)|  \notag \\
 \le & CN^{-\frac{7}{24}-\frac{3\sigma}{2}}+ e^{C(t-t_1)}\big|[\Psi^N_{t_1,0}(X)]_i-\varphi^N_{t_1,0}({X}_i)\big|. 
\end{align}
The first point in time $t_1\in [0,\tau^N(X))$ was chosen arbitrarily and based on that we define a sequence of time steps 
$$t_n\coloneqq  nN^{-\frac{1}{8}}\text{ for }n\in \{0,\ldots,\lceil \tau^N(X)N^{\frac{1}{8}}\rceil-1\} \text{ and }t_{\lceil \tau^N(X)N^{\frac{1}{8}}\rceil}\coloneqq  \tau^N(X)$$
and thereby receive a corresponding sequence of inequalities 
 \begin{align*}
 \sup_{t_n\le s \le t_{n+1}}|[\Psi^N_{s,0}(X)]_i-{\varphi^N_{s,0}}(X_i)|\notag 
\le  CN^{-\frac{7}{24}-\frac{3\sigma}{2}}+ e^{CN^{-\frac{1}{8}}}\big|[\Psi^N_{t_n,0}(X)]_i-\varphi^N_{t_n,0}({X}_i)\big|.
\end{align*}
Inductively we derive that 
\begin{align*}
\sup_{0\le s \le t_n}|[\Psi^N_{s,0}(X)]_i-{\varphi^N_{s,0}}(X_i)|
\le CN^{-\frac{7}{24}-\frac{3\sigma}{2}} \sum_{k=0}^{n-1} e^{2CN^{-\frac{1}{8}}k}.
\end{align*}
An upper bound for the possible values of $n$ is given by $\lceil T N^{\frac{1}{8}}\rceil$ and this yields that
\begin{align*}
\sup_{0\le s \le \tau^N(X)}|[\Psi^N_{s,0}(X)]_i-{\varphi^N_{s,0}}(X_i)|\le CN^{-\frac{1}{6}-\frac{3}{2}\sigma}.
\end{align*}
For sufficiently large $N$ this value stays smaller than the allowed distance between the mean-field and the real trajectory $N^{-\frac{1}{6}-\sigma}$, which shows that also the `superbad' particles do typically not `trigger' the stopping time for the relevant $N$ and $\sigma$.

\subsection{Controlling the deviation of the bad particles}
Now we are left with the last set, the set of bad particles.
This intermediate set was defined as
$$\mathcal{M}_{b}^{N}(X)\coloneqq  \lbrace 1,\hdots,N\rbrace :\exists j\in \lbrace 1,\hdots ,N\rbrace\setminus \lbrace i\rbrace:X_j\in (M_{(r_b,v_b)}^{N}(X_j)\setminus M_{(r_s,v_s)}^{N}(X_j)).$$
The advantage of this set is that it contains less particles than the amount of good particles, but more than amount of superbad ones.
For particles in this set we allow intermediate deviation to their mean-field partners as bad events, i.e. particles coming close to each other, still occur.
We would also like to use the estimates of case 1, see Section \ref{case1}, and therefore we introduce the particle cloud which provides us the auxiliary particles like in case 2. 
This time $Q_N$ is given by
\begin{align}
Q_N \coloneqq   & \{-\lceil N^{\frac{1}{8}} \rceil ,\ldots,-1,0,1,\ldots, \lceil  N^{\frac{1}{8}} \rceil\}^6  \label{Q_N}
\end{align}
for $(k_1,\ldots,k_6)\in Q_N$ the positions $X^i_{k_1,...,k_6}\coloneqq  X_i+\sum_{j=1}^6 k_jN^{-\frac{5}{12}+\frac{\sigma}{2}}{e}_j$.
Let us apply Lemma \ref{Lemma distance same order} and the condition on the distance between the corresponding `real' and mean-field particle before the stopping time is `triggered'.
This gets us for the point in time $t_1\in [0,\tau^N(X)]$ and large enough $N$ that 
\begin{align*}
|\varphi^N_{0,t_1}([\Psi^N_{t_1,0}(X)]_i)-X_i|\le C|[\Psi^N_{t_1,0}(X)]_i-\varphi^N_{t_1,0}(X_i)|<CN^{-\frac{7}{24}-\sigma}.
\end{align*}
By construction, this distance is of smaller order with respect to $N$ than the diameter of the auxiliary `particle cloud' around $X_i$ and if $N$ is sufficiently large it is always possible to find a tuple $(k_1,\ldots,k_6)\in Q_N$ such that 
\begin{align}
|\varphi^N_{0,t_1}([\Psi^N_{t_1,0}(X)]_i)-X^i_{k_1,\ldots,k_6}|\le \frac{\sqrt{6}}{2} N^{-\frac{5}{12}-\frac{\sigma}{2}}. 
\end{align}
Lemma \ref{Lemma distance same order} implies in turn that
\begin{align}
|[\Psi^N_{t_1,0}(X)]_i-\varphi^N_{t_1,0}(X^i_{k_1,\ldots,k_6})|\le C  N^{-\frac{5}{12}-\frac{\sigma}{2}}. 
\end{align} 
For $C  N^{-\frac{5}{12}-\frac{\sigma}{2}}< \frac{1}{2}  N^{-\frac{5}{12}}$  with $N\in \N$  large enough, there exists a further point in time $t_2 \in (t_1, T]$ such that
$$\sup_{s\in [t_1,t_2]}|[\Psi^{1,N}_{s,0}(X)]_i-{\varphi^{1,N}_{s,0}(X^i_{k_1,\ldots,k_6})}|\le  N^{-\frac{5}{12}}$$
and the bound for the velocity deviation
$$\sup_{s\in [t_1,t_2]}|[\Psi^{2,N}_{s,0}(X)]_i-{^2\varphi^N_{s,0}(X^i_{k_1,\ldots,k_6})}|\le   N^{-\frac{1}{6}}$$
holds for $\sigma>0$ sufficiently small.
Like in the previous cases we have to show that $\sup_{t_1\le s \le t}|[\Psi^N_{s,0}(X)]_i-{\varphi^N_{s,0}}(X_i)|$ grows slow enough on this time interval. 
Since this variable is bounded by
\begin{align}
\sup_{t_1\le s \le t}|[\Psi^N_{s,0}(X)]_i-{\varphi^N_{s,0}}(X^i_{k_1,\ldots,k_6})|+\sup_{t_1\le s \le t}|{\varphi^N_{s,0}}(X^i_{k_1,\ldots,k_6})-{\varphi^N_{s,0}}(X_i)| \label{double.upp.bound}
\end{align}
and estimate the growth of these deviations instead.
 
The considerations for the first term is mostly analogous to the estimates of case 1 or case 2, see Section \ref{case1} and \ref{case2}. 
By construction, the spatial distance between the considered auxiliary particle and the `real' particle is bounded from above by $N^{-\frac{5}{12}+\sigma}$.\\
We use the abbreviation $\widetilde{X}_i\coloneqq  X^i_{k_1,\ldots,k_6}$ and assume for the rest of the proof that for an arbitrary point in time $t_1\in [0,\tau^N(X))$ and $X\in \mathbb{R}^{6N}$ the initial position of the auxiliary particle $X^i_{k_1,\ldots,k_6}$ and $t_2\in (t_1,\tau^N(X)]$ are chosen such that the previously introduced demands are fulfilled on $[t_1,t_2]$. \\
The second term has the same structure like \ref{est.eff.superbad.term} in case 2 an can be controlled by application of Lemma \ref{Lemma distance same order}. 
It follows for arbitrary $t\in [t_1,t_2]$ that
\begin{align}
& |{\varphi^N_{t,0}}(\widetilde{X}_i)-{\varphi^N_{t,0}}(X_i)| \notag \\
\le & e^{C(t-t_1)}|{\varphi^N_{t_1,0}}(\widetilde{X}_i)-{\varphi^N_{t_1,0}}(X_i)| \notag \\
\le & e^{C(t-t_1)}\big(\big|\varphi^N_{t_1,0}({X}_i)-[\Psi^N_{t_1,0}(X)]_i\big|+\big|[\Psi^N_{t_1,0}(X)]_i-\varphi^N_{t_1,0}(\widetilde{X}_i)\big|\big)\notag  \\
\le & e^{C(t-t_1)}\big(\big|\varphi^N_{t_1,0}({X}_i)-[\Psi^N_{t_1,0}(X)]_i\big|+N^{-\frac{5}{12}}\big), \label{est.eff.bad.term}
\end{align}
where we regarded the allowed upper bound for $\big|[\Psi^N_{t_1,0}(X)]_i-\varphi^N_{t_1,0}(\widetilde{X}_i)\big|$ according to the choice of $\widetilde{X}_i$. 
We will return to this term later.

For the second term we remark that  
\begin{align}
&  |[\Psi^{2,N}_{t,0}(X)]_i-{^2\varphi^N_{t,0}}(\widetilde{X}_i)|  \notag \\
\le  & |[\Psi^{2,N}_{t_1,0}(X)]_i-{\varphi^{2,N}_{t_1,0}}(\widetilde{X}_i)|  \notag \\
& + \big|\int_{t_1}^{t}\Big(\frac{1}{N}\sum_{j\neq i}f^{N}([\Psi^{1,N}_{s,0}(X)]_i-[\Psi^{1,N}_{s,0}(X)]_j)-f^{N}*\widetilde{k}_s(\varphi^{1,N}_{s,0}(\widetilde{X}_i))\Big)ds\big|. 
\end{align}
The second summand can be estimated by multiple applications of the triangle inequality and we essentially obtain the four terms of case 1 or 2, Section \ref{case1} and \ref{case2}.
{\allowdisplaybreaks
\begin{align}
& \big|\int_{t_1}^{t}\frac{1}{N}\sum_{j\neq i}f^{N}([\Psi^{1,N}_{s,0}(X)]_i-[\Psi^{1,N}_{s,0}(X)]_j)-f^{N}*\widetilde{k}_s(\varphi^{1,N}_{s,0}(\widetilde{X}_i))ds\big| \label{force.bad case.b}\\
\le &  \big|\int_{t_1}^{t}\frac{1}{N}\sum_{j\neq i}f^{N}([\Psi^{1,N}_{s,0}(X)]_i-[\Psi^{1,N}_{s,0}(X)]_j)\mathds 1_{(G^N(\widetilde{X}_i))^C}(X_j)ds\big| \label{eq:bad1} \\
&+  \big|\int_{t_1}^{t}\frac{1}{N}\sum_{j\neq i}\Big(
f^{N}([\Psi^{1,N}_{s,0}(X)]_i-[\Psi^{1,N}_{s,0}(X)]_j)\mathds 1_{G^N(\widetilde{X}_i)}(X_j) \notag  \\
&-f^{N}(\varphi^{1,N}_{s,0}(\widetilde{X}_i)-{\varphi^{1,N}_{s,0}}(X_j))\mathds 1_{G^N(\widetilde{X}_i)}(X_j)\Big) ds\big| \label{eq:bad2} \\
& +\big| \int_{t_1}^{t} \frac{1}{N}\sum_{j\neq i}f^{N}(\varphi^{1,N}_{s,0}(\widetilde{X}_i)-{\varphi^{1,N}_{s,0}}(X_j))\mathds 1_{G^N(\widetilde{X}_i)}(X_j)ds  \notag\\
&-\int_{t_1}^{t}\int_{\mathbb{R}^6}f^N(\varphi^{1,N}_{s,0}(\widetilde{X}_i)-{\varphi^{1,N}_{s,0}}(Y))\mathds 1_{G^N(\widetilde{X}_i)}(Y)k_0(Y)d^6Yds\big| \label{eq:bad3}  \\
& +\big|\int_{t_1}^{t}\Big(\int_{\mathbb{R}^6}f^N(\varphi_{s,0}^{1,N}(\widetilde{X}_i)-{\varphi^{1,N}_{s,0}}(Y))\mathds 1_{G^N(\widetilde{X}_i)}(Y)k_0(Y)d^6Y  \notag\\
&-f^{N}*\widetilde{k}^N_s(\varphi^{1,N}_{s,0}(\widetilde{X}_i))\Big)ds\big| \label{eq:bad4}
\end{align} }
\subsubsection{Estimate of Term \ref{eq:bad4}}
A suitable upper bound for Term \eqref{eq:bad4} can be derived analogously to the previous two cases and thus is given by $ CN^{-\frac{5}{12}}$.
\begin{align}
& \big|\int_{t_1}^t\int_{ \mathbb{R}^6}f^{N}(\varphi^{1,N}_{s,0}(\widetilde{X}_i)-{\varphi^{1,N}_{s,0}}(Y))k_0(Y)\mathds 1_{G^N(\widetilde{X}_i)}(Y)d^6Yds \notag \\
&-\int_{t_1}^t\int_{ \mathbb{R}^6}f^{N}({\varphi^{1,N}_{s,0}}(\widetilde{X}_i)-{\varphi^{1,N}_{s,0}}(Y))k_0(Y)d^6Yds\big| \notag \\
= & \big|\int_{t_1}^t\int_{ \mathbb{R}^6}f^{N}(\varphi^{1,N}_{s,0}(\widetilde{X}_i)-{\varphi^{1,N}_{s,0}}(Y))k_0(Y)(\mathds 1_{G^N(\widetilde{X}_i)}(Y)-1)d^6Yds\big| \notag \\
\le & T\|f^{N}\|_{\infty}\int_{ \mathbb{R}^6}\mathds 1_{(G^N(\widetilde{X}_i))^C}(Y)k_0(Y)d^6Y \notag \\
\le & TN^{2\beta}\mathbb{P}\big(Y \in \mathbb{R}^6:Y\notin G^N(\widetilde{X}_i) \big) \le TN^{2\beta} \mathbb{P}\big(Y\in \mathbb{R}^6:Y\in M^N_{6N^{-b_r },N^{-b_v}}(\widetilde{X}_i) \big) \notag \\
 \le &TN^{2\beta} C(N^{-b_r})^2(N^{-b_v})^4 \notag \\
\le & TN^{\frac{5}{12}-2\sigma}
\end{align}

\subsubsection{Estimate of Term \ref{eq:bad2} and Term \ref{eq:bad3}}

Let us focus on the two Terms \eqref{eq:bad2} and \eqref{eq:bad3}, as both can be estimated by Theorem \ref{Prop:LLN}. 
Since according to the choice of $t_1, t_2$ and $\widetilde{X}_i$ it holds that $\sup_{t_1\le s \le t_2}|[\Psi^{1,N}_{s,0}(X)]_i-{\varphi^{1,N}_{s,0}}(\widetilde{X}_i)|\le N^{-\frac{5}{12}+\sigma}$.
It follows by estimating with the map $g^N$ that
\begin{align}
& \big|\int_{t_1}^{t}\frac{1}{N}\sum_{j\neq i}\Big(f^{N}([\Psi^{1,N}_{s,0}(X)]_i-[\Psi^{1,N}_{s,0}(X)]_j)\mathds 1_{G^N(\widetilde{X}_i)}(X_j) \notag  \\
&-f^{N}(\varphi^{1,N}_{s,0}(\widetilde{X}_i)-{\varphi^{1,N}_{s,0}}(X_j))\mathds 1_{G^N(\widetilde{X}_i)}(X_j)\Big) ds\big| \label{good force,bad case} \\
\le  &  \big|\int_{t_1}^{t}\frac{1}{N}\sum_{j\in \mathcal{M}^N_b(X)\setminus \{i\}}\Big(f^{N}([\Psi^{1,N}_{s,0}(X)]_i-[\Psi^{1,N}_{s,0}(X)]_j)\mathds 1_{G^N(\widetilde{X}_i)}(X_j) \notag   \\
&-f^{N}(\varphi^{1,N}_{s,0}(\widetilde{X}_i)-{\varphi^{1,N}_{s,0}}(X_j))\mathds 1_{G^N(\widetilde{X}_i)}(X_j)\Big) ds\big| \notag  \\
& +\int_{t_1}^{t}\frac{1}{N}\sum_{j\in \mathcal{M}^N_g(X)\setminus \{i\}}\Big( g^{N}(\varphi^{1,N}_{s,0}(\widetilde{X}_i)-{\varphi^{1,N}_{s,0}}(X_j))\mathds 1_{G^N(\widetilde{X}_i)}(X_j) \notag  \\
& \cdot \big(|[\Psi^{1,N}_{s,0}(X)]_i-{\varphi^{1,N}_{s,0}}(\widetilde{X}_i)|+|[\Psi^{1,N}_{s,0}(X)]_j-{\varphi^{1,N}_{s,0}}({X}_j)| \big)\Big) ds. \label{est.b2}
\end{align}
All these terms have basically the same structure as in case 1 or 2, see Section \ref{case1} and \ref{case2}. 
We just have to amend the definitions of the sets $\mathcal{B}^{N,\sigma}_{i,j}$ from the previous to the current situation. 
We define for $(k_1,\ldots,k_6)\in Q_N$
\begin{align}
& X\in \mathcal{B}_{1,i,(k_1,\ldots,k_6)}^{N,\sigma}\subseteq \mathbb{R}^{6N} \notag\\
\Leftrightarrow & \exists t_1',t_2'\in [0,T]:\notag \\
& \Big|\int_{t_1'}^{t_2'}\Big(\frac{1}{N}\sum_{j\neq i}f^{N}(\varphi^{1,N}_{s,0}(X^i_{k_1,\ldots,k_6})-{\varphi^{1,N}_{s,0}}(X_j))\mathds 1_{G^N(X^i_{k_1,\ldots,k_6})}(X_j) \notag \\
&  -\int_{\mathbb{R}^6} f^{N}(\varphi^{1,N}_{s,0}(X^i_{k_1,\ldots,k_6})-{\varphi^{1,N}_{s,0}}(Y)) \notag \\
& \hspace{3,5cm} \cdot \mathds 1_{G^N(X^i_{k_1,\ldots,k_6})}(Y)k_0(Y)
d^6Y\Big)ds\Big| > N^{-\beta+\sigma}\ \vee \label{Def.B-bad.2} \\
& \Big|\int_{t_1'}^{t_2'}\Big(\frac{1}{N}\sum_{j\neq i}g^{N}(\varphi^{1,N}_{s,0}(X^i_{k_1,...,k_6})-{\varphi^{1,N}_{s,0}}(X_j))\mathds 1_{G^N(X^i_{k_1,...k_6})}(X_j) \notag \\
&  -\int_{\mathbb{R}^6}  g^N(\varphi^{1,N}_{s,0}(X^i_{k_1,...,k_6})-{\varphi^{1,N}_{s,0}}(Y)) \notag \\
& \hspace{3,5cm}\cdot \mathds 1_{G^N(X^i_{k_1,\ldots,k_6})}(Y)k_0(Y)
d^6Y\Big)ds\Big| > 1.\ \label{Def.B-bad.1} 
\end{align}
For the second statement \eqref{Def.B-bad.1} we proceed similarly. 
It follows analogous to the reasoning applied for the sets $\mathcal{B}^{N,\sigma}_{j,i}, \ j\in \{1,2\}$ that for any $\gamma>0$ there exists a $C_\gamma>0$ such that for all $N\in \N$
$$
\mathbb{P}\big(X\in \mathcal{B}^{N,\sigma}_{1,i,(k_1,\ldots,k_6)}\big)\le C_\gamma N^{-\gamma}
.$$ 
Like in case 1 or 2, see Section \ref{case1} and \ref{case2}, restricting the initial data to this set is already enough to handle Term \eqref{eq:bad3} and the second term of \eqref{est.b2}.
Thus, we continue with the first term of \eqref{est.b2} and finally deal with Term \eqref{eq:bad4}. 
Therefore we modify the definition of the set $\mathcal{B}_{3,i}^{N,\sigma}$ such that it applies for $X^i_{k_1,\ldots,k_6}$ for $(k_1,\ldots,k_6)\in Q_N$
\begin{align}
& X\in \mathcal{B}_{2,i,(k_1,\ldots,k_6)}^{N,\sigma} \subseteq \mathbb{R}^{6N} \notag  \\
\Leftrightarrow & \exists l\in I_{\sigma}: \Big(R_l\neq \infty\ \land \notag  \\ &\sum_{j \in \mathcal{M}^N_b(X)\setminus \{i\}}\mathds 1_{M^N_{(r_l,R_l),(v_l,V_l)}(X^i_{k_1,\ldots,k_6})}(X_j) \geq  N^{(2-2b_v-4b_r)\sigma}\\&\qquad\qquad\qquad\big\lceil N^{2-2b_v-4b_r} R_l^2\min\big(\max(V_l,R_l),1\big)^4\big\rceil\Big) \ \vee \notag  \\
& \sum_{j \in \mathcal{M}^N_b(X)\setminus \{i\}}1\geq N^{\frac{3}{4}(1+\sigma)}  
\end{align}
For $X\in  \big(\mathcal{B}_{2,i,(k_1,\ldots,k_6)}^{N,\sigma} \big)^C$ and $t\in [t_1,t_2]$ the term 
\begin{align}
& \big|\int_{t_1}^{t}\frac{1}{N}\sum_{j\in \mathcal{M}^N_b(X)\setminus \{i\}}\Big(f^{N}([\Psi^{1,N}_{s,0}(X)]_j-[\Psi^{1,N}_{s,0}(X)]_i)\mathds 1_{G^N(\widetilde{X}_i)}(X_j) \notag   \\
&-f^{N}(\varphi^{1,N}_{s,0}(X_j)-{\varphi^{1,N}_{s,0}}(\widetilde{X}_i))\mathds 1_{G^N(\widetilde{X}_i)}(X_j)\Big) ds\big| 
\end{align}
can be handled by the same estimates as in case 1, see Section \ref{case1}. 
For this purpose, one has to take into account the choice of the interval $[t_1,t_2]$ because for this time span it holds that
\begin{align*}
& \sup_{t\in [t_1,t_2]}|[\Psi^{1,N}_{s,0}(X)]_j-{\varphi^{1,N}_{s,0}}(\widetilde{X}_i)|\le N^{-\frac{5}{12}}\ \land \\
& \sup_{t\in [t_1,t_2]}|[\Psi^{2,N}_{s,0}(X)]_j-{\varphi^{2,N}_{s,0}}(\widetilde{X}_i)|\le N^{-\frac{1}{6}}.
\end{align*}
The estimates can be copied form the previous cases and hence also the previously derived upper bound $CN^{-\frac{5}{12}}$ can be applied.

This concludes the considerations for Term \eqref{eq:bad2} and Term \eqref{good force,bad case}. Due to Definition \eqref{Def.B-bad.1} and the subsequent reasoning it holds for configurations $X\in \big( \mathcal{B}^{N,\sigma}_{1,i,(k_1,\ldots,k_6)}\cup  \mathcal{B}^{N,\sigma}_{2,i,(k_1,\ldots,k_6)}\big)^C$ and $t\in [t_1,t_2]$ that
{\allowdisplaybreaks
\begin{align}
& \big|\int_{t_1}^{t}\frac{1}{N}\sum_{j\in \mathcal{M}^N_b(X)\setminus \{i\}}\Big(f^{N}([\Psi^{1,N}_{s,0}(X)]_i-[\Psi^{1,N}_{s,0}(X)]_j)\mathds 1_{G^N(\widetilde{X}_i)}(X_j) \notag   \\
&-f^{N}(\varphi^{1,N}_{s,0}(\widetilde{X}_i)-{\varphi^{1,N}_{s,0}}(X_j))\mathds 1_{G^N(\widetilde{X}_i)}(X_j)\Big) ds\big| \notag  \\
& +\int_{t_1}^{t}\frac{1}{N}\sum_{j\in \mathcal{M}^N_g(X)\setminus \{i\}}\Big( g^{N}(\varphi^{1,N}_{s,0}(\widetilde{X}_i)-{\varphi^{1,N}_{s,0}}(X_j))\mathds 1_{G^N(\widetilde{X}_i)}(X_j)\notag  \\
& \cdot \big(|[\Psi^{1,N}_{s,0}(X)]_i-{\varphi^{1,N}_{s,0}}(\widetilde{X}_i)|+|[\Psi^{1,N}_{s,0}(X)]_j-{\varphi^{1,N}_{s,0}}({X}_j)| \big)\Big) ds \notag \\
\le &  CN^{-\frac{5}{12}} \notag  \\
& +\Big(1+\int_{t_1}^{t}\int_{\mathbb{R}^6}  g^{N}(\varphi^{1,N}_{s,0}(\widetilde{X}_i)-{\varphi^{1,N}_{s,0}}(Y))k_0(Y)
d^6Yds\Big) \notag\\
& \cdot \sup_{s\in [t_1,t]}\Big(|[\Psi^{1,N}_{s,0}(X)]_i-{\varphi^{1,N}_{s,0}}(\widetilde{X}_i)|+\max_{j\in \mathcal{M}_g^N(X) }|[\Psi^{1,N}_{s,0}(X)]_j-{\varphi^{1,N}_{s,0}}({X}_j)| \Big) \notag \\
\le & CN^{-\frac{5}{12}+\sigma} +C\big(1+(t-t_1)\ln(N)\big)N^{-\frac{5}{12}} \label{est.b1}
\end{align}}
The upper bound for the first summand was already discussed in the previously part. 
For the second summand we regarded that $0\le g^N(q)\le C\min(N^{-\frac{5}{12}},\frac{1}{|q|^3})$. This leads to the factor $C\ln(N)$ after the integration. Further it holds for $s\in [t_1,t]$ due to $t\in [t_1,t_2]\subseteq [t_1,\tau^N(X)]$, that
$$|[\Psi^{1,N}_{s,0}(X)]_i-{\varphi^{1,N}_{s,0}}(\widetilde{X}_i)|+\max_{j\in \mathcal{M}_g^N(X) }|[\Psi^{1,N}_{s,0}(X)]_j-{\varphi^{1,N}_{s,0}}({X}_j)|\le 2N^{-\frac{5}{12}}$$ by the constraints on $t_2$ and the definition of the stopping time.

\subsubsection{Estimate of Term \ref{eq:bad1}}

We finally arrived at the last remaining Term \eqref{eq:bad1}
\begin{align*}
 \big|\int_{t_1}^{t}\frac{1}{N}\sum_{j\neq i}f^{N}([\Psi^{1,N}_{s,0}(X)]_i-[\Psi^{1,N}_{s,0}(X)]_j)\mathds 1_{(G^N(\widetilde{X}_i))^C}(X_j)ds\big|
\end{align*}
Remember that $i \in \mathcal{M}_{b}^{N}(X)$.
This term takes into account the impact of the `hard' collisions which were excluded for the `good' particles. 
But `superhard' collisions are excluded again like in case 1, see Section \ref{case1}, because the considered particle $X_i$ is `bad'. 
That simplifies the situation for us to
\begin{align*}
 &\big|\int_{t_1}^{t}\frac{1}{N}\sum_{j\neq i}f^{N}([\Psi^{1,N}_{s,0}(X)]_i-[\Psi^{1,N}_{s,0}(X)]_j)\mathds 1_{(G^N(\widetilde{X}_i))^C\setminus M_{sb}(\widetilde{X}_i) }(X_j)ds\big|=\\
& \big|\int_{t_1}^{t}\frac{1}{N}\sum_{j\neq i}f^{N}([\Psi^{1,N}_{s,0}(X)]_i-[\Psi^{1,N}_{s,0}(X)]_j)\mathds 1_{ M_{b}(\widetilde{X}_i) }(X_j)ds\big|.
\end{align*}

Fortunately, the estimates for this remaining term are straightforward and a simple application of Corollary \ref{corollary phi and  psi} but first we need to define a set of inappropriate initial data for $(k_1,...,k_6)\in Q_N$ and $i\in \{1,...,N\}$:
\begin{align}
\begin{split}
& X\in \mathcal{B}_{3,i,(k_1,...,k_6)}^{N,\sigma}\subseteq \mathbb{R}^{6N} \\
\Leftrightarrow &  \sum_{j \neq i}\mathds 1_{M^N_{6N^{-\frac{7}{24}-\sigma},N^{-\frac{1}{6}}}(X^i_{k_1,...,k_6})}(X_j)\geq  N^{\frac{3\sigma}{4}}
\end{split} \label{def.B_3.bad}
\end{align} 
It follows for configurations $X \notin  \mathcal{B}_{3,i,(k_1,...,k_6)}^{N,\sigma}$ that
\begin{align*}
& \big|\int_{t_1}^{t}\frac{1}{N}\sum_{j\neq i}f^{N}([\Psi^{1,N}_{s,0}(X)]_i-[\Psi^{1,N}_{s,0}(X)]_j)\mathds 1_{ M_{b}(\widetilde{X}_i) }(X_j)ds\big|\\
&\le\frac{N^{\frac{3\sigma}{4}}}{N} C\min\big(\frac{1}{N^{-\beta}\Delta v}, \frac{1}{\min\limits_{0\le s\le T}|[^1\Psi^{N,\beta}_{s,0}(X)]_i-[^1\Psi^{N,\beta}_{s,0}(X)]_j|\Delta v}\big)\\
&\le\frac{N^{\frac{3\sigma}{4}}}{N} C\min\big(\frac{1}{N^{-\beta}N^{-\frac{1.5}{9}}}, \frac{1}{N^{-\frac{7}{24}-\sigma}N^{-\frac{1.5}{9}}}\big)\le CN^{-\frac{7}{8}+\frac{3\sigma}{4}}.
\end{align*}

This last remaining term is bounded by
\begin{align} 
 CN^{-\frac{7}{8}+\frac{3\sigma}{4}} \label{upp.bound.sbad.term}
 \end{align}
Moreover, by taking into account that $\mathbb{P}\big(Y\in \mathbb{R}^6:Y\notin G^N(X_i) \big)
\le   CN^{-\frac{5}{4}-2 \sigma}$  
it follows that
\begin{align}
& \mathbb{P}\big( X\in \mathcal{B}^{N,\sigma}_{3,i,(k_1,...,k_6)}\big)\
\le  \binom{N}{\lceil N^{\frac{3\sigma}{4}}\rceil}\big(  CN^{-\frac{5}{4}-2 \sigma}\big)^{\lceil N^{\frac{3\sigma}{4}}\rceil}\le CN^{-\frac{1}{4}\lceil N^{\frac{3\sigma}{4}}\rceil}
 \end{align}
which obviously drops sufficiently fast.

\subsubsection{Conclusion case 3 (labelled particle $X_i$ is bad)}
Analogously to case 2, see Section \ref{case2} we have to merge all upper bounds. 
All applied estimates work for arbitrary $t_1,t_2$ fulfilling the initially in 
$$X \in \Big(\bigcup_{j\in \{1,2,3\}}\bigcup_{i=1}^N\bigcup_{(k_1,...,k_6)\in Q_N}\mathcal{B}^{N,\sigma}_{j,i,(k_1,...,k_6)}\Big)^C.$$
We already discussed that for any $\gamma>0$ there exists a constant $C_\gamma>0$ such that $
\mathbb{P}\big(X\in \mathcal{B}^{N,\sigma}_{1,i,(k_1,...,k_6)}\big)
\le C_{\gamma}N^{-\gamma}$ and according to the proof of the first case it holds that $\mathbb{P}\big(X\in \mathcal{B}^{N,\sigma}_{2,i,(k_1,...,k_6)}\big) \le (CN^{-\frac{7\sigma}{3}} )^{\frac{N^{\frac{\sigma}{3}}}{2}}$, see \eqref{prob.b.3b}. 
Since $|Q_N|\le (3\lceil N^\frac{1}{8}\rceil )^6 \le CN $, see \eqref{Q_N}, it is possible to choose the constant $C_{\gamma}>0$ such that 
\begin{align*}
& \mathbb{P}\Big(\bigcup_{j\in \{1,2,3\}}\bigcup_{i=1}^N\bigcup_{(k_1,...,k_6)\in Q_N}\mathcal{B}^{N,\sigma}_{j,i,(k_1,...,k_6)} \Big) \le C_{\gamma}N^{-\gamma}
\end{align*}
holds for a given $\gamma>0$ and all $N\in \N$. 
For arbitrary `triples' $t_1,t_2$ and $\widetilde{X}_i$ all derived upper bounds are fulfilled for configurations
$$X\in \Big(\bigcup_{j\in \{1,2,3\}}\bigcup_{i=1}^N\bigcup_{(k_1,...,k_6)\in Q_N}\mathcal{B}^{N,\sigma}_{j,i,(k_1,...,k_6)} \Big)^C,$$ provided they are chosen according to the introduced constraints. 
We obtain that \eqref{eq:bad2} is bounded by $C(1+(t-t_1)\ln(N))N^{-\frac{5}{12}+\sigma}$, the bound for Term \eqref{eq:bad3} is $N^{-\frac{5}{12}+\sigma}$ by definition, the bound for \eqref{eq:bad4} is $CN^{-\frac{5}{12}}$, as derived in case 1, see Section \ref{case1} and $CN^{-\frac{7}{8}+\frac{3\sigma}{4}}$ constitutes an upper bound for \eqref{eq:bad1}. 
Hence the force term can be estimated by
$$CN^{-\frac{5}{12}+\sigma}.$$
With $|[\Psi^{N}_{t_1,0}(X)]_i-{\varphi^{N}_{t_1,0}}(\widetilde{X}_i)|\leq \frac{N^{-\frac{5}{12}+\sigma}}{2}$
for $t\in [t_1,t_2]$ and $\sigma>0$ for small enough. We obtain that for any $i\in \{1,\ldots ,N\}$ and for all times $t\in [t_1,t_2]$ the following holds
\begin{align}
&|[\Psi^{2,N}_{t,0}(X)]_i-{\varphi^{2,N}_{t,0}}(\widetilde{X}_i)|\notag \\
\le & |[\Psi^{2,N}_{t_1,0}(X)]_i-{\varphi^{2,N}_{t_1,0}}(\widetilde{X}_i)|\notag \\
&+\left|\int_{t_1}^t\Big(\frac{1}{N}\sum_{j\neq i}f^{N}([\Psi^{1,N}_{s,0}(X)]_i-[\Psi^{1,N}_{s,0}(X)]_j) -f^{N}*\widetilde{k}^N_s({\varphi^{1,N}_{s,0}}(\widetilde{X}_i))\Big)ds\right| \notag  \\
\le & \left|[\Psi^{2,N}_{t_1,0}(X)]_i-{\varphi^{2,N}_{t_1,0}}(\widetilde{X}_i)\right| +CN^{-\frac{5}{12}} \\
\leq & \frac{N^{-\frac{5}{12}+\sigma}}{2} +CN^{-\frac{5}{12}} 
\label{best.t_2,v}
 \end{align} 
Now it is straightforward to find an upper bound for the spatial deviation for $t\in [t_1,t_2]$:
\begin{align}
& |[\Psi^{1,N}_{t,0}(X)]_i-{\varphi^{1,N}_{t,0}}(\widetilde{X}_i)|\notag \\
\le & |[\Psi^{1,N}_{t,0}(X)]_i-{\varphi^{1,N}_{t,0}}(\widetilde{X}_i)|+\int_{t_1}^t|[^2\Psi^{N}_{s,0}(X)]_i-{^2\varphi^{N}_{s,0}}(\widetilde{X}_i)|ds \notag \\
\le & \frac{N^{-\frac{5}{12}+\sigma}}{2}+  C\big(N^{-\frac{5}{12}-\frac{\sigma}{2}}(t-t_1)\big)\label{best.t_2,x}
\end{align}
It is always possible to find an auxiliary particle of the introduced `cloud' which is closer in phase space to the observed `real' particle due to previous considerations.
After the time $t_2$ it may be necessary for further estimates to choose a new auxiliary particle of the `cloud' which is closer to the observed `real' particle. 
For large enough $N\in \N$ and small enough $\sigma,\delta>0$ the subsequent implication holds
\begin{align*}
&t-t_1\le N^{-\delta}
\Rightarrow \begin{cases}\frac{N^{-\frac{5}{12}}}{2}+C\big(N^{-\frac{5}{12}-\frac{\sigma}{2}}(t-t_1) \big)\le N^{-\frac{5}{12}} \\
\frac{N^{-\frac{5}{12}}}{2}+C\big(N^{-\frac{5}{12}-\frac{1\sigma}{2}}(t-t_1) \big)\le CN^{-\frac{5}{12}} \end{cases}
\end{align*}
and thus according to relations \eqref{best.t_2,v} and \eqref{best.t_2,x}, $t_2\coloneqq  t_1+N^{-\delta}$ is a possible option such that the constraints on $t_2$ are fulfilled. 
Hence, relation \eqref{best.t_2,v} and \eqref{best.t_2,x} yield for this choice of $t_2$ and small enough $\sigma>0$ that
\begin{align*}
\sup_{t_1\le s \le t_2}|[\Psi^{N}_{t,0}(X)]_i-{\varphi^{N}_{t,0}}(\widetilde{X}_i)|\le CN^{-\frac{5}{12}-\frac{3\sigma}{2}}(t_2-t_1)=CN^{-\frac{5}{12}-\delta-\frac{3\sigma}{2}}.
\end{align*}
For Term \eqref{double.upp.bound} and by additionally considering estimate \eqref{est.eff.bad.term}, we obtain for $t\in [t_1,t_1+N^{-\delta}]$, the considered configurations, large enough $N$ and $\sigma>0$ small enough that
\begin{align}
& \sup_{t_1\le s \le t}|[\Psi^N_{s,0}(X)]_i-{\varphi^N_{s,0}}(X_i)|\notag \\
\le &\sup_{t_1\le s \le t}|[\Psi^N_{s,0}(X)]_i-{\varphi^N_{s,0}}(X^i_{k_1,...,k_6})|+\sup_{t_1\le s \le t}|{\varphi^N_{s,0}}(X^i_{k_1,...,k_6})-{\varphi^N_{s,0}}(X_i)|  \notag \\
 \le & CN^{-\frac{5}{12}-\delta-\frac{3\sigma}{2}}+ e^{C(t-t_1)}\big|[\Psi^N_{t_1,0}(X)]_i-\varphi^N_{t_1,0}({X}_i)\big|. 
\end{align}
Since the point in time $t_1\in [0,\tau^N(X))$ before the stopping time was triggered was chosen arbitrarily, we can define a sequence of time steps 
$$t_n\coloneqq  nN^{-\delta}\text{ for }n\in \{0,...,\lceil \tau^N(X)N^{\delta}\rceil-1\} \text{ and }t_{\lceil \tau^N(X)N^{\delta}\rceil}\coloneqq  \tau^N(X).$$
Thus we receive a corresponding sequence of inequalities 
 \begin{align*}
& \sup_{t_n\le s \le t_{n+1}}|[\Psi^N_{s,0}(X)]_i-{\varphi^N_{s,0}}(X_i)|\notag \\
\le & CN^{-\frac{5}{12}-\delta-\frac{3\sigma}{2}}+ e^{CN^{-\delta}}\big|[\Psi^N_{t_n,0}(X)]_i-\varphi^N_{t_n,0}({X}_i)\big|.
\end{align*}
Inductively we derive that 
\begin{align*}
\sup_{0\le s \le t_n}|[\Psi^N_{s,0}(X)]_i-{\varphi^N_{s,0}}(X_i)|
\le CN^{-\frac{5}{12}-\delta-\frac{3\sigma}{2}} \sum_{k=0}^{n-1} e^{2CN^{-\delta}k} .
\end{align*}
An upper bound for the possible values of $n$ is given by $\lceil T N^{\delta}\rceil$ and this yields that
\begin{align*}
\sup_{0\le s \le \tau^N(X)}|[\Psi^N_{s,0}(X)]_i-{\varphi^N_{s,0}}(X_i)|\le CN^{-\frac{5}{12}-\delta-\frac{3}{2}\sigma}.
\end{align*}
For sufficient large $N$ this value stays smaller than the allowed distance between the mean-field and the real trajectory $N^{-\frac{7}{24}-\sigma}$, which shows that also the `bad' particles do typically not `trigger' the stopping time for the relevant $N$ and $\sigma$. \\
This finally completes the main part of the proof.

We conclude the proof of Theorem \ref{maintheorem} by showing that for $N>1$
\begin{align}
\sup\limits_{x\in \mathbb{R}^6}\sup_{0\le s \le T}|\varphi_{s,0}^{1,N}(x)-{\varphi^{1,\infty}_{s,0}}(x)|\le e^{C\sqrt{\ln(N)}}N^{-2\beta} 
\end{align}
which is smaller than necessary.

\section{Molecular of chaos}
As mentioned in Section \ref{Heuristics} and analogously to Chapter \ref{Chapter: VDB}, we finally prove Theorem \ref{maintheorem} by showing that
\begin{align}
\Delta_N(t)\coloneqq \sup_{x\in \mathbb{R}^6}\sup_{0\le s \le T}|\varphi_{s,0}^{1,N}(x)-{\varphi^{1,\infty}_{s,0}}(x)|\le e^{C\sqrt{\ln(N)}}N^{-2\beta} 
\end{align} 
holds for $N$ large enough.
Note, that this bound is much smaller than necessary.
Therefore let $t\in [0,T]$ be such that $\Delta_N(t)\le N^{-\frac{5}{12}+\sigma}$. It holds for $x\in \mathbb{R}^6$ and $N\in \N\setminus \{1\}$ that
{\allowdisplaybreaks 
\begin{align*}
 & |\varphi_{t,0}^{2,N}(x)-{\varphi^{2,\infty}_{t,0}}(x)|\\
\le &\big|\int_0^t\int_{\mathbb{R}^6}\big(f^N(\varphi^{1,N}_{s,0}(x)-{\varphi^{1,N}_{s,0}}(y))-f^{\infty}({\varphi^{1,\infty}_{s,0}}(x)-{\varphi^{1,\infty}_{s,0}}(y))\big)k_0(y)d^6yds\big|\\
\le &\big|\int_0^t\int_{\mathbb{R}^6}\big(f^N(\varphi^{1,N}_{s,0}(x)-{\varphi^{1,N}_{s,0}}(y))-f^N({\varphi^{1,\infty}_{s,0}}(x)-{\varphi^{1,\infty}_{s,0}}(y))\big)k_0(y)d^6yds\big|\\
 & +\big|\int_0^t\int_{\mathbb{R}^6}\big(f^N(\varphi^{1,\infty}_{s,0}(x)-{\varphi^{1,\infty}_{s,0}}(y))-f^{\infty}({\varphi^{1,\infty}_{s,0}}(x)-{\varphi^{1,\infty}_{s,0}}(y))\big)k_0(y)d^6yds\big|\\
\le & \int_{0}^t2\Delta_N(s)\int_{\mathbb{R}^6} g^N({\varphi^{1,N}_{s,0}}(x)-{\varphi^{1,N}_{s,0}}(y))k_0(y)d^6yds\\
 & +\big|\int_0^t\int_{\mathbb{R}^6}\big(f^N({\varphi^{1,\infty}_{s,0}}(x)-{^1y})-f^{\infty}({\varphi^{1,\infty}_{s,0}}(x)-{^1y})\big)k^{\infty}_s(y)d^6yds\big|\\
\le & C\ln(N)\int_{0}^t\Delta(s) ds+
\big|\int_0^t\int_{\mathbb{R}^6} \frac{^1y}{|^1y|^{3}}\mathds 1_{(0,N^{-\beta}]}(|^1y|) k^{\infty}_s(y+{\varphi}^{\infty}_{s,0}(x))d^6yds\big|\\
&+ \big|\int_0^t\int_{\mathbb{R}^6} {^1y} N^{3\beta}\mathds 1_{(0,N^{-\beta}]}(|^1y|) k^{\infty}_s(y+{\varphi}^{\infty}_{s,0}(x))d^6yds\big|.
\end{align*}}
In the second step we applied the assumption $\Delta_N(t)\le N^{-\beta}$. Remember $g^N(q)$ is bounded by $C\min\big( N^{3\beta},\frac{1}{|q|^{3}}\big)$ for all $q\in \mathbb{R}^3$.
The last two terms are quite similar.
Let us consider the first term and let us use the notation $x=(^1x,{^2x})\in \mathbb{R}^6$. 
Due to the slowly varying mass or charge density, cancellations arise such that this term keeps small enough, i.e.
{\allowdisplaybreaks
\begin{align}
&\big|\int_0^t\int_{\mathbb{R}^6} \frac{^1y}{|^1y|^{3}}\mathds 1_{(0,N^{-\beta}]}(|^1y|) k^{\infty}_s(y+{\varphi}^{\infty}_{s,0}(x))d^6yds\big|\notag \\
= &\big|\int_0^t\int_{\mathbb{R}^6} \frac{^1y}{|^1y|^{3}}\mathds 1_{(0,N^{-\beta}]}(|^1y|) \Big(\big(k^{\infty}_s(y+{\varphi}^{\infty}_{s,0}(x))-k^{\infty}_s((0,{^2y})+{\varphi}^{\infty}_{s,0}(x))\big) \notag \\
&+k^{\infty}_s((0,{^2y})+{\varphi}^{\infty}_{s,0}(x))\Big)d^6yds\big| \notag \\
\le  & \int_0^t\int_{\mathbb{R}^6} \frac{1}{|^1y|^2}\mathds 1_{(0,N^{-\beta}]}(|^1y|)\Big(\big|k^{\infty}_s(y+{\varphi}^{\infty}_{s,0}(x))-k^{\infty}_s((0,{^2y})+{\varphi}^{\infty}_{s,0}(x))\big|\Big)d^6yds. \label{concl.term}
\end{align}}
Note that due to symmetry
\begin{align*}
& \big|\int_0^t\int_{\mathbb{R}^6} \frac{^1y}{|^1y|^{3}}\mathds 1_{(0,N^{-\beta}]}(|^1y|)k^{\infty}_s((0,{^2y})+{\varphi_{s,0}^{\infty}}(x))d^6yds\big|\\
=&\big|\int_0^t\widetilde{k}^{\infty}_s({\varphi_{s,0}^{1,\infty}}(x))\int_{\mathbb{R}^3} \frac{q}{|q|^{3}}\mathds 1_{(0,N^{-\beta}]}(|q|)d^3qds\big|= 0.
\end{align*}
Remember that the initial density fulfills $|\nabla k_0(x)|\le \frac{C}{(1+|x|)^{3+\delta}}$. It follows, that for arbitrary $z\in \mathbb{R}^6$ and $s\in [0,T]$ 
\begin{align}
&\big|k^{\infty}_s(y+z)-k^{\infty}_s((0,{^2y})+z)\big|\mathds 1_{(0,N^{-\beta}]}(|^1y|) \notag \\
=& \big|k_0(\varphi^{\infty}_{0,s}(y+z))-k_0(\varphi^{\infty}_{0,s}((0,{^2y})+z))\big| \mathds 1_{(0,N^{-\beta}]}(|^1y|) \notag  \\
\le & \sup_{z'\in \overline{\varphi^{\infty}_{0,s}(y+z)\varphi^{\infty}_{0,s}((0,{^2y})+z)}}|\nabla k_0(z')| \notag  \\
&\cdot \mathds 1_{(0,N^{-\beta}]}(|^1y|)\Big( \big|\varphi^{\infty}_{0,s}(y+z)-\varphi^{\infty}_{0,s}((0,{^2y})+z)\big|\Big) \notag \\
 \le &   \sup_{z'\in \overline{\varphi^{\infty}_{0,s}(y+z)\varphi^{\infty}_{0,s}((0,{^2y})+z)}}\frac{C}{(1+|z'|)^{3+\delta}} \notag \\
&  \cdot \mathds 1_{(0,N^{-\beta}]}(|^1y|)\Big(C\big|(y+z)-\big((0,{^2y})+z \big) \big|\Big) \notag \\
\le & \sup_{y'\in \mathbb{R}^3:|y'|\le N^{-\beta}}\sup_{z'\in \overline{\varphi^{\infty}_{0,s}((y',{^2y})+z)\varphi^{\infty}_{0,s}((0,{^2y})+z)}}\frac{CN^{-\beta}}{(1+|z'|)^{3+\delta}} \label{t.decay.grad.dens.}
\end{align}
where $\overline{xy}\coloneqq  \{(1-\eta)x+\eta y\in \mathbb{R}^6: \eta\in [0,1]\}$ for $x,y\in \mathbb{R}^6$ and
Lemma \ref{Lemma distance same order} was applied in the second last step.
Note that by choosing a sufficiently large value for $|^2y|$, as it appears in this expression, then all configurations within the set, over which the supremum is taken, exhibit velocities of this magnitude due to the bounded mean-field force. 
Consequently, Term \eqref{t.decay.grad.dens.} diminishes as $|^2y|$ increases, following a decay pattern of $\frac{CN^{-\beta}}{(1+|^2y|)^{3+\delta}}$.
Now we can estimate Term \eqref{concl.term}. 
For arbitrary $z\in \mathbb{R}^6$, in particular $z\coloneqq  {\varphi}^{\infty}_{s,0}(x)$, we get that
 \begin{align*}
&\big|\int_{\mathbb{R}^6} \frac{^1y}{|^1y|^{3}}\mathds 1_{(0,N^{-\beta}]}(|^1y|) k^{\infty}_s(y+z)d^6y\big|\\
\le  &\int_{\mathbb{R}^3} \frac{1}{|^1y|^2}\mathds 1_{(0,N^{-\beta}]}(|^1y|)d^3(^1y) \\
& \cdot \int_{\mathbb{R}^3} \sup_{y'\in \mathbb{R}^3:|y'|\le N^{-\beta}}\sup_{z'\in \overline{\varphi^{\infty}_{0,s}((y',{^2y})+z)\varphi^{\infty}_{0,s}((0,{^2y})+z)}}\frac{CN^{-\beta}}{(1+|z'|)^{3+\delta}}d^3(^2y)\\
\le & CN^{-2\beta}.
\end{align*}
So for any $x\in \mathbb{R}^6$ it follows that
\begin{align}
&\sup_{0\le s \le t}|\varphi_{s,0}^{1,N}(x)-{\varphi^{1,\infty}_{s,0}}(x)|\notag\\
\le & \int_0^t|\varphi_{s,0}^{2,N}(x)-{\varphi^{2,\infty}_{s,0}}(x)| ds\notag \\
\le & C\ln(N)\int_{0}^t\int_0^s\Delta_N(r)drds+CN^{-2\beta}t. \label{last.Gron.}
\end{align}
By means of this inequality, one derives by Gronwall Lemma \ref{Gronwall Lemma} that 
\[
\Delta_N(t) = \sup_{x\in \mathbb{R}^6}\sup_{0\le s \le t}|{\varphi_{s,0}^{1,N}}(x)-{\varphi^{1,\infty}_{s,0}}(x)| \le CN^{-2\beta}te^{\sqrt{C\ln(N)}t}
\]
which shows that the initial assumption $\Delta_N(t)\le N^{-\beta}=N^{-\frac{5}{12}+\sigma}$ stays true for arbitrarily large times \( t \) provided that \( N \in \mathbb{N} \) is large enough. 

Applying the stated bound to the relation
\begin{align*}
    \left|\varphi_{t,0}^{2,N}(x)-{\varphi^{2,\infty}_{t,0}}(x)\right| 
    \le C\ln(N)\int_{0}^t\Delta_N(s)ds+CN^{-2\beta},
\end{align*}
yields the asserted result
\begin{align}
    \sup\limits_{x\in \mathbb{R}^6}\sup_{0\le s \le T}|\varphi_{s,0}^N(x)-{\varphi^{\infty}_{s,0}}(x)|\le e^{C\sqrt{\ln(N)}}N^{-2\beta} \label{dist.mean-field.flow}
\end{align}
for sufficiently large $N$. 
This completes the proof of Theorem \ref{maintheorem}.

\newpage 


\begin{thebibliography}{10}

\bibitem{BoersPickl}
N.~Boers and P.~Pickl.
\newblock On mean-field limits for dynamical systems.
\newblock Journal of Statistical Physics, 164(1):1--16, 2016.
\bibitem{BraunHepp}
W.~Braun and K.~Hepp.
\newblock The Vlasov dynamics and its fluctuations in the 1/{N} limit of
  interacting classical particles.
\newblock  Communications in Mathematical Physics, 56(2):101--113, 1977.
\bibitem{Dobrushin}
R.~L.~ Dobrushin.
\newblock Vlasov equations.
\newblock Functional Analysis and Its Applications, 13(2):115--123, 1979.
\bibitem{FeistlPickl}
M.~Feistl, P.~Pickl: Microscopic derivation of Vlasov-Dirac-Benney equation with short-range pair potentials. Nonlinearity. 37(5), 2024.
\bibitem{Golse lecture notes}
F.~Golse. 
\newblock On the dynamics of large particle systems in the mean-field limit.
\newblock  Macroscopic and Large Scale Phenomena: Coarse Graining, mean-field Limits and Ergodicity, Springer, 2016.
  
\bibitem{grass}
P.~Gra{\ss}.
\newblock  Microscopic derivation of Vlasov equations with singular
  potentials.
\newblock PhD thesis, lmu, 2019.


\bibitem{HanKwan}
D.~Han-Kwan.
\newblock Quasineutral limit of the Vlasov-Poisson system
with massless electrons.
\newblock Comm. Partial Differential Equations 36 (8):1385--1425, 2010

\bibitem{HanKwanRousset}
D.~ Han-Kwan,F.~ Rousset.
\newblock Quasineutral limit for Vlasov-Poisson with Penrose stable data.
\newblock Societe Mathematique de France, Paris, 4e serie, t. 49: 1445--1495, 2016.

\bibitem{Horst}
E.~Horst.
\newblock On the asymptotic growth of the solutions of the {V}lasov-{P}oisson
  system.
\newblock Mathematical Methods in the Applied Sciences, 16:75--85, 1993.

\bibitem{Hauray2013}
M.~Hauray and P.~ E.~ Jabin.
\newblock Particles approximations of {V}lasov equations with singular forces : Propagation of chaos.
\newblock To appear in Ann. Sci. Ec. Norm. Super., s{\'e}rie 48:891--940, 2015.

\bibitem{Hauray2007}
M.~ Hauray and P.~ E.~ Jabin.
\newblock N-particles approximation of the Vlasov equations with singular potential.
\newblock Arch. Ration. Mech. Anal., 183(3):489-â, 2007.


  
 \bibitem{Kiessling}
M.~K.-H. Kiessling.
\newblock The microscopic foundations of {V}lasov theory for jellium-like
  {N}ewtonian {N}-body systems.
\newblock Journal of Statistical Physics, 155(6):1299--1328, 2014.
\bibitem{Dustin}
D.~ Lazarovici and P.~Pickl. 
\newblock A mean field limit for the Vlasov-Poisson system.
\newblock {\em Archive for Rational Mechanics and Analysis}, 225(3): 1201--1231, 2017.

\bibitem{Lions}
P. L.~Lions, $\&$ B.Perthame. Propagation of moments and regularity for the 3-
dimensional Vlasov-Poisson system. Inventiones mathematicae, 105(1): 415--430, 1991.



\bibitem{NeunzertWick}
H.~Neunzert and J.~Wick.
\newblock Die {A}pproximation der {L}\"{o}sung von {I}ntegro-{D}ifferentialgleichungen durch endliche {P}unktmengen.
\newblock In  Numerische Behandlung nichtlinearer {I}ntegrodifferential - und Differentialgleichungen, Lecture Notes in Mathematics, Springer, Berlin, Heidelberg, 395: 275--290, 1974.
 
\bibitem{Pfaffelmoser}
K. Pfaffelmoser. Global classical solutions of the Vlasov-Poisson system in three dimensions for general initial data. Journal of Differential Equations, 95(2): 281-303,1992.
\bibitem{Schaeffer}
J.~Schaeffer. Global existence of smooth solutions to the Vlasov Poisson system
in three dimensions. Communications in partial differential equations, 16(8-9): 1313--1335, 1991.

\bibitem{Serfaty}
S. Serfaty. Mean field limit for Coulomb-type flows. Duke Math Journal, 169(15): 2887--2935, 2020.


\bibitem{SpohnBook}
H.~Spohn.
\newblock  Large scale dynamics of interacting particles. Springer, Berlin, Heidelberg, 1991.





\end{thebibliography}
\end{document}